\DeclarePairedDelimiter\braces{\{}{\}}
\DeclarePairedDelimiter\minor{\lfloor}{\rfloor}
\DeclarePairedDelimiter\denote{\llbracket}{\rrbracket}
\newcommand{\ck}{\mathit{c}^K}
\newcommand{\tss}{\mathit{T}\Sigma^{*}}
\newcommand{\Aa}{\mathcal{A}}
\newcommand{\Bb}{\mathcal{B}}
\newcommand{\xra}{\xrightarrow}
\newcommand{\Ll}{\mathcal{L}}
\newcommand{\ma}{{\sffamily\bfseries\upshape (a)\ }}
\newcommand{\mb}{{\sffamily\bfseries\upshape (b)\ }}
\newcommand{\lm}{\lambda}
\newcommand{\mone}{{\sffamily\bfseries\upshape \color{gray}{1.} \ }}
\title{A Myhill-Nerode style Characterization for Timed Automata With Integer Resets}
\author{Kyveli Doveri}{University of Warsaw, Poland}{k.doveri@mimuw.edu.pl}{https://orcid.org/0000-0001-9403-2860}{Supported by the ERC grant INFSYS, agreement no. 950398. Results partly obtained when previously affiliated with the IMDEA Software Institute.}
\author{Pierre Ganty}{IMDEA Software Institute, Pozuelo de Alarc\'{o}n, Spain}{pierre.ganty@imdea.org}{https://orcid.org/0000-0002-3625-6003}{This publication is part of the grant PID2022-138072OB-I00, funded by \linebreak MCIN/AEI/10.13039/501100011033/FEDER, UE.}
\author{B. Srivathsan}{Chennai Mathematical Institute, India\\ CNRS IRL 2000, ReLaX, Chennai, India}{sri@cmi.ac.in}{https://orcid.org/0000-0003-2666-0691}{}
\authorrunning{K. Doveri, P. Ganty, B. Srivathsan} %
\keywords{Timed languages, Timed automata, Canonical representation, Myhill-Nerode equivalence, Integer reset}
\newcommand{\mn}{\sim_{\scriptscriptstyle L}}
\newcommand{\eqv}{\mathord{\approx^{L, K}}}
\begin{document}
\maketitle
\begin{abstract}
  The well-known Nerode equivalence for finite words plays a
  fundamental role in our understanding of the class of regular
  languages.  The equivalence leads to the Myhill-Nerode theorem and a
  canonical automaton, which in turn, is the basis of several automata
  learning algorithms.  A Nerode-like equivalence has been studied for
  various classes of timed languages.

  In this work, we focus on timed automata with integer resets.  This
  class is known to have good automata-theoretic properties and is
  also useful for practical modeling.  Our main contribution is a
  Nerode-style equivalence for this class that depends on a constant
  \(K\).  We show that the equivalence leads to a Myhill-Nerode
  theorem and a canonical one-clock integer-reset timed automaton with
  maximum constant \(K\).  Based on the canonical form, we develop an
  Angluin-style active learning algorithm whose query complexity is
  polynomial in the size of the canonical form.
\end{abstract}
\newpage
\setcounter{page}{1}

\section{Introduction}
A cornerstone in our understanding of regular languages is the
Myhill-Nerode theorem.  This theorem characterizes regular languages
in terms of the Nerode equivalence \({\mn}\): for a word \(w\) we
write \(w^{-1} L = \{ z \mid wz \in L \}\) for the \emph{residual
  language} of \(w\) w.r.t. \(L\); and for two words \(u,v\) we say
\(u \mn v\) if \(u^{-1} L = v^{-1} L\).

\begin{theorem}[Myhill-Nerode theorem]\label{thm:myhill-nerode}
  Let \(L\) be a language of finite words.
  \begin{itemize}

  \item \(L\) is regular if{}f the Nerode equivalence has a finite
    index.

  \item The Nerode equivalence is coarser than any other monotonic
    \(L\)-preserving equivalence.
  \end{itemize}
\end{theorem}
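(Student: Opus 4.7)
The plan is to prove the two items of Theorem~\ref{thm:myhill-nerode} by the standard route, since this is the classical Myhill--Nerode result restated as a warm-up for the timed generalization that follows. I will use the implicit definitions suggested by the terminology: an equivalence \(\equiv\) on \(\Sigma^*\) is \emph{monotonic} if \(u \equiv v\) implies \(ua \equiv va\) for every letter \(a\), and it is \emph{\(L\)-preserving} if every class is either contained in \(L\) or disjoint from \(L\). The first observation to record is that \(\mn\) itself is monotonic and \(L\)-preserving: monotonicity follows from \((ua)^{-1}L = a^{-1}(u^{-1}L)\), and \(L\)-preservation from taking \(z = \varepsilon\) in the definition of the residual.

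For the first item, I would prove both directions through a quotient construction. If \(\mn\) has finite index, form the DFA whose states are the classes \([u]\), with initial state \([\varepsilon]\), transition \([u] \xrightarrow{a} [ua]\) (well-defined by monotonicity of \(\mn\)), and accepting set \(\{[u] : u \in L\}\) (well-defined by \(L\)-preservation). A routine induction on word length shows that it accepts exactly \(L\), so \(L\) is regular. Conversely, given a DFA recognising \(L\), two words that reach the same state must have the same residual language, so the reaches-the-same-state relation refines \(\mn\); since the DFA has finitely many states, \(\mn\) has finite index.

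For the second item, I would argue directly. Let \(\equiv\) be any monotonic \(L\)-preserving equivalence and assume \(u \equiv v\). By iterating monotonicity letter by letter, \(uz \equiv vz\) for every \(z \in \Sigma^*\), and by \(L\)-preservation this yields \(uz \in L \iff vz \in L\). Hence \(u^{-1}L = v^{-1}L\), i.e.\ \(u \mn v\). Thus every class of \(\equiv\) is contained in a class of \(\mn\), which is exactly what coarseness asserts.

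There is no genuine obstacle here; the only care needed is to state the definitions of \emph{monotonic} and \emph{\(L\)-preserving} explicitly before invoking them, because the paper's target audience will want these precise formulations to reappear in the timed version later. If the authors prefer a terser presentation, both items could be packaged as a single ``\(\mn\) is the coarsest monotonic \(L\)-preserving equivalence, and it has finite index iff \(L\) is regular,'' which then makes it transparent that the timed Nerode-style equivalence developed in the sequel is designed to play exactly this universal role in its own setting.
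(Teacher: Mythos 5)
Your proof is correct and is the standard textbook argument: the quotient-DFA construction plus the refinement of $\mn$ by the reaches-the-same-state relation for the first item, and iterated monotonicity followed by $L$-preservation for the second. The paper states this classical theorem as background in the introduction without giving any proof, so there is nothing to compare against; your argument fills in exactly the expected details.
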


An equivalence is said to be \emph{monotonic} if \(u \approx v\)
implies \(u a \approx v a\) for all letters \(a\) and is
\(L\)-preserving if each equivalence class is either contained in
\(L\) or disjoint from \(L\).
\footnote{The exact term would be ``right monotonic'' because it only
  considers concatenation to the right of the word.  Throughout the
  paper we simply write monotonic to keep it short but we mean right
  monotonic.  } An equivalence over words being monotonic makes it
possible to construct an automaton with states being the equivalence
classes.  The Nerode equivalence being the coarsest makes the
associated automaton the minimal (and a canonical) deterministic
automaton for the regular language.  Our goal in this work is to
obtain a similar characterization for certain subclasses of timed
languages.

Timed languages and timed automata were introduced by Alur and
Dill~\cite{AlurDTimedAutomata94} as a model for systems with real-time
constraints between actions.  Ever since its inception, the model has
been extensively studied for its theoretical aspects and practical
applications.  In this setting, words are decorated with a delay
between consecutive letters.  A \emph{timed word} is a finite sequence
\((t_1 \cdot a_1) (t_2 \cdot a_2) \cdots (t_n \cdot a_n)\) where each
\(t_i \in \mathbb{R}_{\ge 0}\) and each \(a_i\) is a letter taken from
a finite set \(\Sigma\) called an \emph{alphabet}.  A timed word
associates a time delay between letters: \(a_1\) was seen after a
delay of \(t_1\) from the start, the next letter \(a_2\) appears
\(t_2\) time units after \(a_1\), and so on.  Naturally, a timed
language is a set of timed words.  A timed automaton is an automaton
model that recognizes timed languages.  Figures~\ref{fig:ta_I} and
\ref{fig:equiv-words-diff-states} present some examples (formal
definitions appear later).  In essence, a timed automaton makes use of
\emph{clocks} to constrain time between the occurrence of transitions.
In Figure~\ref{fig:ta_I}, the variable \(x\) denotes a clock.  The
transition labels are given by triples comprising a letter
(e.g. \(a\)), a clock constraint (e.g. \(x=1\)), and a multiplicative
factor (\(0\) or \(1\)) for the clock update.  Intuitively, the
semantics of the transition from \(q_I\) to \(q\) is as follows: the
automaton reads the letter \(a\) when the value of the clock held in
\(x\) is exactly \(1\) and updates the clock value to \(1 \times x\).
If the third element of the transition label is \(0\), then the
transition updates the value of \(x\) to \(0 \times x = 0\).  We refer
to the second element of the transition label as the transition
\emph{guard} and the third element as the \emph{reset}.  It is worth
mentioning that the transition guards feature constants given by
integer values, meaning that a guard like \(x=0.33\) is not allowed.
Next, we argue how challenging it is to define a Nerode-style
equivalence for timed languages.

\subparagraph*{Challenge 1.} \emph{The Nerode equivalence lifted as it
  is has infinitely many classes.} %
For example, the timed automaton of Figure~\ref{fig:ta_I} accepts a
timed word \( (t_1\cdot a)\ldots(t_n \cdot a)\) as long as
\(t_1+\dots +t_n=1\).  The timed language \(L\) of that automaton has
infinitely many quotients.  Indeed let \(0 < t_1 < 1\), we have that
\(( t_1\cdot a)^{-1} L=\{ (t_2\cdot a)\ldots (t_n\cdot a) \mid
t_2+\dots +t_n=1-t_1\}\).  Observe that different values for \(t_1\)
yield different quotients, hence \(L\) has uncountably many quotients.

\begin{figure}
  \centering
  \begin{tikzpicture}[shorten >=1pt,node distance=3cm,on grid,auto]
    \node[state,initial right,initial text={}] (epsilon) {\(q_{I}\)};
    \node[state,accepting] (q) [right=of epsilon] {\(q\)}; \path[->]
    (epsilon) edge [bend left=12] node {\(a,x=1,1\)} (q) (epsilon)
    edge [loop left] node [left] {\(a,x<1,1\)} () (q) edge [loop
    right] node [right] {\(a,x=1,1\)} ();
  \end{tikzpicture}
  \caption{Automaton accepting
    \(L=\{ (t_1\cdot a)\ldots (t_n\cdot a) \mid t_1+\dots +t_n=1\}\)
    with alphabet \(\Sigma=\{a\}\).}
  \label{fig:ta_I}
\end{figure}
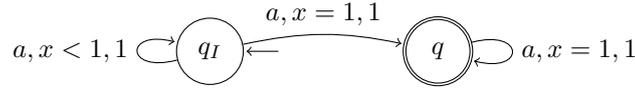

\subparagraph*{Challenge 2.} \emph{Two words with the same residual
  languages may never go to the same control state in any timed
  automaton.}  Figure~\ref{fig:equiv-words-diff-states} gives an
example of a timed language that exhibits this challenge.
Consider the words \(u = (0.5 \cdot a)\) and \(v = (1.5 \cdot a)\).
The residual of both these words is the singleton language
\(\{ (0.5 \cdot b) \}\).  Suppose both \(u\) and \(v\) go to the same
control state \(q\) in the timed automaton.  After reading \(u\)
(resp.  \(v\)), clocks which are possibly reset will be \(0\), whereas
the others will be \(0.5\) (resp. \(1.5\)).  Suppose \(v\) is accepted
via a transition sequence \(q_I \xrightarrow{} q \xrightarrow{} q_F\).
Since guards contain only integer constants, the guard on
\(q \xrightarrow{} q_F\) should necessarily be of the form \(x = 2\)
for some clock \(x\) which reaches \(q\) with value \(1.5\).  The same
transition can then be taken from \(u\) to give \(u (2 \cdot b)\) or
\(u (1.5 \cdot b)\) depending on the value of \(x\) after reading
\(u\).  A contradiction.  This example shows there is no hope to
identify states of a timed automaton through quotients of a
Nerode-type equivalence.  The equivalence that we are aiming for needs
to be stronger, and further divide words based on some past history.

\subparagraph*{Challenge 3.} \emph{The Nerode-style equivalence should
  be amenable to a timed automaton construction.}  In the case of
untimed word languages, monotonicity of the Nerode equivalence
immediately led to an automaton construction.  We need to find the
right notion of monotonicity for the class of automata that we want to
build from the equivalence.

\begin{figure}
  \centering
  \begin{tikzpicture}[shorten >=1pt,node distance=3cm,on
    grid,auto,initial text={}]
    \node[state, initial] (0) at (0,0) {\(q_0\)}; \node[state] (1) at
    (3,.5) {\(q_1\)}; \node[state] (2) at (3,-.5) {\(q_2\)};
    \node[state, accepting] (3) at (6,0) {\(q_3\)};
    \begin{scope}[->, auto]
      \draw (0) to node [sloped] {\(a, 0 < x < 1,1\)} (1); \draw (0)
      to node [below,sloped] {\(a, 1 < x < 2,1\)} (2); \draw (1) to
      node [sloped] {\(b, x = 1,0\)} (3); \draw (2) to node
      [below,sloped] {\(b, x = 2,0\)} (3);
    \end{scope}
  \end{tikzpicture}
  \caption{Automaton accepting
    \(L_2 = \{(t_1 \cdot a) (t_2 \cdot b) \mid \text{either } 0 < t_1
    < 1 \text{ and } t_1 + t_2 = 1, \text{ or } 1 < t_1 < 2 \text{ and
    } t_1 + t_2 = 2 \}\) with alphabet \(\Sigma=\{a,b\}\)}
  \label{fig:equiv-words-diff-states}
\end{figure}
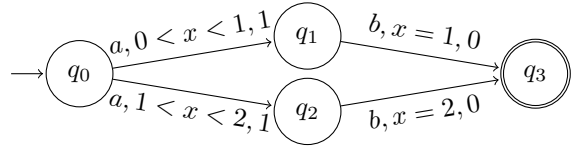

A machine independent characterization for deterministic timed
languages has been studied by Boja\'{n}czyk and
Lasota~\cite{bojanczykMachineIndependentCharacterizationTimed2012}.
They circumvented the above challenges by considering a new automaton
model \emph{timed register automata} that generalizes timed automata.
This automaton model makes use of registers to store useful
information, for instance for the language in
Figure~\ref{fig:equiv-words-diff-states}, a register stores the value
\(0.5\) after reading \((0.5 \cdot a)\) and \((1.5 \cdot a)\).  This
feature helps in resolving Challenge 2.  For the question of
finiteness mentioned in Challenge 1, timed register automata are
further viewed as a restriction of a more general model of automata
that uses the abstract concept of \emph{Frankel-Mostowski} sets in its
definition.  Finiteness is relaxed to a notion of
\emph{orbit-finiteness}.

The work of An et al.~\cite{anLearningNondeterministicRealTime2021}
takes another approach to these challenges by considering a subclass
of timed languages which are called \emph{real-time languages}.  These
are languages that can be recognized using timed automata with a
single clock that is reset in every transition.  Therefore, after
reading a letter, the value of the clock is always \(0\).  This helps
in solving the challenges, resulting in a canonical form for real-time
languages.

\subparagraph*{Our work.}  As we have seen, to get a characterization
which also lends to an automaton construction, either the automaton
model has been modified or the characterization is applied to a class
of languages where the role of the clock is restricted to consecutive
letters.  Our goal is to continue working with the same model as timed
automata and apply a characterization to a different subclass.

In this work, we look at languages recognized by timed automata with
integer resets (IRTA).  These are automata where clock resets are
restricted to transitions that contain a guard of the form \(x = c\)
for some clock \(x\) and some integer
\(c\)~\cite{sumanTimedAutomataInteger2008}.  The class of languages
recognized by IRTA is incomparable with real-time languages.
Moreover, it is known that IRTA can be reduced to
\(1\)-clock-deterministic IRTA~\cite{manasaIntegerResetTimed2010}, or
\(1\)-IRDTA for short.  The proof of this result effectively computes,
given an IRTA, a timed language equivalent \(1\)-IRDTA.  Here is our
main result which gives a Myhill-Nerode style characterization for
IRTA languages.
\begin{theorem}
  Let \(L\) be a timed language.
  \begin{itemize}
  \item \(L\) is accepted by a timed automaton with integer resets
    if{}f there exists a constant \(K\) such that \(\eqv\) is
    \(K\)-monotonic and has a finite index.
  \item The \(\eqv\) equivalence is coarser than any \(K\)-monotonic
    \(L\)-preserving equivalence.
  \end{itemize}
\end{theorem}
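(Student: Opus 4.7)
The plan is to follow the three-part structure of the classical Myhill–Nerode proof, adapted to integer-reset timed automata. \textbf{Forward direction of the iff:} starting from an IRTA $\mathcal{A}$ accepting $L$, I take $K$ to be the maximum constant appearing in the guards of $\mathcal{A}$. Using the reduction of \cite{manasaIntegerResetTimed2010}, I may further assume that $\mathcal{A}$ is a $1$-IRDTA with a single clock $x$. Since resets happen only when $x$ equals an integer, the fractional part of $x$ after reading a word $w$ coincides with the fractional part of the delay accumulated since the last reset; combined with the integer part of $x$ (truncated at $K$) and the current control state, this information fully determines the future behavior of $\mathcal{A}$. I expect the definition of $\eqv$ to group exactly the words with identical such futures, which yields finite index, and, via determinism of $\mathcal{A}$, $K$-monotonicity.

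\textbf{Reverse direction of the iff.} Given a $K$-monotonic $\eqv$ of finite index, I build a canonical $1$-IRDTA whose states are the classes of $\eqv$: the class $[\varepsilon]$ is initial, $[w]$ is accepting iff $w \in L$, and transitions on a time-letter increment are defined using $K$-monotonicity so that the target class depends only on $[u]$ and on the guard/reset type of the increment. The main obstacle I anticipate is pinning down the definition of $K$-monotonicity itself: it must be strong enough to produce well-typed integer-reset guards (of the forms $x = c$, $c < x < c+1$ for $c < K$, and $x > K$) and deterministic transitions out of a purely language-theoretic equivalence, yet weak enough to be satisfied by every IRTA with maximum constant $K$. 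Verifying that these two requirements are simultaneously met, and that the resulting deterministic one-clock automaton is indeed integer-reset, is where I expect most of the technical effort to concentrate.

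\textbf{Coarseness.} For the second bullet, let $\equiv$ be any $K$-monotonic $L$-preserving equivalence and suppose $u \equiv v$. Iterating $K$-monotonicity on all valid timed extensions $z$ and then applying $L$-preservation to each resulting class shows that $u$ and $v$ have identical $K$-truncated future behavior with respect to $L$, which I expect to match the defining condition of $u \eqv v$ once that definition is fixed. Hence $\equiv$ refines $\eqv$, so $\eqv$ is the coarsest $K$-monotonic $L$-preserving equivalence, completing the proof.
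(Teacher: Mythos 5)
Your outline reproduces the high-level architecture of the paper's argument (automaton $\to$ equivalence via the maximum constant $K$ and the reduction to $1$-IRDTA; equivalence $\to$ automaton with classes as states; coarseness by comparing against an arbitrary $K$-monotonic $L$-preserving equivalence), but it defers precisely the ingredient on which the whole theorem stands, and the placeholder you put in its place is actually false. You write that after reading $w$ the control state together with the integer part of the clock (truncated at $K$) and whether its fractional part is zero ``fully determines the future behavior'' of the automaton, and that $\eqv$ should ``group exactly the words with identical such futures.'' Two words reaching the same state with region-equivalent but distinct clock values do \emph{not} have identical residuals: for $u=(0.2\cdot a)$ and $v=(0.6\cdot a)$ in the paper's running example, $u^{-1}L$ and $v^{-1}L$ are different sets. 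So ``same state, same region, identical residual'' has uncountably many classes (this is exactly Challenge~1), while ``same state and same region'' is not a language-theoretic definition at all. The resolution in the paper is the rescaling bijection $\tau_{u\to v}$, built from piecewise-linear maps of the unit interval applied one timestamp at a time, and the definition $u \eqv v$ iff $\ck(u)\equiv^K \ck(v)$ and $\tau_{u\to v}(u^{-1}L)=v^{-1}L$. Without this (or an equivalent device), neither the finite-index claim in your forward direction nor the ``identical $K$-truncated future behavior'' in your coarseness argument can be made to coincide with any fixed language-theoretic equivalence; you are implicitly assuming the theorem's hardest part.

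A second, smaller divergence: the paper does not prove coarseness by ``iterating $K$-monotonicity on all valid timed extensions.'' It first shows that any $K$-monotonic $L$-preserving equivalence $\approx$ of finite index yields a deterministic acceptor $\Aa_\approx$ whose state equivalence refines $\approx$ only trivially (states \emph{are} the classes), and then proves that reaching the same state in \emph{any} such acceptor forces $u \eqv v$, via a lemma stating that $\tau_{x\to x'}$ maps $\mathcal{L}(q,x)$ bijectively onto $\mathcal{L}(q,x')$ for region-equivalent $x,x'$. Your direct-iteration route could in principle be made to work, but only after the rescaling function and its composition and inverse properties ($\tau_{x'\to x}=\tau_{x\to x'}^{-1}$, $\tau_{x_1\to x_2\to x_3}=\tau_{x_1\to x_3}$) are in place, so it does not avoid the missing machinery. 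As written, the proposal is a correct table of contents for the proof with the central chapter left blank.
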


Intuitively, one should think of \(K\) as the largest integer that
needs to appear in the guards of an accepting automaton.  The goal of
the paper is to identify the notion of \(K\)-monotonicity and the
equivalence \(\eqv\) that exhibit the above theorem.  The
characterization also leads to a canonical form for IRTA.  In
practice, the integer reset assumption allows for modeling multiple
situations \cite{sumanTimedAutomataInteger2008}.

To the best of our knowledge, there is no learning algorithm that can
compute an IRTA for systems that are known to satisfy the integer
reset assumption.  We fill this gap and show how Angluin's style
learning \cite{angluinLearningRegularSets1987} can be adapted to learn
\(1\)-IRDTA.

\subparagraph*{Related work.}  Getting a canonical form for timed
languages has been studied in several works:
\cite{bojanczykMachineIndependentCharacterizationTimed2012} and
\cite{malerRecognizableTimedLanguages2004a} focus on a machine
independent characterization for deterministic timed languages,
whereas the works
\cite{grinchteinLearningEventrecordingAutomata2010,anLearningNondeterministicRealTime2021,wagaActiveLearningDeterministic2023}
extend the study of the canonical forms to an active learning
algorithm.  Languages accepted by
event-recording automata are a class of languages where the value of
the clocks is determined by the input word.  This helps in coming up
with a canonical
form~\cite{grinchteinLearningEventrecordingAutomata2010}.  In
\cite{wagaActiveLearningDeterministic2023}, the author presents a
Myhill-Nerode style characterization for deterministic timed languages
by making use symbolic words rather than timed words directly.  The
author shows that the equivalence has a finite index if{}f the
language is recognizable (under the notion of recognizability using
right-morphisms proposed by Maler and Pnueli
\cite{malerRecognizableTimedLanguages2004a}).  Further, Maler and
Pnueli have given an algorithm to convert recognizable timed languages
to deterministic timed automata, which resets a fresh clock in every
transition and makes use of clock-copy updates \(x: = y\) in the
transitions.  It is known that automata with such updates can be
translated to classical timed
automata~\cite{SpringintveldVMinimizableTA1996,
  BouyerDFPUpdatableTA2004}.

Learning timed automata is a topic of active research.  The
foundations of timed automata learning were laid in the pioneering
work of Grinchtein et al.
\cite{grinchteinLearningEventrecordingAutomata2010} by providing a
canonical form for event-recording automata (ERAs).  These are
automata having a clock for each letter in the alphabet, and a clock
$x_a$ records the time since the last occurrence of $a$.  The
canonical form essentially considers a separate state for
each region.  Since there are as many clocks as the number of letters,
there are at least $|\Sigma|!  $ number of regions.  This makes the
learning algorithm prohibitively expensive to implement.  In contrast,
as we will see, we are able to convert IRTAs into a subclass of
single-clock IRTAs.  If $K$ is the maximum constant, there are only
$2K + 2$ many regions.  Later works on learning ERAs have considered
identifying other forms of automata that merge the states of the
canonical
form~\cite{DBLP:conf/atva/LinADSL11,DBLP:conf/concur/GrinchteinJP06,Sayan}.
Other models for learning timed systems consider one-clock timed
automata~\cite{DBLP:conf/atva/XuAZ22, DBLP:conf/tacas/AnCZZZ20} and
Mealy machines with timers~\cite{DBLP:journals/corr/abs-2403-02019,
  DBLP:journals/iandc/VaandragerEB23}.  Approaches other than active
learning for timed automata include passive learning of discrete timed
automata~\cite{DBLP:phd/basesearch/Verwer10} and learning timed
automata using genetic
programming~\cite{DBLP:conf/formats/TapplerALL19}.

We have considered a subclass of deterministic timed languages.
Therefore, our class does fall under the purview of the
\cite{wagaActiveLearningDeterministic2023,malerRecognizableTimedLanguages2004a}
work --- however, the fundamental difference is that we continue to
work with timed words and not symbolic timed words.  This gives an
alternate perspective and a direct and simpler \(1\)-clock IRTA
construction.  The simplicity and directness also apply when it comes
to learning \(1\)-clock IRTA.

\subparagraph*{Outline of the paper.}  In
Section~\ref{sec:integer-resets}, we define the class of one-clock
languages with integer resets, and their acceptors thereof: the
1-clock Timed Automata (or \(1\)-TA) with clock constraints given by
region equivalence classes and transitions that always reset on
integer clock values.  Section~\ref{sec:from-equiv-autom} puts forward
a notion of $K$-monotonicity and characterizates $K$-monotonic 
equivalences as a certain type of integer reset automata.
Subsequently, Section~\ref{sec:nerode-style-equiv} presents the
Nerode-style equivalence, the Myhill-Nerode theorem and some examples
applying the theorem.  Finally, in Section~\ref{sec:canonical-form},
we give an algorithm to compute and learn the canonical form.

\section{Background}
\label{sec:background}
\subparagraph*{Words and languages.} %
An \emph{alphabet} is a finite set of \emph{letters} which we
typically denote by \(\Sigma\).  An \emph{untimed word} is a finite
sequence \(a_{1}\cdots a_{n}\) of letters \(a_{i}\in \Sigma\).  We
denote by \(\Sigma^{*}\) the set of untimed words over \(\Sigma\).  An
\emph{untimed language} is a subset of \(\Sigma^{*}\).  A
\emph{timestamp} is a finite sequence of non-negative real numbers.
We denote the latter set by \(\mathbb{R}_{{}\geq 0}\) and the set of
all timestamps by \(\mathbb{T}\).  A \emph{timed word} is a finite
sequence \((t_1\cdot a_{1})\cdots (t_n\cdot a_{n})\) where
\(a_{1}\cdots a_{n}\in \Sigma^{*}\) and
\(t_1\cdots t_n\in \mathbb{T}\).  We denote the set of timed words by
\(\mathbb{T}\Sigma^{*}\).  Given a timed word
\(u = (t_1\cdot a_1)\ldots(t_n\cdot a_n)\in\mathbb{T}\Sigma^{*}\) we
denote by \(\sigma(u)\) the sum \(t_1+\cdots+t_n\) of its timestamps.
A \emph{timed language} is a set of timed words.  As usual, we denote
the empty (un)timed word by \(\epsilon\).  The \emph{residual
  language} of an (un)timed language \(L\) with regard to a (un)timed
word \(u\) is defined as \(u^{-1}L=\{w\mid uw \in L\}\).  Therefore it
is easy to see that \(\epsilon^{-1} L = L\) for every (un)timed
language \(L\).

Timed automata~\cite{AlurDTimedAutomata94} are recognizers of timed
languages.  Since we focus on subclasses of timed automata with a
single clock, we do not present the definition of general timed
automata.  Instead, we give a modified presentation of one-clock timed
automata that will be convenient for our work.

\subparagraph*{One-clock timed automata.} %
A \emph{One-clock Timed Automaton} (\(1\)-TA) over \(\Sigma\) is a
tuple \(\mathcal{A} = (Q, q_{I},T,F)\) where \(Q\) is a finite set of
states, \(q_{I}\in Q\) is the initial state, \(F\subseteq Q\) is the
set of final states and
\(T\subseteq Q\times Q\times \Sigma \times \Phi \times \{0,1\}\) is a
finite set of transitions where \(\Phi\) is the set of clock
constraints given by
\[ \phi ::= x < m \quad|\quad m < x \quad|\quad x=m \quad|\quad \phi
  \wedge \phi \enspace,\text{ where }m\in \mathbb{N}.
\]

For a clock constraint \(\phi\), we write \(\denote{\phi}\) for the
set of non-negative real values for \(x\) that satisfies the
constraint.  Notice that we have disallowed guards of the form
\(x \le m\) which appear in standard timed automata literature, since
its effect can be captured using two transitions, one with \(x = m\)
and another with \(x < m\).  A transition is a tuple
\((q, q', a, \phi, r)\) where \(\phi\) is a clock constraint called
the \emph{guard} of the transition and \(r \in \{0, 1\}\) denotes
whether the single clock \(x\) is \emph{reset} in the transition.

We say that a \(1\)-TA with transitions \(T\) is \emph{deterministic}
whenever for every pair \(\theta=(q,q',a,\phi,r)\) and
\( \theta_1=(q_1,q'_1,a_1,\phi_1,r_1)\) of transitions in \(T\) such
that \(\theta\neq\theta_1\) we have that either \(q\neq q_1\),
\(a\neq a_1\) or \(\denote{\phi}\cap\denote{\phi_1}=\emptyset\).

A \emph{run} of \(\mathcal{A}\) on a timed word
\((t_1\cdot a_{1})\ldots (t_k\cdot a_{k})\in \mathbb{T}\Sigma^{*}\) is
a finite sequence
\[ e=(q_{0}, \nu_{0}) \xrightarrow{t_1,\theta_{1}} (q_{1},
  \nu_{1})\xrightarrow{t_2,\theta_{2}}\cdots
  \xrightarrow{t_k,\theta_{k}} (q_{k}, \nu_{k})\enspace, \] where
\(\braces{q_0,\ldots,q_k}\subseteq Q\),
\(\braces{\nu_0,\ldots,\nu_k}\subseteq \mathbb{R}_{\ge 0}\) and for
each \(i \in \{1,\ldots,k\}\) the following hold: \(\theta_i\in T\)
and \(\theta_{i}\) is of the form
\((q_{i-1},q_{i},a_{i},\phi_{i},r_{i})\),
\(\nu_{i-1}+t_i\in \denote{\phi_{i}}\), and
\(\nu_{i} = r_{i}(\nu_{i-1}+t_i)\).  Therefore if \(r_i = 0\), we have
\(\nu_i = 0\) and if \(r_i = 1\) we have \(\nu_i = \nu_{i-1} + t_i\).
A pair \( (q,\nu) \in Q\times\mathbb{R}_{\ge 0}\) like the ones
occurring in the run \(e\) is called a \emph{configuration} of
\(\mathcal{A}\) and the configuration \( (q_I,0) \) is called
\emph{initial}.  The run \(e\) is deemed \emph{accepting} if
\(q_{k}\in F\).

For \(w\in \mathbb{T}\Sigma^{*}\) we write
\((q,\nu) \rightsquigarrow^w (q',\nu')\) if there is a run of
\(\mathcal{A}\) on \(w\) from \((q,\nu)\) to \((q',\nu')\).  Observe
that if \(\mathcal{A}\) is deterministic then for every timed word
\(w\) there is at most one run on \(w\) starting from the initial
configuration.  We say that \(\mathcal{A}\) is complete if every word
admits a run.  In the rest, we will always assume, without loss of
generality, that our timed automata are complete.  Finally, given a
configuration \((q,x)\) define
\(\mathcal{L}(q,x)=\{w\in \mathbb{T}\Sigma^{*}\mid (q,x)
\rightsquigarrow^w (p,\nu), p\in F, \nu \in \mathbb{R}_{\ge 0}\}\),
hence define \(L(\mathcal{A})\) as \(\mathcal{L}(q_I,0)\).

\subparagraph*{Equivalence relation.}  A relation
\(\mathord{\sim} \subseteq S\times S\) on a set \(S\) is an
\emph{equivalence} if it is reflexive (i.e. \(x\sim x\)), transitive
(i.e. \(x\sim y \wedge y\sim z \implies x\sim z\)) and symmetric
(i.e. \(x\sim y\implies y\sim x\)).  The equivalence class of
\(s\in S\) w.r.t. \(\sim\) is the subset
\([s]_{\sim}=\{s'\in S\mid s\sim s'\}\).  A \emph{representative} of
the class \([s]_{\sim}\) is any element \(s'\in [s]_{\sim}\).  Given a
subset \(D\) of \(S\) we define
\([D]_{\sim}=\{[d]_{\sim}\mid d\in D\}\).  We say that \(\sim \) has
\emph{finite index} when \([S]_{\sim}\) is a finite set.  An important
notion in the analysis of timed automata is the \emph{region
  equivalence} which we recall next for one-clock timed automata.

\subparagraph*{Region equivalence.}  Given a constant
\(K\in\mathbb{N}\) define the equivalence
\(\mathord{\equiv^{K}}\subseteq \mathord{\mathbb{R}_{\ge 0}\times
  \mathbb{R}_{\ge 0}}\) by
\[x\equiv^{K} y \iff \bigl( \minor{x} = \minor{y} \land (\{x\}=0
  \Leftrightarrow \{y\}=0) \bigr) \lor (x > K \land y > K) \enspace
  ,\] where given \(x \in \mathbb{R}_{\ge 0}\) we denote by
\(\minor{x} \) its integral part and by \(\{x\}\) its fractional part.
\section{Languages with Integer Resets}
\label{sec:integer-resets}
We are interested in timed languages recognized by IRTAs. It is known
that IRTAs can be converted to 1-clock deterministic
IRTAs~\cite{manasaIntegerResetTimed2010}. The key idea is that in any
reachable valuation of an IRTA, all clocks have the same fractional
value. Therefore, the integral values of all clocks can be encoded inside
the control state, and the fractional values can be read from a single
clock. In the sequel, we simply define IRTAs with a
single clock, due to the equi-expressivity.

We define the class of one-clock integer-reset timed automata
(\(1\)-IRTA) where transitions reset the clock provided its value is
an integer.
Formally, we say that a \(1\)-TA \(\mathcal{A} = (Q, q_{I},T,F)\) is a
\(1\)-IRTA when for every resetting transition
\((q,q',a,\phi,0)\in T\) the clock constraint \(\phi\) is of the form
\(x=m\), or, equivalently, \(\denote{\phi}\in\mathbb{N}\).  A
deterministic \(1\)-IRTA is called a \(1\)-IRDTA.
\begin{example}
  The one-clock timed automata in Figures~\ref{fig:ta_I},
  \ref{fig:equiv-words-diff-states} and \ref{fig:ta_R_int_not_I} are
  all \(1\)-IRTAs.
\end{example}
The definition of a run of a \(1\)-IRTA on a timed word simply follows
that of \(1\)-TA.  However, for \(1\)-IRTA, we can identify positions
in input timed words where resets can potentially happen.  The next
definition makes this idea precise.
\begin{definition}
  \label{def:c^K}
  Given \(d=t_1\cdots t_n\in \mathbb{T}\) and a \(K \in \mathbb{N}\),
  we define the longest sequence of indices
  \(s_{d}=\{0=i_{0}< i_{1}< \dots <i_{p}\leq n\}\) such that for every
  \(j\in \{0, \ldots, p-1\}\) the value
  \(\sum_{i=(i_{j})+1}^{i_{(j+1)}} t_i\) is an integer between \(0\)
  and \(K\).  We refer to the set of positions of the sequence \(s_d\)
  as the \emph{integral positions} of \(d\).  Note that \(s_d\) is
  never empty since it always contains \(0\).  Next, define
  \[ \mathit{c}^{K}(d) = \sum_{i=(i_{p})+1}^{n} t_i\enspace .
  \]
  The definitions of integral positions and the function
  \(\mathit{c}^K\) apply equally to timed words by taking the
  timestamp of the timed word.
\end{definition}
Notice that the sequence \(s_{d}\) depends on the constant \(K\) (we
do not explicitly add \(K\) to the notation for simplicity, as in our
later usage, \(K\) will be clear from the context).  Note also that
when \(i_p=n\) then \(\mathit{c}^{K}(d)=0\), otherwise
\(\mathit{c}^{K}(d)\) can be any real value except an integer value
between \(0\) and \(K\), i.e.
\(\mathit{c}^{K}(d)\in \mathbb{R}_{\geq 0}\setminus\{0,\ldots,K\}\).
\begin{example}
  For \(K=1\) and \(u=(0.2\cdot a)(0.8\cdot a)(0.2\cdot a)\) we have
  \(s_{u}=\{0<2\}\) and \(\mathit{c}^{1}(u)=0.2\).  For \(K=1\) and
  \(u'=(1.2\cdot a)(0.8\cdot a)(0.2\cdot a)\) we have \(s_{u'}=\{0\}\)
  and \(\mathit{c}^{1}(u')=2.2\).  For \(K = 2\), we have
  \(s_{u'} = \{ 0 < 2 \}\) and \(\mathit{c}^{2}(u')=0.2\).
\end{example}
Consider a run of a \(1\)-IRTA on a word
\(u=(t_1\cdot a_{1})\cdots (t_n\cdot a_{n})\in \mathbb{T}\Sigma^{*}\)
and factor it according to
\(s_{u}=\{0=i_{0}< i_{1}< \dots <i_{p}\leq n\}\):
\begin{multline*}
  (q_{i_0}, \nu_{i_0}) \xrightarrow{t_{(i_0)+1},\theta_{(i_0)+1}\cdots
    t_{i_{1}},\theta_{i_1}} (q_{i_{1}},
  \nu_{i_{1}})\xrightarrow{t_{(i_1)+1},\theta_{(i_1)+1} \cdots
    t_{i_2},\theta_{i_2}} (q_{i_{2}} \nu_{i_{2}})
  \xrightarrow{}\cdots\\ \xrightarrow{} (q_{i_{p}}, \nu_{i_{p}})
  \xrightarrow{t_{(i_{p})+1},\theta_{(i_{p})+1} \cdots t_n,\theta_{n}}
  (q_{n}, \nu_{n})
\end{multline*}
At each position \(i_{j}\) with \(j\in\{0,\ldots,p\}\),
\(\nu_{i_j}\in\mathbb{N}\) and, moreover, \(\nu_{i_j}=0\) when
\(r_{i_j}=0\).

In Definition~\ref{def:c^K} we identified the integral positions at
which a \(1\)-IRTA \emph{could potentially} reset the clock.  In the
following, we recall a subclass of \(1\)-IRTAs called strict
\(1\)-IRTAs \cite{bhaveAddingDenseTimedStack2017} where every
transition with an equality guard \(\phi\) (\(\phi\) is of the form
\(x = m\) or, equivalently, \(\denote{\phi}\in\mathbb{N}\))
\emph{must} reset the clock.  This feature, along with a special
requirement on guards forces a reset on every position given by
\(s_u\) for a word \(u\).

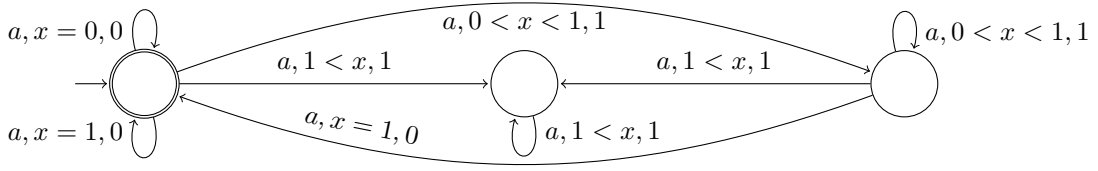
\begin{figure}
  \centering
  \begin{tikzpicture}[shorten >=1pt,node distance=6cm,on
    grid,auto,initial text={}]
    \node[state,initial, accepting] (epsilon) at (0,0) {};
    \node[state] (q) at (10,0) {}; \node[state] (2) at (5,0) {};
    \path[->] (epsilon) edge [loop above] node [left, very near start]
    {\(a,x=0,0\)} () (epsilon) edge [bend left=20, below] node
    {\(a,0<x<1,1\)} (q) (epsilon) edge [loop below] node [left, very
    near end] {\(a,x=1,0\)} () (epsilon) edge node {\(a,1<x,1\)} (2)
    (q) edge [loop above] node [right, very near end] {\(a,0<x<1,1\)}
    () (q) edge [bend left=20, above] node [sloped, near end]
    {\(a,x=1,0\)} (epsilon) (q) edge [above] node {\(a,1<x,1\)} (2)
    (2) edge [loop below] node [right, very near start] {\(a,1<x,1\)}
    ();
  \end{tikzpicture}
  \caption{A strict \(1\)-IRTA with alphabet \(\Sigma=\{a\}\)
    accepting
    \(M=\{u \in \mathbb{T}\Sigma^{*}\mid \mathit{c}^{1}(u)=0\}\).}
  \label{fig:ta_R_int_not_I}
\end{figure}

\subsection{Strict \texorpdfstring{\(1\)}{1}-IRTA}
\label{sec:strict-IRTAs}
A \(1\)-IRTA is said to be \emph{strict} if there exists
\(K\in\mathbb{N}\) such that for each of its transitions
\( (q,q',a,\phi,r) \) the following holds:
\begin{enumerate}
\item the clock constraint of the guard \(\phi\) is either \(x=m\),
  \( m < x \land x < m+1\), or \(K<x\),
\item the clock constraint of the guard \(\phi\) is an equality if{}f
  \(r=0\).
\end{enumerate}
\begin{example}
  \label{example:strict}
  The \(1\)-IRTA in Figures~\ref{fig:equiv-words-diff-states}
  and~\ref{fig:ta_R_int_not_I} are strict \(1\)-IRTAs whereas the one
  in Figure~\ref{fig:ta_I} is not strict since the transition
  \(q_I \xrightarrow{a, x = 1, 1} q\) does not reset the clock.  To
  make it strict while accepting the same language, we replace the
  transitions of guard $x=1$ by resetting transitions of guard $x=0$
  and, split the transition \(q_I \xrightarrow{a, x<1, 1} q_{I}\) into
  \(q_I \xrightarrow{a, 0<x<1, 1} q_{I}\) and
  \(q_I \xrightarrow{a, x=0, 0} q_{I}\).
\end{example}

A run of a strict \(1\)-IRTA on a word \(u\) can be factored similarly
as explained for a general \(1\)-IRTA, however now, every \(r_{i_j}\)
will be a reset transition: notice that we require each transition to
be guarded using constraints of a special form, either \(x = m\) or
\(m < x < m + 1\) or \(K < x\); therefore, the transition reading
\((t_{i_1}, a_{i_1})\) will necessarily have an equality guard
\(x = m\) forcing a reset, similarly at \(i_2\) and so on.  Therefore,
the sequence \(s_u\) identifies the exact reset points in the word, no
matter which strict \(1\)-IRTA reads it.  The quantity
\(\mathit{c}^K(u)\) gives the value of the clock on reading \(u\) by
any strict \(1\)-IRTA.  This \emph{input-determinism} is a fundamental
property of strict \(1\)-IRTAs that helps in the Myhill-Nerode
characterization that we present in the later sections.

The question now is how expressive are strict \(1\)-IRTAs.  As shown
by the proposition below, every language definable by a \(1\)-IRTA is
also definable by a strict \(1\)-IRTA.  Therefore, we could simply
consider strict \(1\)-IRTAs instead of \(1\)-IRTAs.  Even though a
proof of this equi-expressivity theorem is
known~\cite{bhaveAddingDenseTimedStack2017} we provide one in
appendix~\ref{app:strictness}.

\begin{restatable}[see also Theorem~1
  \cite{bhaveAddingDenseTimedStack2017}]{proposition}{PropRPlus}
  \label{prop:R_plus}
  A language accepted by a (deterministic) \(1\)-IRTA is also accepted
  by a (deterministic) strict \(1\)-IRTA with no greater constant in
  guards.
\end{restatable}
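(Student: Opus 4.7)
The plan is to convert any $1$-IRTA $\mathcal{A} = (Q, q_I, T, F)$ with maximum constant $K$ into an equivalent strict $1$-IRTA $\mathcal{A}'$ in two stages, preserving determinism and the maximum constant. First, I would \emph{regionalize} the guards of $\mathcal{A}$: every transition $(q, q', a, \phi, r)$ is replaced by one copy $(q, q', a, \phi_R, r)$ for each strict-form region $R$ (namely $x = m$ with $m \leq K$, $m < x < m+1$ with $m < K$, or $K < x$) satisfying $\denote{\phi_R} \subseteq \denote{\phi}$. This refinement clearly preserves the language, determinism, and the maximum constant, and ensures that condition~(1) of strictness holds throughout. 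What may still fail is condition~(2) (equality guard iff reset), because $\mathcal{A}$ can contain non-resetting transitions whose guard is an equality $x = m$.

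To remove those, I would perform a \emph{shifted-state} construction. The new state space is $Q \times (\{0, 1, \ldots, K\} \cup \{\top\})$, with initial state $(q_I, 0)$ and final set $F \times (\{0, 1, \ldots, K\} \cup \{\top\})$. The intended meaning of $(q, k)$ is that $\mathcal{A}$ sits in $q$ with effective clock equal to (physical clock of $\mathcal{A}'$) plus $k$; the tag $\top$ records that the effective clock already exceeds $K$. To simulate an original transition $(q, q', a, \phi, r)$ from $(q, k)$, I shift $\phi$ by $-k$ and decompose the shifted guard into strict-form regions of the physical clock. For each region that is open or of the form $K < y$, I create a non-resetting transition to $(q', k)$, keeping the shift; for each equality region $y = m'$, I enforce strictness by a resetting transition into $(q', m + m')$, collapsing to $(q', \top)$ when $m + m' > K$, thereby absorbing the known integer value of the effective clock into the shift. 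In particular, a non-resetting $x = m$ transition of $\mathcal{A}$ becomes a resetting transition in $\mathcal{A}'$ from $(q, k)$ to $(q', m)$ with guard $y = m - k$; and from any $\top$-shifted state only the transitions originally guarded by $K < x$ survive, all of which keep the shift at $\top$.

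Correctness would follow from an induction on runs maintaining the invariant: after reading the same timed word, $\mathcal{A}$ is in configuration $(q, \nu)$ iff $\mathcal{A}'$ is in configuration $((q, k), y)$ with $k + y = \nu$ (and $k = \top$ precisely when $\mathcal{A}$ has, in the corresponding run, passed through a non-resetting integer clock value exceeding $K$). Strictness is immediate from the construction, and determinism is inherited because distinct outgoing transitions from $(q, k)$ on the same letter arise either from distinct transitions of $\mathcal{A}$ (with disjoint shifted guards) or from disjoint strict-form regions of a single transition. I expect the main technical obstacle to be the case of a non-resetting $K < x$ transition: its shifted version spans several strict-form regions including integer ones, and must therefore be split into both a resetting component (pushing the shift to $\top$) and non-resetting components (preserving it). Verifying that the invariant survives this split, and that no spurious runs are introduced by the regionalization, is the most delicate bookkeeping in the formal proof.
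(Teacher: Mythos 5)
Your architecture is sound and genuinely different from the paper's. The paper also begins by regionalizing guards into the strict forms, and then eliminates the \emph{bad} transitions (non-resetting with equality guard $x=m$) \emph{one at a time}, by induction on their number: it picks a bad transition whose constant $m$ is \emph{maximal} among bad constants, adds a single shifted copy $\overline{Q}$ of the state space in which every guard is translated by $-m$, replaces the chosen transition by a resetting one $(p,\overline{q},a,x=m,0)$, simulates bad transitions with guard $x=m$ inside the copy by $x=0$ resets (bad transitions with smaller constants become unsatisfiable there, which is why maximality matters), and routes back to $Q$ at the next original reset. Your construction instead tracks the accumulated integer offset $k \in \{0,\dots,K\}\cup\{\top\}$ in the control state and removes all bad transitions in one product construction of size $(K+2)\,|Q|$; the paper's iteration duplicates the whole automaton at each step and is thus exponential in the number of bad transitions in the worst case. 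Your invariant $k+y=\nu$ is exactly the simulation argument underlying the paper's construction, and your splitting of shifted $K<x$ guards into a resetting $\top$-bound piece plus offset-preserving pieces is the complication the paper sidesteps by only ever shifting by one fixed $m$ per step. Both routes preserve determinism and never increase the constant.

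However, two points in your recipe, as written, are wrong or incomplete. First, you apply the rule ``equality region $y=m'$ $\Rightarrow$ reset into the state with shift $k+m'$'' uniformly in $r$, but for an \emph{originally resetting} transition $(q,q',a,x=m,0)$ the new shift must be $0$, not $k+m'=m$: the original clock becomes $0$, so your invariant $k+y=\nu$ forces $k=0$ after the step. Taken literally, your automaton accepts the wrong language: for $q_0 \xrightarrow{a,\,x=1,\,0} q_1 \xrightarrow{b,\,x=1,\,1} q_2$ with $q_2$ final, the uniform rule sends $(q_0,0)$ via guard $y=1$ to $(q_1,1)$, after which the $b$-guard is shifted to $y=0$, so $(1\cdot a)(0\cdot b)$ is accepted instead of $(1\cdot a)(1\cdot b)$. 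The repair is a case split on $r$: originally resetting transitions get shifted guard $y=m-k$, reset, and target shift $0$. (Also, your ``$(q',m+m')$'' should read ``$(q',k+m')$,'' as your own worked example indicates.) Second, for $\top$-states the shift-by-$-k$ recipe is undefined, and ``the transitions originally guarded by $K<x$ survive'' cannot mean they keep guard $K<y$: entering $(q',\top)$ resets the physical clock, so an original $K<x$ transition must thereafter be enabled for \emph{every} $y\ge 0$ (the effective clock stays above $K$ forever, since no reset can fire there). You must re-guard it by the full cover of regions $y=0$, $0<y<1$, \dots, $y=K$, $K<y$, resetting on the equality pieces --- harmless, as the shift stays $\top$ and the clock is never again relevant. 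Both repairs are local and require no new idea, but without them the construction as stated fails.
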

\section{From Equivalences to Automata and Back}
\label{sec:from-equiv-autom}

We start the study of equivalences for languages accepted by integer
reset automata.  Proposition~\ref{prop:R_plus} says that every IRTA
language can be recognized by a strict 1-IRDTAs.  There are two
advantages of strict 1-IRDTAs: there is a single clock; and the value
of the clock on reading the word is simply determined by the word and
not by the automaton that is reading it.  This motivates us to
restrict our attention to equivalences that make use of the quantity
$c^K(u)$, and from which one can construct a strict 1-IRDTA with
states as the equivalence classes.  In order to be able to do so, we
need a good notion of monotonicity (Challenge 3 of the Introduction).

Intuitively, the equivalence should satisfy two conditions whenever
$u$ is equivalent to $v$: (1) when $u$ can elapse time and satisfy a
guard, $v$ should be able to elapse some time and satisfy the same
guard, and (2) all one step extensions of $u$ and $v$, say
$u' = u (t \cdot a)$ and $v' = v (t'\cdot a)$ such that the clock
values $\ck(u')$ and $\ck(v')$ satisfy the same set of guards w.r.t
constant $K$, should be made equivalent.  Each guard in a strict
1-IRDTA represents a $K$-region.  All these remarks lead to the
following definition of $K$-monotonicity.

\begin{definition}[$L$-preserving,
  $K$-monotonic] \label{def:monotonicity} An equivalence
  \(\mathord{\approx}\subseteq \mathord{\mathbb{T}\Sigma^{*}\times
    \mathbb{T}\Sigma^{*}}\) is \(L\)-\emph{preserving} when
  \(u\approx v \implies (u\in L \iff v\in L)\).  Given a constant
  \(K\in \mathbb{N}\),
  \(\mathord{\approx}\subseteq \mathbb{T}\Sigma^{*}\times
  \mathbb{T}\Sigma^{*}\) is \(K\)-\emph{monotonic} when $u \approx v$
  implies:
  \begin{alphaenumerate}
  \item $\ck(u) \equiv^K \ck(v)$, and
  \item $\forall a \in \Sigma$,
    $\forall t, t' \in \mathbb{R}_{\ge 0}:~$
    $\ck(u) + t \equiv^K \ck(v) + t' ~\implies~ u (t \cdot a) \approx
    v (t' \cdot a)$.
  \end{alphaenumerate}
\end{definition}

From a $K$-monotonic equivalence, we can construct a strict 1-IRDTA
whose states are the equivalence classes.  Here is additional
notation.  For a number $t \in \mathbb{R}_{\ge 0}$, we define a clock
constraint $\phi(~[t]_{\equiv^K}~)$ as:
\[ \phi(~[t]_{\equiv^{K}}~) =
  \begin{cases}
    x=t                                   & \quad\text{if } t\leq K \wedge t\in \mathbb{N}\enspace,     \\
    \minor{t} < x \land  x < \minor{t} +1 & \quad\text{if } t\leq K \wedge  t\notin \mathbb{N}\enspace, \\
    K<x & \quad\text{if } K < t \enspace .
  \end{cases} \]

\begin{definition}[From equivalence $\approx$ to strict 1-IRDTA
  $\Aa_\approx$]
  \label{def:Aapprox}
  Let $L \subseteq \tss$, $K \ge 0$, and $\approx$ a $K$-monotonic,
  $L$-preserving equivalence with finite index.  The strict 1-IRDTA
  $\Aa_\approx$ has states $\{ [u]_\approx \mid u \in \tss \}$.  The
  initial state is $[\epsilon]_\approx$.  Final states are
  $\{ [u]_\approx \mid u \in L\}$.  Between two states $[u]_\approx$
  and $[v]_\approx$ there is a transition
  $([u]_\approx, [v]_\approx, a, g, s)$ if there exists
  $t \in \mathbb{R}_{\ge 0}$ such that:
  \begin{itemize}
  \item $ u (t \cdot a) \approx v$, and
  \item $g = \phi(~[c^K(u) + t]_{\equiv^K}~)$, and
  \item $s = 0$ if $c^K(u) + t \in \{0, 1, \dots, K\}$ and $s = 1$
    otherwise.
  \end{itemize}
\end{definition}

We now explain why the above definition does not depend on the
representative picked from an equivalence class.  Suppose
$([u]_\approx, [v]_\approx, a, g, s)$ is a transition.  Let
$t \in \mathbb{R}_{\ge 0}$ be a value which witnesses the transition,
that is, it satisfies the conditions of the above definition.  Pick
another word $u'$ equivalent to $u$, that is, $u \approx u'$.  By
Definition~\ref{def:monotonicity} (a), we have
$\ck(u) \equiv^K \ck(u')$.  Therefore, there exists $t'$ such that
$\ck(u) + t \equiv^K \ck(u') + t'$.  Moreover by (b),
$u' (t' \cdot a) \approx u (t \cdot a)$, and hence
$u' (t' \cdot a) \approx v$.  Therefore, we observe that even if we
had chosen $u'$ instead of $u$, we get a witness $t'$ for the same
transition $([u]_\approx, [v]_\approx, a, g, s)$.

\begin{lemma}\label{lem:monotone-to-automaton}
  Let $\approx$ be an $L$-preserving, $K$-monotonic equivalence with
  finite index.  Then $\Ll(\Aa_\approx) = L$.
\end{lemma}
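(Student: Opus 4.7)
The plan is to establish the invariant that, when $\Aa_\approx$ is run from the initial configuration $([\epsilon]_\approx, 0)$ on a timed word $u\in\tss$, the unique run ends in the configuration $([u]_\approx, \ck(u))$. Once this invariant is in place, $u\in\Ll(\Aa_\approx)$ iff $[u]_\approx$ is final, iff there exists $v\in L$ with $u\approx v$, iff $u\in L$ by the $L$-preserving property; and the lemma follows.

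Before proving the invariant, I would check that $\Aa_\approx$ really is a complete deterministic strict 1-IRDTA. The well-definedness of the transition relation with respect to representatives is already argued in the paragraph preceding the lemma, using $K$-monotonicity (a) to find a delay $t'$ matching $t$ and then (b) to transfer the equivalence. For determinism, suppose two transitions $([u]_\approx,[v_1]_\approx,a,g,s)$ and $([u]_\approx,[v_2]_\approx,a,g,s)$ are obtained from witnesses $t_1$ and $t_2$: both values $\ck(u)+t_1$ and $\ck(u)+t_2$ fall inside the same region encoded by $g$, hence $\ck(u)+t_1 \equiv^K \ck(u)+t_2$, so (b) gives $u(t_1\cdot a)\approx u(t_2\cdot a)$ and therefore $[v_1]_\approx=[v_2]_\approx$. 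Completeness is clear since for any state $[u]_\approx$, letter $a$, and delay $t$, the tuple $([u]_\approx,[u(t\cdot a)]_\approx,a,\phi([\ck(u)+t]_{\equiv^K}),s)$ is a transition by construction. Strictness is baked into the guard shapes of $\phi$ and the rule that $s=0$ exactly when the guard is an equality.

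Next I would prove the invariant by induction on the length of $u$. The base case $u=\epsilon$ is immediate: the initial configuration is $([\epsilon]_\approx,0)$ and $\ck(\epsilon)=0$. For the inductive step, assume reading $u$ ends in $([u]_\approx,\ck(u))$ and extend by $(t\cdot a)$. Just before the extending transition, the clock value is $\ck(u)+t$, so the unique enabled transition has guard $g=\phi([\ck(u)+t]_{\equiv^K})$ and, by the construction of $\Aa_\approx$ together with the determinism argument above, leads to $[u(t\cdot a)]_\approx$. It remains to show that the resulting clock value equals $\ck(u(t\cdot a))$: if $\ck(u)+t\in\{0,1,\dots,K\}$ the automaton resets to $0$, and Definition~\ref{def:c^K} applied to $u(t\cdot a)$ places the last integral position at the end, giving $\ck(u(t\cdot a))=0$; otherwise no reset occurs, the new clock value is $\ck(u)+t$, and Definition~\ref{def:c^K} yields $\ck(u(t\cdot a))=\ck(u)+t$.

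The main subtlety, and the only place where one has to be careful, is the alignment between the automaton's resetting behavior and the integral position sequence $s_u$ used to define $\ck$. That is exactly what the case split in the inductive step handles: every time $\ck(u)+t$ hits an integer in $\{0,\dots,K\}$, the construction forces an equality guard and a reset, and these are precisely the positions listed by $s_{u(t\cdot a)}$. With this correspondence in hand, the invariant is established and the equality $\Ll(\Aa_\approx)=L$ follows from the final-state definition and $L$-preservation as sketched above.
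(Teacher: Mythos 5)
Your proposal is correct and follows essentially the same route as the paper: an induction on word length establishing the invariant that $\Aa_\approx$ reaches the configuration $([u]_\approx, \ck(u))$ after reading $u$, followed by the $L$-preservation argument to conclude $\Ll(\Aa_\approx)=L$. The extra checks you include (determinism, completeness, and the alignment of resets with the integral positions of $s_u$) are left implicit in the paper's proof but are accurate and consistent with its construction.
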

\begin{proof}
  By induction on the length of the timed words we show that for every
  \(u\in \mathbb{T}\Sigma^{*}\),
  \(([\epsilon]_{\approx},0) \rightsquigarrow^{u} ([u]_{\approx},
  \mathit{c}^{K}(u)) \).
  Let \(u(t\cdot a)\in \mathbb{T}\Sigma^{*}\).  Assume
  \(([\epsilon]_{\approx}, 0) \rightsquigarrow^u
  ([u]_{\approx},\mathit{c}^{K}(u))\).  By definition of
  \(\mathcal{A}_{\approx}\), there is a transition
  \(([u]_{\approx},[u(t\cdot a)]_{\approx}, a,g, s)\) such that
  \(g=\phi([\mathit{c}^{K}(u)+t]_{\equiv^{K}})\) and, $s = 0$ if
  $c^K(u) + t \in \{0, 1, \dots, K\}$ and $s = 1$ otherwise.  Since
  \(\mathit{c}^{K}(u(t\cdot a))=(\mathit{c}^{K}(u)+t)s\) we deduce
  that
  \(([\epsilon]_{\approx},0)
  \rightsquigarrow^{u}([u]_{\approx},\mathit{c}^{K}(u))
  \rightsquigarrow^{(t\cdot a)}([u(t\cdot
  a)]_{\approx},\mathit{c}^{K}(u(t\cdot a)))\).  Finally, since
  \(\mathord{\approx}\) is \(L\)-preserving, a word is in \(L\) if{}f
  \(\mathcal{A}_{\approx}\) accepts it.
\end{proof}

We now look at the reverse question of obtaining a monotonic
equivalence from an automaton.  Given a complete strict 1-IRDTA $\Bb$,
we define an equivalence $\approx_\Bb$ as $u \approx_\Bb v$ if $\Bb$
reaches the same (control) state on reading \(u\) and \(v\) from its
initial configuration.  If $K$ is the maximum constant appearing in
$\Bb$, it is tempting to think that $\approx_\Bb$ is a $K$-monotonic
equivalence.  However $\approx_\Bb$ need not satisfy condition \ma of
Definition~\ref{def:monotonicity}.  For instance, consider a strict
1-IRDTA $\Bb$ which has two self-looping transitions in its initial
state: $q \xra{a, x=0, 0} q$ and $q \xra{a, 0 < x < 1, 1} q$.  Observe
that $(0 \cdot a) \approx_\Bb (0.5 \cdot a)$, but
$c^1((0 \cdot a)) \neq c^1((0.5 \cdot a))$.  Therefore, the
state-based equivalence $\approx_\Bb$ needs to be further refined in
order to satisfy monotonicity.  This leads us to define an equivalence
$\approx_\Bb^K$ as: $u \approx_\Bb^K v$ if $u \approx_\Bb v$ and
$c^K(u) = c^K(v)$.

\begin{lemma}\label{lem:automaton-refined-to-monotone}
  Let $\Bb$ be a complete strict 1-IRDTA with maximum constant $K$.
  The equivalence $\approx_\Bb^K$ is $\Ll(\Bb)$-preserving,
  $K$-monotonic and has finite index.
\end{lemma}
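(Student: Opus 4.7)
The plan is to establish the three properties of \(\approx_\Bb^K\) in turn: \(\Ll(\Bb)\)-preservation is immediate from determinism, finite index is a counting argument, and \(K\)-monotonicity is the technical core and rests crucially on strictness of \(\Bb\).

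For \(\Ll(\Bb)\)-preservation, note that \(u \approx_\Bb^K v\) implies \(u \approx_\Bb v\), so the deterministic automaton \(\Bb\) reaches the same control state on \(u\) and \(v\) from its initial configuration; acceptance depends only on that state, giving \(u \in \Ll(\Bb) \iff v \in \Ll(\Bb)\). For finite index, each class of \(\approx_\Bb^K\) is determined by a pair (reached state, \(K\)-region of \(\ck\)); there are at most \(|Q|\) such states and \(2K+2\) regions \(\{0\}, (0,1), \{1\}, \ldots, \{K\}, (K, \infty)\), yielding at most \(|Q| \cdot (2K+2)\) classes.

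For \(K\)-monotonicity, suppose \(u \approx_\Bb^K v\), so both words reach a common state \(q\) in \(\Bb\) and \(\ck(u)\equiv^K \ck(v)\); condition \ma is then direct. For condition \mb, fix \(a\in\Sigma\) and \(t, t' \in \mathbb{R}_{\ge 0}\) with \(\ck(u)+t \equiv^K \ck(v)+t'\). The critical observation is that guards of a strict 1-IRDTA are exactly the constraints \(\phi([r]_{\equiv^K})\) for some \(K\)-region, so any two clock values in the same \(K\)-region satisfy exactly the same guards of \(\Bb\). By completeness and determinism, there is a unique outgoing \(a\)-transition \(\theta = (q, q', a, g, s)\) from \(q\) whose guard \(g\) is satisfied by both \(\ck(u)+t\) and \(\ck(v)+t'\); thus \(\Bb\) reaches the same target state \(q'\) on both \(u(t\cdot a)\) and \(v(t'\cdot a)\). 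To conclude \(u(t\cdot a) \approx_\Bb^K v(t'\cdot a)\) I would case-split on \(g\): if \(g\) is an equality \(x=m\) (integer region), then both extended clock sums equal \(m\) exactly, and strictness forces a reset (\(s = 0\)), so both new \(\ck\)-values are \(0\); otherwise \(g\) is a non-equality region guard, strictness forbids reset (\(s = 1\)), and the new \(\ck\)-values are precisely \(\ck(u)+t\) and \(\ck(v)+t'\), which lie in the same \(K\)-region by hypothesis.

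The main obstacle is exploiting strictness cleanly in both directions: it forces the outgoing \(a\)-transition from \(q\) to be uniquely determined by the \(K\)-region of the current clock value (so the same transition is taken from \(u\) and \(v\)), and it synchronizes the reset behavior so that the extended \(\ck\)-values remain \(K\)-region equivalent. Without strictness, condition \mb could fail, precisely for the reason illustrated by the \((0\cdot a)\) vs. \((0.5\cdot a)\) counterexample immediately preceding the lemma.
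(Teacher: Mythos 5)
Your proof is correct and takes essentially the same route as the paper's, which compresses the entire argument into three sentences: preservation from determinism, finite index by counting states times $K$-regions, condition (a) by definition, and condition (b) from determinism together with strictness (your case split on whether the guard is an equality, synchronizing the resets, is exactly the content hiding behind the paper's ``since $\Bb$ is deterministic''). One remark: the paper literally defines $\approx_\Bb^K$ with exact equality $c^K(u)=c^K(v)$, under which finite index and condition (b) would both fail; your region-based reading $c^K(u)\equiv^K c^K(v)$ is the one the lemma requires and the one the surrounding text (the region-counting bound and the $K$-acceptor definition) clearly intends.
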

\begin{proof}
  The equivalence $\approx_\Bb^K$ is $\Ll(\Bb)$-preserving because
  equivalent words reach the same state in $\Bb$, thus either both are
  accepted or both are rejected.  It has finite index because the
  number of its equivalence classes is bounded by the number of states
  of \(\mathcal{B}\) multiplied by the number of \(K\) regions.
  Condition \textcolor{lipicsGray}{\sffamily\bfseries\upshape (a)} and
  \textcolor{lipicsGray}{\sffamily\bfseries\upshape (b)} of
  Def.~\ref{def:monotonicity} respectively hold by definition of
  $\approx_\Bb^K$ and since \(\mathcal{B}\) is deterministic.
\end{proof}

The goal of the section was to go from equivalences to automata and
back.  Lemma~\ref{lem:monotone-to-automaton} talks about
equivalence-to-automata.  For the automata-to-equivalence, we needed
to strengthen the state-based equivalence with the region equivalence.
A close look at $\Aa_{\approx}$ of Definition~\ref{def:Aapprox}
reveals whenever $u \approx v$, we also have $\ck(u) \equiv^K \ck(v)$.
So, in the equivalence-to-automata, we get an automaton satisfying a
stronger property.  This motivates us to explicitly highlight a class
of strict 1-IRDTAs where each state can be associated with a unique
region.  For this class, we will be able to go from
equivalence-to-automata-and-back directly.

\begin{definition}[$K$-acceptor]
  A $K$-acceptor $\Bb$ is a complete strict 1-IRDTA with maximum
  constant smaller than or equal to $K$ such that for every
  $u, v \in \tss$, $u \approx_\Bb v$ implies $c^K(u) \equiv^K c^K(v)$.
  Hence every state $q$ of $\Bb$ can be associated to a unique
  $K$-region, denoted, \(\mathit{region}(q)\), i.e., whenever
  \((q_{I},0) \rightsquigarrow^{w} (q, \mathit{c}^{K}(w))\) then
  \(\mathit{c}^{K}(w)\in \mathit{region}(q)\).
\end{definition}

Every strict 1-IRDTA $\Bb$ with maximum constant $K$ can be converted
into a $K$-acceptor by starting with the equivalence $\approx_\Bb^K$
and building $\Aa_{\approx_\Bb^K}$.  Furthermore $K$-monotonic
equivalences characterize $K$-acceptors
(Lemmas~\ref{lem:monotone-to-automaton} and
\ref{lem:automaton-refined-to-monotone}).  Therefore, for the rest of
the document, we will restrict our focus to $K$-acceptors.  As a next
task, we look for the coarsest possible $K$-monotonic equivalence for
a language.  This will give a minimal $K$-acceptor for the language,
which we deem to be the canonical integer reset timed automaton, with
maximum constant $K$, for the language.

\section{A Nerode-style Equivalence}
\label{sec:nerode-style-equiv}

In the previous section, we have established generic conditions
required from an equivalence to construct a $K$-acceptor from it.  In
this section, we will present a concrete such equivalence: given a
language \(L\) definable by a \(1\)-IRTA with a maximum constant
\(K\in \mathbb{N}\) we define a \emph{syntactic equivalence}
\(\mathord{\approx^{L, K}}\subseteq
\mathord{\mathbb{T}\Sigma^{*}\times \mathbb{T}\Sigma^{*}}\).
``Syntactic'' means this equivalence is independent of a specific
representation of \(L\).  We then show that \(\approx^{L, K}\) is the
coarsest \(L\)-preserving and \(K\)-monotonic equivalence.

The idea for defining \(\approx^{L, K}\) is to identify two words
\(u\) and \(u'\) whenever
\(\mathit{c}^{K}(u) \equiv^K \mathit{c}^{K}(u')\) and the residuals
\(u^{-1}L\) and \(u'^{-1}L\) coincide modulo some rescaling
w.r.t. \(\mathit{c}^{K}(u)\) and \(\mathit{c}^{K}(u')\).  We start
with examples to give some intuition behind the rescaling function.

\subparagraph{Examples.}  Consider an automaton segment
$q_0 \xra{a, 0 < x < 1, 1} q_1 \xra{b, x = 1, 0} q_2 \xra{c, 0 < x <
  1, 1} q_3$, with $q_3$ being an accepting state.  The words
$u = (0.2 \cdot a)$ and $v = (0.6 \cdot a)$ both go to state $q_1$.
Let us assume that all other transitions go to a sink state, and also
that the maximum constant $K = 1$.  The language $L$ accepted is
$\{ (t_1 \cdot a) (t_2 \cdot b) (t_3 \cdot c) \}$ where $0 < t_1 < 1$,
$t_1 + t_2 = 1$ and $0 < t_3 < 1$.  Moreover,
$u^{-1}L = \{ (0.8 \cdot b) (t_3 \cdot c) \mid 0 < t_3 < 1 \}$ and
$v^{-1}L = \{(0.4 \cdot b) (t_3 \cdot c) \mid 0 < t_3 <1 \}$.  Here we
want to somehow ``equate'' the residual languages $u^{-1}L$ and
$v^{-1} L$.  The idea is to define a bijection between these two sets
$u^{-1}L$ and $v^{-1}L$.  In this case, the bijection maps
$(0.8 \cdot b) ( t_3 \cdot c)$ to $(0.4 \cdot b) (t_3 \cdot c)$ for
every $t_3$.  Observe that given $u, v$ the bijection depends on the
values $\ck(u)$ and $\ck(v)$, which in this example are $0.2$ and
$0.6$ respectively.

Here is another example.  Let $u_1, v_1$ be words with
$\ck(u_1) = 0.2$ and $\ck(v_1) = 0.6$.  Let
$u_1^{-1}L = \{ (t_1 \cdot a) (t_2 \cdot b) (t_3 \cdot c) (t_4 \cdot
d) \mid \ck(u_1) + t_1 + t_2 + t_3 = 1 \}$ and
$v_1^{-1}L = \{ (t'_1 \cdot a) (t'_2 \cdot b) (t'_3 \cdot c) (t'_4
\cdot d) \mid \ck(v_1) + t'_1 + t'_2 + t'_3 = 1 \}$.  The bijection in
this case is more complicated than the previous example.  The idea is
to first start with $\ck(u_1) = 0.2$, $\ck(v_1) = 0.6$ and consider a
bijection $f$ of the open unit interval $(0, 1)$ that maps the
intervals $(0, 0.8]$ and $(0.8, 1)$ to $(0, 0.4]$ and $(0.4, 1)$
respectively.  This bijection is essentially a rescaling of the
intervals $(0, 1-\ck(u_1)]$ and $(1 - \ck(u_1), 1)$ into the intervals
$(0, 1 - \ck(v_1)])$ and $(1 - \ck(v_1), 1)$.  We now pick the first
letters in the residual languages $u_1^{-1}L$ and $v^{-1}L$ and create
a mapping: $(t_1 \cdot a) \mapsto (f(t_1) \cdot a)$.  Now we consider
$\ck(u_1 (t_1 \cdot a))$ and $\ck(v_1 (f(t_1) \cdot a))$ in the place
of $\ck(u_1), \ck(v_1)$, and continue the mapping process one letter
at a time.

\subparagraph{Rescaling function.}  We will now formalize this idea.
We will start with bijections of the open unit interval.

Let $\lm, \lm' \in (0, 1)$ be arbitrary real values.  Define a
bijection $f_{\lm \to \lm'}: (0, 1) \to (0, 1)$ that scales the
$(0, \lm]$ interval to $(0, \lm']$ and the $(\lm, 1)$ interval to
$(\lm', 1)$:
\begin{align*}
  f_{\lm \to \lm'}(t) = \begin{cases}
    \left( \dfrac{\lm'}{\lm} \right) t             & \text{ for } 0 < t \le \lm \\
    \lm' + \dfrac{(1 - \lm')}{(1 - \lm)} (t - \lm) & \text{ for } \lm < t
    < 1
  \end{cases}
\end{align*}

Now, consider $x, x' \in \mathbb{R}_{\ge 0}$ such that
$x \equiv^K x'$.  We define a \emph{length-preserving} bijection
$\tau_{x \to x'}: \mathbb{T} \to \mathbb{T}$ inductively as follows:
for the empty sequence $\epsilon$, we define
$\tau_{x \to x'}(\epsilon) = \epsilon$; for a timestamp
$d \in \mathbb{T}$ and a $t \in \mathbb{R}_{\ge 0}$,
$\tau_{x \to x'}(d t) = d' t'$ where $d' = \tau_{x \to x'}(d)$ and
$t'$ is obtained as follows: let $y = \ck(xd)$ and $y' = \ck(x'd')$.
If $y, y' \in \mathbb{N}$ or $y, y' >K$, then define $t' = t$.  Else,
define $\lfloor t' \rfloor = \lfloor t \rfloor$ and
$\{t'\} = f_{(1-\{y\}) \to (1-\{y'\})}(\{t\})$.

Here is an additional notation, before we describe some properties of
the rescaling function.  For $x_1, x_2, x_3 \in \mathbb{R}_{\ge 0}$
with $x_1 \equiv^K x_2 \equiv^K x_3$, we denote the composed function
$(\tau_{x_2 \to x_3}) \circ (\tau_{x_1 \to x_2})$ as
$\tau_{x_1 \to x_2 \to x_3}$.  So,
$\tau_{x_1 \to x_2 \to x_3}(t) = \tau_{x_2 \to x_3}(\tau_{x_1 \to
  x_2}(t))$.

\begin{lemma}\label{lem:bijection-properties}
  The bijection $\tau_{x \to x'}$ satisfies the following properties:
  \begin{enumerate}
  \item for an arbitrary timestamp $t_1 t_2 \dots t_n \in \mathbb{T}$,
    if $\tau_{x \to x'}(t_1 t_2 \dots t_n) = t'_1 t'_2 \dots t'_n$
    then, we have
    $\ck(x + t_1 + \dots + t_{n-1})+t_{n} \equiv^K \ck(x' + t'_1 +
    \dots + t'_{n-1})+t'_{n}$,
  \item $\tau_{x \to x'}^{-1}$ is identical to $\tau_{x' \to x}$,
  \item $\tau_{x_1 \to x_2 \to x_3}$ is identical to
    $\tau_{x_1 \to x_3}$.
  \end{enumerate}
\end{lemma}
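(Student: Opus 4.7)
My plan is to prove all three properties by induction on the length \(n\) of the input timestamp, leveraging the recursive definition of \(\tau_{x \to x'}\).

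For property (1), I would first strengthen it into the auxiliary invariant that for every prefix length \(i \le n\), if \(\tau_{x \to x'}(t_1 \cdots t_i) = t'_1 \cdots t'_i\), then \(\ck(x\, t_1 \cdots t_i) \equiv^K \ck(x'\, t'_1 \cdots t'_i)\). The base case \(i = 0\) is the assumption \(x \equiv^K x'\). For the inductive step, assume the invariant holds at \(i\), set \(y = \ck(x\, t_1\cdots t_i)\), \(y' = \ck(x'\, t'_1\cdots t'_i)\), and consider the definition of \(t'_{i+1}\). The three cases of the definition of \(\tau\) (both in \(\mathbb{N}\), both \(>K\), or else use \(f_{(1-\{y\}) \to (1-\{y'\})}\)) are precisely tailored so that \(y + t_{i+1}\) and \(y' + t'_{i+1}\) lie in the same \(K\)-region: the piecewise split of \(f\) at \(\lambda = 1-\{y\}\) corresponds exactly to whether adding \(t_{i+1}\) crosses the next integer boundary and thus whether \(i+1\) becomes an integral position. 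Once \(y + t_{i+1} \equiv^K y' + t'_{i+1}\), the value of \(\ck\) at position \(i+1\) is either \(0\) (when both sums are integers in \([0,K]\)) or equals the sum, consistently on both sides. Property (1) then follows as the special case at the last position.

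Properties (2) and (3) I would prove by direct induction on \(n\), using two simple calculational facts about the unit-interval bijections. For (2), \(f_{\lambda \to \lambda'}^{-1} = f_{\lambda' \to \lambda}\), verified by inspecting the two linear pieces. Then in the inductive step for \(\tau_{x' \to x}(\tau_{x \to x'}(dt)) = dt\), the quantities \(y = \ck(xd)\) and \(y' = \ck(x'd')\) used to produce \(t'\) are the same pair (with roles swapped) used when inverting, by the inductive hypothesis \(\tau_{x' \to x}(d') = d\); so \(f_{(1-\{y'\}) \to (1-\{y\})}(\{t'\}) = \{t\}\) and integer parts agree trivially. For (3), the analogous piecewise identity \(f_{\lambda_2 \to \lambda_3} \circ f_{\lambda_1 \to \lambda_2} = f_{\lambda_1 \to \lambda_3}\) is a direct computation (each linear piece composes cleanly because the breakpoints \(\lambda_1, \lambda_2, \lambda_3\) are the matched images). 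Combined with the invariant from property (1) — which guarantees that the intermediate clock values \(\ck(x_2 d_2)\) agree \(\equiv^K\) with both \(\ck(x_1 d_1)\) and \(\ck(x_3 d_3)\) so the three bijections select compatible cases — the three-way composition chains consistently.

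The main obstacle is the invariant inside property (1): namely, showing that the branches chosen for \(t'_{i+1}\) (integer case, above-\(K\) case, or interior case via \(f\)) always match on the primed and unprimed sides, and that when they differ in whether a new integral position is formed, the rescaling \(f_{(1-\{y\})\to(1-\{y'\})}\) correctly synchronizes the event. This hinges on the precise choice of \(\lambda = 1-\{y\}\) as the split point of \(f\): \(\{t\} \le 1-\{y\}\) iff \(y + t\) stays below the next integer, which happens iff \(\{t'\} \le 1-\{y'\}\) iff \(y' + t'\) stays below the next integer. Once this case analysis is laid out carefully, the remaining verifications are routine arithmetic on fractional parts.
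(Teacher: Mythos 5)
The paper does not actually include a proof of this lemma (it is stated without proof and not revisited in the appendix), so there is nothing to compare against; judged on its own, your argument is correct and is the natural way to establish the statement. Your strengthened invariant $\ck(x\,t_1\cdots t_i) \equiv^K \ck(x'\,t'_1\cdots t'_i)$ is exactly the right thing to carry through the induction, and the decisive observation --- that the breakpoint $\lambda = 1-\{y\}$ of $f$ marks precisely whether $\{y\}+\{t\}$ reaches the next integer, so that $y+t$ and $y'+t'$ land in the same $K$-region and position $i{+}1$ becomes an integral position on both sides simultaneously --- is the whole content of property (1); the preservation step then only needs that $z \equiv^K z'$ with $z \in \{0,\dots,K\}$ forces $z'=z$. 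The only points worth writing out explicitly in a full proof are that the $\equiv^K$ invariant rules out the mixed cases in the branch selection (one side integral or above $K$, the other not), and the boundary value $\{t\}=0$, where $f$ as defined on $(0,1)$ must be extended by $f(0)=0$.
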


The rescaling function $\tau_{x \to x'}$ can be naturally extended to
timed words:
$\tau_{x \to x'}( (t_1 \cdot a_1) (t_2 \cdot a_2) \dots (t_n \cdot
a_n) ) = (t'_1 \cdot a_1) (t'_2 \cdot a_2) \dots (t'_n \cdot a_n)$
where $t'_1 t'_2 \dots t'_n = \tau_{x \to x'}(t_1 t_2 \dots t_n)$.
The next observation follows from Property \mone of
Lemma~\ref{lem:bijection-properties}.

\begin{lemma}\label{lem:rescaling-function-and-autamaton-runs}
  Let $\Bb$ be a $K$-acceptor, and $q$ a control state of $\Bb$.  Let
  $x, x' \in \mathbb{R}_{\ge 0}$ such that $x \equiv^K x'$.  Then, for
  every timed word $w$: \ $w \in \mathcal{L}(q, x)$ if{}f
  $\tau_{x \to x'}(w) \in \mathcal{L}(q, x')$.
\end{lemma}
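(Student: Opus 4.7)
The plan is to prove the biconditional by induction on the length of $w$, maintaining the invariant that a run of $\Bb$ from $(q, x)$ on $w$ is accepting if and only if the run from $(q, x')$ on $\tau_{x \to x'}(w)$ is accepting. The base case $w = \e$ is immediate: both memberships reduce to $q \in F$. For the inductive step, write $w = (t \cdot a)\, w'$ and let $\tau_{x \to x'}(w) = (t' \cdot a)\, w''$, where $t'$ is the first timestamp produced by $\tau_{x \to x'}$.

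The key observation is that since $\Bb$ is a $K$-acceptor, hence a complete, deterministic, strict 1-IRDTA with constants at most $K$, every transition guard $\phi$ out of $q$ labeled $a$ has $\denote{\phi}$ equal to exactly one $K$-region. By property~(1) of Lemma~\ref{lem:bijection-properties}, we have $x + t \equiv^K x' + t'$, so $x+t$ and $x'+t'$ satisfy exactly the same guard $\phi$; by completeness and determinism, both runs take the same transition $\theta = (q, p, a, \phi, r) \in T$ and move to the same control state $p$. Strictness of $\Bb$ ties the reset bit to the form of $\phi$: if $\phi$ is an equality then $r = 0$ and both new clock values are $0$; otherwise $r = 1$ and the new clock values $x + t,\ x' + t'$ are region-equivalent. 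Let $y,\ y'$ denote these new clock values, so that $y \equiv^K y'$.

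To close the induction, I need the ``tail'' identity $w'' = \tau_{y \to y'}(w')$, which then lets me invoke the inductive hypothesis at state $p$ with configurations $(p, y)$ and $(p, y')$. This is the main technical point, because $\tau_{x \to x'}$ is defined recursively on \emph{prefixes} of a timestamp whereas the run induction peels off the \emph{first} letter instead. I would establish the identity by a side induction on the length of $w'$, showing that the parameters that $\tau$ uses when generating subsequent timestamps—namely $\ck$ of the already-read initial segment prefixed by $x$ (resp.\ $x'$)—agree with $y$ (resp.\ $y'$); property~(3) of Lemma~\ref{lem:bijection-properties} then supplies the composition identity needed to re-associate the rescaling step by step. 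Strictness makes the integral position introduced by $(t \cdot a)$ correctly reset $\ck$ to $0$ in the case $r = 0$, matching the updated clock. The main obstacle is precisely this re-association bookkeeping; once it is in place, everything else is forced by property~(1) of Lemma~\ref{lem:bijection-properties} together with the $K$-acceptor structure.
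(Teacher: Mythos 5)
Your overall strategy can be made to work, but it takes a structurally different and more laborious route than the paper, and the one step on which everything hinges is exactly the step you leave as a sketch. The paper derives the lemma in one line from Property 1 of Lemma~\ref{lem:bijection-properties}, by running the two runs in parallel from left to right: by input-determinism of strict 1-IRDTAs, after reading the first $i-1$ letters the clock values are $\ck(x t_1\cdots t_{i-1})$ and $\ck(x' t'_1\cdots t'_{i-1})$; Property 1 --- which is stated for \emph{every} prefix length $n$, not just $n=1$ --- says that adding $t_i$, resp.\ $t'_i$, lands in the same $K$-region; since the guards of a $K$-acceptor are exactly regions and $\Bb$ is deterministic and complete, the same transition (hence the same target control state) is taken at every step, so the two runs end in the same state and acceptance coincides. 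No re-rooting of $\tau$ is ever needed in that argument.

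Your head-recursion instead requires the tail identity $w'' = \tau_{y\to y'}(w')$, and here there is a genuine gap: Property 3 of Lemma~\ref{lem:bijection-properties} does not deliver it. Property 3 composes two rescalings $\tau_{x_1\to x_2}$ and $\tau_{x_2\to x_3}$ acting on the \emph{same} timestamps from the \emph{same} root, whereas the tail identity relates the restriction of $\tau_{x\to x'}$ to suffixes of $w$ with a rescaling $\tau_{y\to y'}$ rooted at the \emph{new} clock values. What you actually need is a prefix-collapse property of $\ck$, namely $\ck(\,\ck(x t_1)\, t_2\cdots t_j\,) = \ck(x\, t_1\cdots t_j)$ for all $j$ (and its primed analogue), after which unfolding the recursive definition of $\tau$ shows both sides generate the same subsequent timestamps; this is true but is not among the three listed properties and must be proved separately. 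A second, minor imprecision: for $n=1$ Property 1 gives $\ck(x)+t \equiv^K \ck(x')+t'$, which coincides with your claim $x+t \equiv^K x'+t'$ only when $\ck(x)=x$; this holds for clock values reachable in a strict 1-IRDTA but deserves a remark. The cleanest fix is to abandon the re-rooting altogether and run the induction from the left as the paper does.
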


\subparagraph*{Syntactic equivalence.}  For timed words
$u, v \in \tss$ such that $\ck(u) \equiv^K \ck(v)$, we write
$\tau_{u \to v}$ for the bijection $\tau_{\ck(u) \to \ck(v)}$.  We now
present the main equivalence.

\begin{definition}[Equivalence
  $\approx^{L,K}$] \label{def:main-equivalence} Let $L$ be a timed
  language and $K$ a natural number.  We say \(u\approx^{L, K} v\) if
  $\ck(u) \equiv^K \ck(v)$ and \(\tau_{u \to v}(u^{-1}L)=v^{-1}L\).
\end{definition}

Note that the equivalence \(\approx^{L, K} \) is \(L\)-preserving.
Assume \(u\approx^{L, K} v\).  We have
\(u\in L\iff \epsilon \in u^{-1}L\) and
\(\epsilon \in u^{-1}L \iff \epsilon\in v^{-1}L\) since
\(\tau_{u \to v}(\epsilon)=\epsilon\) by definition.  Thus,
\(u\in L \iff v\in L\).

\begin{proposition}
  \label{prop:syntactic}
  When \(L\subseteq \mathbb{T}\Sigma^{*}\) is definable by a
  $K$-acceptor, then \(\approx^{L,K}\) has finite index and is
  \(K\)-monotonic.  Moreover, \(\approx^{L,K}\) is the coarsest
  \(K\)-monotonic and \(L\)-preserving equivalence.
\end{proposition}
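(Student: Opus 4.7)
The plan is to prove the three claims in the order: $K$-monotonicity, then the coarsest property, and finally deduce finite index from the coarsest property applied to a concrete $K$-acceptor for $L$. The $L$-preservation of $\approx^{L,K}$ is already noted in the text just before the proposition, so nothing more is needed there. For $K$-monotonicity, condition (a) of Definition~\ref{def:monotonicity} is built into the definition of $\approx^{L,K}$. For condition (b), starting from $u \approx^{L,K} v$ and $\ck(u) + t \equiv^K \ck(v) + t'$, one checks $\ck(u(t \cdot a)) \equiv^K \ck(v(t' \cdot a))$ by a short calculation on integral positions, and then derives the rescaled residual identity $\tau_{u(t \cdot a) \to v(t' \cdot a)}\bigl((u(t \cdot a))^{-1}L\bigr) = (v(t' \cdot a))^{-1}L$ by peeling the first letter off words of $u^{-1}L$ and composing rescalings using property~3 of Lemma~\ref{lem:bijection-properties}.

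For the coarsest property, let $\sim$ be an arbitrary $L$-preserving $K$-monotonic equivalence, and suppose $u \sim v$. Condition (a) of monotonicity of $\sim$ gives $\ck(u) \equiv^K \ck(v)$, so $\tau_{u \to v}$ is well-defined; it remains to show $\tau_{u \to v}(u^{-1}L) = v^{-1}L$. I would proceed by induction on $|w|$, proving $uw \in L \iff v\,\tau_{u \to v}(w) \in L$. The base case $w = \epsilon$ is $L$-preservation of $\sim$. For $w = (t \cdot a) w'$, let $(t' \cdot a)$ be the first letter of $\tau_{u \to v}(w)$; property~1 of Lemma~\ref{lem:bijection-properties} yields $\ck(u) + t \equiv^K \ck(v) + t'$, condition (b) of monotonicity of $\sim$ then gives $u(t \cdot a) \sim v(t' \cdot a)$, and the induction hypothesis applied to $w'$ against this new pair closes the step. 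With the coarsest property in hand, finite index follows: Proposition~\ref{prop:R_plus} yields a strict $1$-IRDTA for $L$ with constants at most $K$, which is converted into a $K$-acceptor $\mathcal{B}$; Lemma~\ref{lem:automaton-refined-to-monotone} then says $\approx^K_{\mathcal{B}}$ is finite-index, $L$-preserving and $K$-monotonic, and by the coarsest property $\approx^{L,K}$ has at most as many classes, hence is of finite index.

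The step I expect to be the main obstacle is precisely the bookkeeping needed to apply the induction hypothesis in the coarsest argument: one has to identify the tail of $\tau_{u \to v}((t \cdot a) w')$ with $\tau_{u(t \cdot a) \to v(t' \cdot a)}(w')$, and this ``prefix-decomposition'' of the rescaling function is not explicitly among the three items stated in Lemma~\ref{lem:bijection-properties}. It should follow from unpacking the inductive definition of $\tau$, together with the elementary identity $\ck(x\,d) \equiv^K \ck(\ck(x)\,d)$ for a real $x$ and timestamp $d$, but the verification requires a case split on whether intermediate clock values are integers, non-integers in $(0,K)$, or strictly greater than $K$. Once this decomposition is established, the rest of the proof is routine and the three claims assemble as described.
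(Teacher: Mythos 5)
Your treatment of the coarsest property and of finite index is sound, and the coarsest part takes a genuinely different route from the paper: instead of packaging a competing equivalence $\sim$ into the automaton $\Aa_\sim$ and invoking Lemma~\ref{lem:rescaling-function-and-autamaton-runs}, you run a direct induction on $|w|$ showing $uw \in L \iff v\,\tau_{u\to v}(w) \in L$, using only conditions (a), (b) and $L$-preservation of $\sim$. This is cleaner in one respect: it does not require $\sim$ to have finite index, whereas the paper's detour through $\Aa_\sim$ formally does, since Definition~\ref{def:Aapprox} builds a finite automaton. The prefix-decomposition identity $\tau_{u \to v}((t\cdot a)w') = (t'\cdot a)\,\tau_{u(t\cdot a)\to v(t'\cdot a)}(w')$ with $t' = \tau_{u\to v}(t)$ that you flag as the main obstacle is exactly what the paper asserts as \eqref{eq:4} (``by the way we have constructed the rescaling function''), so that step is not where the difficulty lies.

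The genuine gap is in the $K$-monotonicity argument, condition (b). Peeling the first letter off $\tau_{u\to v}(u^{-1}L) = v^{-1}L$ only yields $\tau_{u_1 \to v_2}(u_1^{-1}L) = v_2^{-1}L$ for the \emph{specific} extension $v_2 = v(t'_v\cdot a)$ with $t'_v = \tau_{u\to v}(t)$, whereas condition (b) quantifies over \emph{every} $t'$ with $\ck(u)+t \equiv^K \ck(v)+t'$, i.e., over $v_1 = v(t'\cdot a)$ with $t'$ anywhere in the corresponding region. To pass from $v_2$ to $v_1$ you must prove $\tau_{v_2\to v_1}(v_2^{-1}L) = v_1^{-1}L$; property 3 of Lemma~\ref{lem:bijection-properties} only identifies the composed bijection $\tau_{u_1\to v_2\to v_1}$ with $\tau_{u_1\to v_1}$ and says nothing about how $L$ behaves under $\tau_{v_2 \to v_1}$. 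That missing statement is false for arbitrary $L$, and it is precisely where the hypothesis that $L$ is definable by a $K$-acceptor must enter: in the paper's proof, determinism of the acceptor forces $v_1$ and $v_2$ to reach the same control state (because $\ck(v)+t_v \equiv^K \ck(v)+t'_v$), after which Lemma~\ref{lem:rescaling-function-and-autamaton-runs} gives the required residual identity. Your sketch never invokes the acceptor in the monotonicity part, which is the tell-tale sign that something is missing --- indeed the paper's own examples exhibit languages for which $\approx^{L,K}$ fails $K$-monotonicity, so the hypothesis cannot be dispensed with there.
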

\begin{proof}
  We start by showing that $\approx^{L, K}$ is $K$-monotonic.
  Condition \ma of Definition~\ref{def:monotonicity} holds by
  definition.  We move to \mb.  Let $t_u, t_v \in \mathbb{R}_{\ge 0}$
  s.t.  $\ck(u) + t_u \equiv^K \ck(v) + t_v$. Let:
  \begin{align}
    u_1 = u (t_u \cdot a) \quad v_1 & = v (t_v \cdot a)  \nonumber \\
    \text{ To show:} \quad \tau_{u_1 \to v_1}(u_1^{-1}
    L)                              & = v_1^{-1}L \label{eq:1}\end{align} Let $t'_v = \tau_{u \to v}(t_u)$ and $v_2 = v (t'_v \cdot a)$.
  We
  will prove \ref{eq:1} using these intermediate claims:
  \begin{claim}\label{claim:u1-to-v2}
    $\tau_{u_1 \to v_2}(u_1^{-1}L) = v_2^{-1}L$
  \end{claim}
  \begin{claim}\label{claim:v2-to-v1}
    $\tau_{v_2 \to v_1}(v_2^{-1} L) = v_1^{-1}L$ \end{claim} Hence:
  $\tau_{u_1 \to v_2 \to v_1}(u_1^{-1}L) = v_1^{-1}L$.  By
  Lemma~\ref{lem:bijection-properties}, we conclude
  $\tau_{u_1 \to v_1}(u_1^{-1}L) = v_1^{-1}L$.

  \emph{Proof of Claim~\ref{claim:u1-to-v2}.}  Let $w \in \tss$.  By
  definition of $u_1$, we have:
  \begin{align}\label{eq:2}
    w \in u_1^{-1}
    L \quad \text{ if{}f } \quad (t_u \cdot a) w \in u^{-1}L\end{align} Since $u \approx^{L,K} v$, we know that $\tau_{u \to v}(u^{-1}L) = v^{-1}L$.
  Therefore:
  \begin{align}\label{eq:3}
    (t_u \cdot a) w \in u^{-1}L \quad \text{ if{}f } \quad \tau_{u
    \to v}((t_u \cdot a) w) \in v^{-1}L
  \end{align}
  By the way we have constructed the rescaling function, we have
  \begin{align}\label{eq:4}
    \tau_{u \to v}( (t_u \cdot a) w) = (t'_v \cdot a) \tau_{u_1 \to
    v_2}(w)
  \end{align}
  Finally, by definition of $v_2$:
  \begin{align}\label{eq:5}
    (t'_v \cdot a) \tau_{u_1 \to v_2}(w) \in v^{-1}
    L \quad \text{ if{}f } \quad \tau_{u_1 \to v_2}(w) \in v_2^{-1}L\end{align} From \eqref{eq:2}, \eqref{eq:3}, \eqref{eq:4} and \eqref{eq:5}, we conclude $w \in u_1^{-1}L$ if{}f $\tau_{u_1 \to v_2}(w) \in v_1^{-1}L$ for an arbitrary timed word $w$.
  This proves the claim.

  \emph{Proof of Claim~\ref{claim:v2-to-v1}}.  Let $\Bb$ be a
  $K$-acceptor recognizing $L$, and let $q$ be the control state
  reached by $\Bb$ on reading word $v$.  We first claim that:
  \begin{align}\label{eq:6}
    \ck(v) + t_v \equiv^K \ck(v) + t_v'
  \end{align}
  This is because, by Lemma~\ref{lem:bijection-properties}, we have
  $\ck(u) + t_u \equiv^K \ck(v) + t'_v$, and by assumption, we have
  $\ck(u) + t_u \equiv^K \ck(v) + t_v$.  Since $\equiv^K$ is
  transitive, \eqref{eq:6} follows.

  The observation made in \eqref{eq:6} implies on elapsing $t_v$ or
  $t'_v$ from $v$, the same outgoing transition is enabled, as $\Bb$
  is deterministic.  Therefore $v_1 = v (t_v \cdot a)$ and
  $v_2 = v (t'_v \cdot a)$ reach the same control state $q'$.  Hence,
  \eqref{eq:6} can be read as $\ck(v_1) \equiv^K \ck(v_2)$.  By
  Lemma~\ref{lem:rescaling-function-and-autamaton-runs}, for any timed
  word $w$, we have $w \in v_2^{-1}L$ if{}f
  $\tau_{v_2 \to v_1}(w) \in v_1^{-1}L$.  The claim follows.

  We will now show that $\approx^{L,K}$ has finite index and is also
  the coarsest $K$-monotonic, $L$-preserving equivalence.  Let $\Bb$
  be a $K$-acceptor for $L$.  Recall that $u \approx^\Bb v$ if $\Bb$
  reaches the same control state on reading $u$ and $v$.  We will
  show:
  \begin{align}\label{eq:7}
    u \approx^\Bb v \quad \text{ implies } \quad u \approx^{L, K} v
  \end{align}
  This will immediately prove that $\approx^{L,K}$ has finite index.
  Secondly, for any $K$-monotonic, $L$-preserving equivalence
  $\approx$, we can build a $K$-acceptor $\Aa_{\approx}$
  (Lemma~\ref{lem:monotone-to-automaton}) whose equivalence is
  identical to $\approx$.  Thus, \eqref{eq:7} also shows that
  $\approx^{L,K}$ is the coarsest such equivalence.

  Proof of \eqref{eq:7} is as follows.  Let $q$ be the control state
  reached by $u$ and $v$ in $\Bb$, and let $x = \ck(u), x' = \ck(v)$.
  Since $\Bb$ is a $K$-acceptor, we also have $x \equiv^K x'$.  From
  Lemma~\ref{lem:rescaling-function-and-autamaton-runs},
  $\tau_{x \to x'}(\Ll(q, x)) = \Ll(q, x')$.  Hence, we deduce:
  $\tau_{u \to v}(u^{-1}L) = v^{-1}L$.  This proves
  $u \approx^{L,K} v$.
\end{proof}

The above proposition leads to the following Myhill-Nerode style
characterization for timed languages recognized by IRTA.

\begin{theorem}
  \label{thm:good-equivalence-0}
  \hspace{0pt}%
  \begin{alphaenumerate}
  \item \(L\subseteq \mathbb{T}\Sigma^{*}\) is a language definable by
    an IRTA if and only if there is a constant \(K\in \mathbb{N}\)
    such that \(\approx^{L,K}\) is \(K\)-monotonic and has finite
    index.
  \item \(\approx^{L,K}\) is coarser than any \(K\)-monotonic and
    \(L\)-preserving equivalence.
  \end{alphaenumerate}
\end{theorem}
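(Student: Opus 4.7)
The plan is to read off this theorem almost directly from the machinery already established in Sections~\ref{sec:from-equiv-autom} and~\ref{sec:nerode-style-equiv}, using Proposition~\ref{prop:syntactic} as the workhorse. I would first observe that part (b) is literally the second statement of Proposition~\ref{prop:syntactic}, so nothing remains to do there beyond a one-line citation.

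For part (a), I would handle the two directions separately. For the forward (``only if'') direction, suppose $L$ is accepted by some IRTA. The plan is to chain four reductions: (i) the result of~\cite{manasaIntegerResetTimed2010} cited in Section~\ref{sec:integer-resets} converts any IRTA into a $1$-IRDTA; (ii) Proposition~\ref{prop:R_plus} further converts it into a strict $1$-IRDTA with no larger constant in guards, call the maximum constant $K$; (iii) the construction sketched after the definition of $K$-acceptor, namely starting from $\approx_{\mathcal{B}}^{K}$ (by Lemma~\ref{lem:automaton-refined-to-monotone}) and building $\mathcal{A}_{\approx_{\mathcal{B}}^{K}}$ via Definition~\ref{def:Aapprox} and Lemma~\ref{lem:monotone-to-automaton}, yields a $K$-acceptor for $L$; (iv) Proposition~\ref{prop:syntactic} now applies and gives that $\approx^{L,K}$ is $K$-monotonic and has finite index.

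For the backward (``if'') direction, suppose there is a $K$ for which $\approx^{L,K}$ is $K$-monotonic and has finite index. I would invoke the observation made right after Definition~\ref{def:main-equivalence} that $\approx^{L,K}$ is automatically $L$-preserving. Hence the hypotheses of Lemma~\ref{lem:monotone-to-automaton} are met, and the constructed $\mathcal{A}_{\approx^{L,K}}$ is a strict $1$-IRDTA recognizing $L$; being a $1$-IRTA, it is in particular an IRTA, closing the equivalence.

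I do not really expect a serious obstacle since every ingredient is in place; the only subtlety to double-check is the bookkeeping of the constant $K$ along the reduction chain in the forward direction: Proposition~\ref{prop:R_plus} preserves the maximum constant, and the $K$-acceptor built from a strict $1$-IRDTA with maximum constant $K$ inherits the same $K$, so the same $K$ can be used when invoking Proposition~\ref{prop:syntactic}. This consistency is what ensures that the constant mentioned in the theorem statement actually exists.
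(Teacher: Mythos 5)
Your proposal is correct and follows essentially the same route the paper intends: the paper presents Theorem~\ref{thm:good-equivalence-0} as a direct consequence of Proposition~\ref{prop:syntactic} together with the reduction chain IRTA $\to$ strict $1$-IRDTA $\to$ $K$-acceptor (with the constant preserved, as the paper notes in the paragraph following the theorem) for the forward direction, and Lemma~\ref{lem:monotone-to-automaton} applied to the automatically $L$-preserving equivalence $\approx^{L,K}$ for the backward direction. Nothing further is needed.
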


The conversion from a general IRTA to a strict 1-IRTA does not
increase the constant.  Similarly, a strict 1-IRTA with maximum
constant $K$ can be converted to a $K$-acceptor.  Therefore, the
overall conversion from an IRTA to an acceptor preserves the constant.
From the second part of Theorem~\ref{thm:good-equivalence-0}, we
deduce that for a language $L$ that is recognized by an IRTA with
constant $K$, the $K$-acceptor $\Aa_{\approx^{L,K}}$ built from the
equivalence $\approx^{L,K}$ has the least number of states among all
$K$-acceptors recognizing $L$.  It is therefore a minimal automaton
among all $K$-acceptors for $L$.  We now present some examples that
apply this Myhill-Nerode characterization. 

\begin{example}
  Consider the language \(L=\{(x\cdot a) \mid x\in \mathbb{N}\}\).
  The right-and side of of Theorem~\ref{thm:good-equivalence-0}
  \textcolor{lipicsGray}{\sffamily\bfseries\upshape (a)} does not
  hold.  Indeed, there is no \(K\in \mathbb{N}\) such that
  \(\approx^{L,K}\) verifies the \(K\)-monotonicity
  \textcolor{lipicsGray}{\sffamily\bfseries\upshape (b)}: For every
  \(K \in \mathbb{N}\) we have
  \(\mathit{c}^{K}(\epsilon)+K+1\equiv^{K}\mathit{c}^{K}(\epsilon)+K+1.1\).
  Since \((K+1\cdot a)\in L\) and \((K+1.1\cdot a)\notin L\), by
  \(L\)-preservation, \((K+1\cdot a)\not\approx^{L,K}(K+1.1\cdot a)\).
  By Theorem~\ref{thm:good-equivalence-0}, \(L\) is not \(1\)-IRTA
  definable.
\end{example}
\begin{example}
  Consider the language \(L=\{(x\cdot a) (1\cdot b)\mid 0<x<1\}\).
  This language is accepted by a \(1\)-TA that reads \(a\) on a guard
  \(0 < x < 1\), resets the clock \(x\) and reads \(b\) at \(x = 1\).
  This is clearly not an IRTA.  We will once again see that
  \(K\)-monotonicity holds for no \(K\).  For \(u=(0.2\cdot a)\),
  \(\mathit{c}^{K}(u)+1\equiv^{K}\mathit{c}^{K}(u)+1.1\) holds for
  every constant \(K\in \mathbb{N}\).  Since \(u(1\cdot b)\in L\) and
  \(u(1.1\cdot b) \notin L\), we have
  \(u(1\cdot b)\not\approx^{L,K}u(1.1\cdot b)\).  Thus, there is no
  \(K\) such that \(\approx^{L,K}\) verifies \(K\)-monotonicity
  \textcolor{lipicsGray}{\sffamily\bfseries\upshape (b)}, hence \(L\)
  is not \(1\)-IRTA definable by Theorem~\ref{thm:good-equivalence-0}.
\end{example}
\begin{example}
  Consider the language
  \(M=\{u \in \mathbb{T}\Sigma^{*}\mid \mathit{c}^{1}(u)=0\}\) with
  alphabet \(\Sigma=\{a\}\) which has the following residual
  languages.  For \(u\in \mathbb{T}\Sigma^{*}\),
  \begin{align*}
    u^{-1}
    M & =M & \text{when \( \mathit{c}^{1}(u)=0\),} &  & u^{-1}M & =\{v\in \mathbb{T}\Sigma^{*}\mid \sigma(uv)=1\} \quad\text{ when \(0<\mathit{c}^{1}(u)<1\),} \\ u^{-1}M & =\emptyset & \text{when \(1<\mathit{c}^{1}(u)\).
                                                                                                                                                                                         }
  \end{align*}
  The equivalence classes for \(\approx^{M,1}\) are the following:
  \begin{align*}
    [\epsilon]_{\approx^{M,1}} & =M, & [({\textstyle\frac{3}{2}} \cdot a)]_{\approx^{M,1}} & =\{u\in \mathbb{T}\Sigma^{*}\mid 1<\mathit{c}^{1}(u)\}, & [({\textstyle\frac{1}{2}} \cdot a)]_{\approx^{M,1}} & =\{u\in \mathbb{T}\Sigma^{*}\mid0<\mathit{c}^{1}(u)<1\}.
  \end{align*}
  The acceptor \(\mathcal{A}_{\approx^{M,1}}\) is depicted in
  \figurename~\ref{fig:ta_R_int_not_I} where
  \([\epsilon]_{\approx^{M,1}}\) is the initial (and final) state,
  \([(\frac{3}{2}\cdot a)]_{\approx^{M,1}}\) the sink state, and
  \([(\frac{1}{2}\cdot a)]_{\approx^{M,1}}\) the rightmost state.
\end{example}

\begin{example}
	Consider the language \(L=\{(0\cdot a)^{n}(0\cdot b)^{n}\mid n \in \mathbb{N}\}\).
	There is no \(K\) such that \(\approx^{L,K}\) has finite index, although \(\approx^{L,0}\) is \(0\)-monotonic: for \(u\approx^{L,0}v\) with \(0 < \mathit{c}^0(u)\equiv^{0}\mathit{c}^0(v)\), no extension of \(u\) or $v$ belongs to \(L\) and hence monotonicity \textcolor{lipicsGray}{\sffamily\bfseries\upshape (b)} holds; pick \(u\) and $v$ with \(\mathit{c}^0(u) = \mathit{c}^0(v)=0\), and let \(t, t'\) such that \(\mathit{c}^0(u) + t \equiv^{0} \mathit{c}^0(v) + t'\).
	If \(t = 0\) then \(t' = 0\) and monotonicity holds.
	Suppose \(0 < t, t'\).
	Then \(u (t \cdot \alpha) \notin L\) and \(u (t' \cdot \alpha) \notin L\) for any letter \(\alpha\).
	Once again, \textcolor{lipicsGray}{\sffamily\bfseries\upshape (b)} holds.

	We now argue the infinite index, similar to the case of untimed languages.
	For distinct integers \(n\) and \(m\) we have \(((0\cdot a)^{n})^{-1}L \neq ((0\cdot a)^{m})^{-1}L\).
	Since \(\tau_{0 \to 0}\) is the identity we have \(\tau_{0 \to 0}(((0\cdot a)^{n})^{-1}L)\neq ((0\cdot a)^{m})^{-1}L\).
	Thus, \((0\cdot a)^{n} \not \approx^{L,K}(0\cdot a)^{m}\).
\end{example}
\begin{example}
	\label{example:canonical}
	Consider the language \(L=\{u \in \mathbb{T}\Sigma^{*}\mid t_1+\dots +t_n=1\}\) given in \figurename~\ref{fig:ta_I}.
	Given a timed word \(u=(t_1\cdot a_{1})\dots (t_n\cdot a_{n})\in \mathbb{T}\Sigma^{*}\) we denote by \(\sigma(u)\) the sum \(t_1+\dots +t_n\) of its timestamps.
	Also we have that \(\sigma(\epsilon) = 0\).
	The residual languages of \(L\) are the following.
	For \(u \in \mathbb{T}\Sigma^{*}\),
	\begin{align*}
		u^{-1}
		L & =\emptyset & \text{when \(1<\sigma(u)\),} &  & u^{-1}L & =\{v\in \mathbb{T}\Sigma^{*}\mid \sigma(v)=0\} &  & \text{when \(\sigma(u)=1\),} \\ u^{-1}L & =L & \text{when \( \sigma(u)=0\),} &  & u^{-1}L & =\{v\in \mathbb{T}\Sigma^{*}\mid \sigma(u)+ \sigma(v)=1\} &  & \text{when \(0<\sigma(u)<1\).
			   }
	\end{align*}
	The equivalence
	classes for \(\approx^{L,1}\) are:
	\begin{align*}
		[(1\cdot a)]_{\approx^{L,1}} & =L,                                                     & [(1\cdot a)({\textstyle\frac{1}{2}}\cdot a)]_{\approx^{L,1}} & =\{u\in \mathbb{T}\Sigma^{*}\mid0<\mathit{c}^{1}(u)<1\wedge 1< \sigma(u)\}, \\
		[\epsilon]_{\approx^{L,1}}   & =\{u\in \mathbb{T}\Sigma^{*}\mid \sigma(u)=0\},         & [({\textstyle\frac{1}{2}}\cdot a)]_{\approx^{L,1}}           & =\{u\in \mathbb{T}\Sigma^{*}\mid0<\mathit{c}^{1}(u), \sigma(u)<1\},         \\
		[(2\cdot a)]_{\approx^{L,1}} & =\{u\in \mathbb{T}\Sigma^{*}\mid 1<\mathit{c}^{1}(u)\}, & [(1\cdot a)(1\cdot a)]_{\approx^{L,1}}                       & =\{u\in \mathbb{T}\Sigma^{*}\mid \mathit{c}^{1}(u)=0 \wedge 1<\sigma(u)\}.
	\end{align*}
	The acceptor \(\mathcal{A}_{\approx^{L,1}}\) is depicted in \figurename~\ref{fig:ta_L_R} and has six states, whereas the (strict) \(1\)-IRDTA given in Example~\ref{example:strict} for \(L\) only has two.
\end{example}
\begin{figure}
	\centering
	\begin{tikzpicture}[shorten >=1pt,node distance=6cm,on grid,auto,initial text={}]
		\node[state,initial]    (epsilon) at (0,0)  {\(\epsilon\)};
		\node[state]            (s) at (4,-2.3)     {\((\frac{1}{2}\cdot a)\)};
		\node[state, accepting] (q) at (7.5,0)      {\((1\cdot a)\)};
		\node[state]            (2) at (4,0)        {\((2\cdot a)\)};
		\node[state]            (z) at (11,-2.3)    {\((1{\cdot} a)(\frac{1}{2}{\cdot} a)\)};
		\node[state]            (r) at (11,2.3)     {\((1{\cdot} a)(1{\cdot} a)\)};
		\path[->]
			(epsilon) edge [loop above]         node                             {\(a,x=0,0\)} ()
			(epsilon) edge node [below, sloped]             {\(a,0<x<1,1\)} (s)
			(epsilon) edge [bend left=40]       node [above]                     {\(a,x=1,0\)} (q)
			(epsilon) edge node                             {\(a,1<x,1\)} (2)
			(s)       edge [loop left]          node [below left]                {\(a,0<x<1,1\)} ()
			(s)       edge node [sloped, near start]        {\(a,x=1,0\)} (q)
			(s)       edge node                             {\(a,1<x,1\)} (2)
			(q)       edge [loop right]         node [right]                     {\(a,x=0,0\)} ()
			(q)       edge [right]              node [sloped, above]             {\(a,0<x<1,1\)} (z)
			(q)       edge [right]              node [sloped, above]             {\(a,x=1,0\)} (r)
			(q)       edge [above]              node                             {\(a,1<x,1\)} (2)
			(2)       edge [loop above]         node [left, near start]          {\(a,1<x,1\)} ()
			(r)       edge [loop right]         node [above, near start]         {\(a,x=0,0\)} ()
			(r)       edge [bend left =50]      node [sloped]                    {\(a,0<x<1,1\)} (z)
			(r)       edge [loop left]          node [above, near end]           {\(a,x=1,0\)} ()
			(r)       edge [above]              node [sloped]                    {\(a,1<x,1\)} (2)
			(z)       edge [loop right]         node [right=.2cm,below,near end] {\(a,0<x<1,1\)} ()
			(z)       edge node [sloped, below]             {\(a,x=1,0\)} (r)
			(z)       edge [above]              node [sloped, near start]        {\(a,1<x,1\)} (2);
	\end{tikzpicture}
	\caption{A strict \(1\)-IRDTA \(\mathcal{A}_{\approx^{L,1}}\) accepting \(L=\{u \in \mathbb{T}\Sigma^{*}\mid \sigma(u)=1\}\).}
	\label{fig:ta_L_R}
      \end{figure}
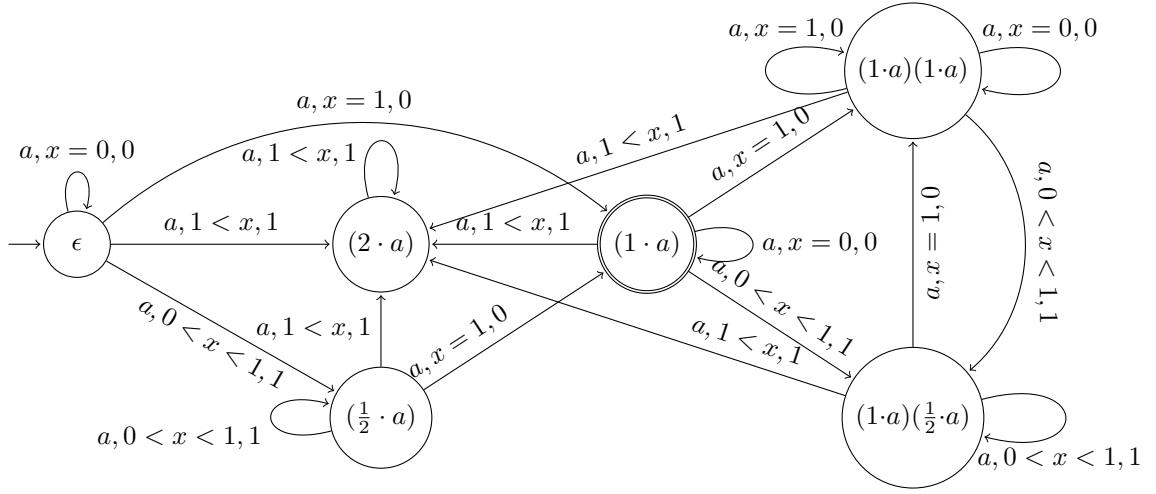

\section{The Canonical Form}
\label{sec:canonical-form}

Sections~\ref{sec:from-equiv-autom} and \ref{sec:nerode-style-equiv}
developed the idea of a canonical $K$-acceptor for a language
recognized by an IRTA with constant $K$.  In this section, we will
further study this canonical form.  We present a crucial property that
will help effectively compute the canonical form, and also apply an
Angluin-style $L^*$ algorithm.

\begin{definition}[Half-integral words]
  We call a timed word
  $(t_1 \cdot a_1) (t_2 \cdot a_2) \dots (t_n \cdot a_n)$ to be a
  \emph{half-integral} if for each $1 \le i \le n$, the fractional
  part $\{t_i\}$ is either $0$ or $\frac{1}{2}$: in other words, each
  delay is either an integer or a rational with fractional value
  $\frac{1}{2}$.  Also, the empty word \(\epsilon\) is a half-integral
  word.
\end{definition}

\begin{definition}[Small half-integral words]
  Let $K \in \mathbb{N}$.  A half-integral word
  $(t_1 \cdot a_1) (t_2 \cdot a_2) \dots (t_n \cdot a_n)$ is said to
  be \emph{small} w.r.t $K$ if $t_i < K + 1$ for all $1 \le i \le n$.
\end{definition}
For a finite alphabet $\Sigma$ and $K \in \mathbb{N}$, let
$\Sigma_K := \{0, \frac{1}{2}, 1, \dots, K - \frac{1}{2}, K, K +
\frac{1}{2}\} \times \Sigma$.  Every small integral word is therefore
in $\Sigma_K^*$.  The next lemma is a generalization of the following
statement: for every timed word $u$, there is a small half-integral
timed word $w$ such that every $K$-acceptor reaches the same state on
reading both $u$ and $w$.

\begin{restatable}{lemma}{halfIntegral}
  \label{lem:prelim}
  Let $u_0$ be a half-integral word which is small w.r.t. $K$.  For
  every timed word $u$, there is a half-integral word $w$ such that
  $u_0w$ is small w.r.t $K$ and every $K$-acceptor reaches the same
  state on reading $u_0u$ or $u_0w$.
\end{restatable}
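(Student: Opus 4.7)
The plan is to prove the lemma by induction on the length of $u$, constructing $w$ one letter at a time while maintaining the following invariant: after processing a prefix $u'$ of $u$ with the corresponding word $w'$, the word $u_0 w'$ is small and half-integral, the clock value $\ck(u_0 w')$ has fractional part in $\{0, \tfrac{1}{2}\}$, and $\ck(u_0 w')$ lies in the same $K$-region as $\ck(u_0 u')$. The base case $u' = w' = \epsilon$ is immediate because $u_0$ is itself small and half-integral by assumption.

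For the inductive step, write $u = u'(t \cdot a)$ and let $w'$ be given by the induction hypothesis. In an arbitrary $K$-acceptor, $u_0 u'$ and $u_0 w'$ reach the same state $q$, whose associated region $R_q$ coincides with the $K$-region of $\ck(u_0 u')$ (and of $\ck(u_0 w')$). The run on $u_0 u$ then fires from $q$ the unique outgoing transition whose guard is the $K$-region $R$ containing $\ck(u_0 u') + t$. Hence it is enough to exhibit a half-integer $t' \in [0, K+1)$ such that $\ck(u_0 w') + t' \in R$: on input $(t' \cdot a)$ the same $K$-acceptor fires the same transition from $q$, reaching the same target state, and the new clock value becomes either $0$ (when $R$ is a singleton, so the guard forces a reset) or $\ck(u_0 w') + t'$ otherwise, both of which are half-integers landing in the region of the target state. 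Setting $w := w'(t' \cdot a)$ then preserves the invariant, and the construction of $t'$ depends only on $u_0 u'$ and $t$, so it serves all $K$-acceptors at once.

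Concretely, writing $y := \ck(u_0 w')$, the choice of $t'$ is a short case analysis on $R$: if $R = \{m\}$, set $t' := m - y$; if $R = (m, m+1)$ with $m < K$, set $t' := m + \tfrac{1}{2} - y$; and if $R = (K, \infty)$, set $t' := 0$ when $y > K$ and $t' := K + \tfrac{1}{2} - y$ otherwise. A short verification, subcasing on $R_q$, shows that in each case $t'$ has fractional part in $\{0, \tfrac{1}{2}\}$, is nonnegative, is strictly less than $K + 1$, and places $y + t'$ in $R$.

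At the end of the induction, $u_0 u$ and $u_0 w$ traverse the same sequence of (guard, letter) pairs in every $K$-acceptor, and by determinism and completeness both reach the same state. The main obstacle in this plan is simply the bookkeeping of the case analysis above, in particular the non-negativity of $t'$: this uses that $t$ already carries $\ck(u_0 u') \in R_q$ into $R$, which forces $R_q$ to precede $R$ in the natural order on $K$-regions and guarantees that the half-integer target picked in $R$ is never below $y$.
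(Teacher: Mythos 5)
Your proof is correct and follows essentially the same route as the paper's: an induction on the length of $u$ that replaces each delay by a half-integral delay of size less than $K+1$ landing in the same $K$-region, so that every $K$-acceptor fires the same sequence of transitions on $u_0u$ and $u_0w$. The only cosmetic difference is that you organize the case analysis by the target region $R$ and keep the invariant that $\ck(u_0w')$ is half-integral and region-equivalent to $\ck(u_0u')$, whereas the paper cases on the value of $\ck(u_0u_n)$ and maintains the slightly stronger invariant $\ck(u_0w)<K+1$ (which your choice $t'=0$ when $y>K$ makes unnecessary).
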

This allows us to identify the canonical equivalence $\approx^{L, K}$
using small integral words.

\begin{restatable}{proposition}{canonicalHalfIntegral}
  \label{prop:canonical_form}
  Let $\Bb$ be a $K$-acceptor for a language $L$.  Then, the
  equivalence $\approx^\Bb$ coincides with $\approx^{L,K}$ if{}f for
  all half-integral words $u, v \in (\Sigma_K)^*$: $u \approx^\Bb v$
  if{}f $u \approx^{L, K} v$.
\end{restatable}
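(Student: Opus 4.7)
The forward implication is immediate: if $\approx^\Bb$ and $\approx^{L,K}$ coincide on all timed words, they in particular coincide on the subset $(\Sigma_K)^*$ of small half-integral words. All the work is in the backward implication, where my plan is to use Lemma \ref{lem:prelim} as a bridge from arbitrary timed words to half-integral representatives in $(\Sigma_K)^*$, where the hypothesis of the proposition directly applies.

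Before exploiting the hypothesis I will first establish \emph{unconditionally} that $\approx^\Bb$ refines $\approx^{L,K}$. Since $\Bb$ is a $K$-acceptor, every control state has a unique associated $K$-region, so $u \approx^\Bb v$ already forces $\ck(u) \equiv^K \ck(v)$, yielding condition \ma of Definition \ref{def:monotonicity}; determinism of $\Bb$ together with the restricted guard shapes of a strict $1$-IRDTA yields condition \mb; and reaching the same state trivially preserves membership in $L$. Hence $\approx^\Bb$ is a $K$-monotonic, $L$-preserving equivalence, and the coarsest property in Proposition \ref{prop:syntactic} gives $u \approx^\Bb v \Rightarrow u \approx^{L,K} v$ for every pair of timed words.

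For the other direction of the backward implication, let $u,v$ be arbitrary timed words with $u \approx^{L,K} v$. I will apply Lemma \ref{lem:prelim} with $u_0 = \epsilon$ to obtain small half-integral words $w_u, w_v \in (\Sigma_K)^*$ such that $\Bb$ reaches the same state on $u$ and $w_u$ (respectively on $v$ and $w_v$); in particular $u \approx^\Bb w_u$ and $v \approx^\Bb w_v$, so by the previous paragraph also $u \approx^{L,K} w_u$ and $v \approx^{L,K} w_v$. Transitivity of $\approx^{L,K}$ then yields $w_u \approx^{L,K} w_v$, and since both representatives lie in $(\Sigma_K)^*$, the hypothesis of the proposition gives $w_u \approx^\Bb w_v$; chaining $u \approx^\Bb w_u \approx^\Bb w_v \approx^\Bb v$ concludes $u \approx^\Bb v$. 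The main conceptual obstacle is Lemma \ref{lem:prelim} itself, which I am assuming here; once it is available the proof of the proposition reduces to the short two-step transitivity argument above, going through half-integral representatives.
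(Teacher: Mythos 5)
Your proof is correct and follows essentially the same route as the paper's: the forward direction is immediate, and the backward direction uses Lemma~\ref{lem:prelim} with $u_0=\epsilon$ to replace $u,v$ by half-integral representatives in $(\Sigma_K)^*$, together with the fact that $\approx^\Bb$ refines $\approx^{L,K}$ and transitivity. The only cosmetic difference is that you re-derive the refinement from $K$-monotonicity and the coarsest-equivalence property, whereas the paper cites it directly from the proof of Proposition~\ref{prop:syntactic}; both justifications are valid.
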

We make use of Proposition~\ref{prop:canonical_form} to compute the
canonical form.

\subsection{Computing the Canonical Form}

Given a $K$-acceptor $\Bb$, we can minimize it using an algorithm
which is similar to the standard DFA minimization which proceeds by
computing a sequence of equivalence relations on the states.

\begin{itemize}
\item Equivalence $\sim^0$: for a pair of states $p, q$ of $\Bb$
  define $p \sim^0 q$ if $region(p) = region(q)$ and either both are
  accepting states, or both are non-accepting states.
\item Suppose we have computed the equivalence $\sim^i$ for some
  $i \in \mathbb{N}$.  For a pair of states $p, q$, define
  $p \sim^{i+1} q$ if $p \sim^{i} q$ and for every letter
  $(t, a) \in \Sigma_K$, the outgoing transitions
  $(p, p', a, \phi( [t]_{\equiv^K}), s)$ and
  $(q, q', a, \phi( [t]_{\equiv^K} ), s)$ in $\Bb$ satisfy
  $p' \sim^{i} q'$.

\item Stop when $\sim^{i+1}$ equals $\sim^{i}$.
\end{itemize}

The next lemma is a simple consequence of the definition of $\sim^i$
and by an induction on the number of iterations $i$.
\begin{lemma}\label{lem:minimization-iteration-invariant}
  Let $p, q$ be such that $region(p) = region(q)$.  Suppose
  $p \not \sim^i q$.  Then there exists a word $z$ of length at most
  $i$ such that $\delta^*_\Bb(p, z)$ is accepting whereas
  $\delta^*_\Bb(q, z)$ is not.
\end{lemma}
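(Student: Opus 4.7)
The plan is a straightforward induction on $i$, mirroring the classical DFA minimization argument but with extra bookkeeping for regions: the invariant I would carry is that $\mathit{region}(p) = \mathit{region}(q)$ holds both in the hypothesis and at each recursive step.

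For the base case $i = 0$, unwinding the definition of $\sim^0$: since $\mathit{region}(p) = \mathit{region}(q)$ yet $p \not\sim^0 q$, exactly one of $p, q$ is accepting. The empty word, of length $0$, witnesses this.

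For the inductive step, assume the claim for $i$ and suppose $\mathit{region}(p) = \mathit{region}(q)$ with $p \not\sim^{i+1} q$. Unpacking the definition of $\sim^{i+1}$, either $p \not\sim^i q$ already, in which case the inductive hypothesis supplies a distinguishing word of length at most $i \le i+1$; or there exists $(t, a) \in \Sigma_K$ whose outgoing transitions $(p, p', a, \phi([t]_{\equiv^K}), s)$ and $(q, q', a, \phi([t]_{\equiv^K}), s)$ satisfy $p' \not\sim^i q'$. Before appealing to the inductive hypothesis on $(p', q')$, I would verify that $\mathit{region}(p') = \mathit{region}(q')$: this is the one spot where the $K$-acceptor property is crucial. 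Since every state of $\Bb$ has a unique associated region and both transitions share the same guard and the same reset bit $s$, the post-transition clock value lies in the same region for both runs (it is $0$ if $s = 0$, and the region obtained by elapsing $t$ from the common region of $p, q$ if $s = 1$). Hence $\mathit{region}(p') = \mathit{region}(q')$, the inductive hypothesis yields a word $z'$ of length at most $i$ that distinguishes $p'$ and $q'$, and prepending $(t \cdot a)$ produces a word of length at most $i+1$ that distinguishes $p$ and $q$.

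The only real obstacle lies in this region-propagation step; once it is read off the $K$-acceptor definition, the rest is a direct translation of the classical untimed minimization proof.
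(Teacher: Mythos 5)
Your proof is correct and follows the same route the paper intends: the paper dispenses with this lemma in one line (``a simple consequence of the definition of $\sim^i$ and by an induction on the number of iterations $i$''), and your argument is exactly that induction, with the base case using $region(p)=region(q)$ to force an acceptance mismatch and the inductive step prepending the separating letter. You also correctly identify and justify the one detail the paper leaves implicit, namely that the successors $p',q'$ again share a region because the two transitions carry the same guard and reset bit, so the induction hypothesis is applicable to them.
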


We define the quotient of \( \mathcal{B} \) by \({\sim^i} \) as the
$K$-acceptor whose states are the equivalence classes for $\sim^i$.
There is a transition
$([q]_{{\sim^i}}, [p]_{{\sim^i}},a, \phi([t]_{\equiv^K}),s)$ if there
is $q'\in [q]_{{\sim^i}}$ and $p'\in [p]_{{\sim^i}}$ such that
$(q', p',a, \phi([t]_{\equiv^K},s)$ is a transition of $\mathcal{B}$.
The initial state is the class of the initial state of $\mathcal{B}$
and the final states are the classes that have a non empty
intersection with the set of final states of $\mathcal{B}$.  For
$i\geq 1$ the quotient of \( \mathcal{B} \) by \({\sim^i} \) is an
acceptor for $L(\mathcal{B})$.

Suppose we reach a fixpoint at $m \in \mathbb{N}$.  The quotient of
\( \mathcal{B} \) by $\sim^m$ gives the canonical automaton
$\Aa_{\approx^{L,K}}$.  Suppose the quotient does not induce the
canonical equivalence.  By Proposition~\ref{prop:canonical_form} there
are two words $u, v\in (\Sigma_K)^* $ such that $u$ and $v$ go to a
different state, but $u \approx^{L, K} v$.  Consider the iteration $i$
when the two states were made non-equivalent.  There is a small
half-integral word $z$ of length $i$ which distinguishes $u$ and $v$
by Lemma~\ref{lem:minimization-iteration-invariant} -- a contradiction
to $u \approx^{L,K} v$.  Let us say $z \in u^{-1} L$, since $u, v$ are
half-integral, $\tau_{u \to v}$ is simply the identity function.
Since $z \not \in v^{-1}L$, we deduce that
$\tau_{u \to v}(u^{-1} L) \neq v^{-1}L$, hence
$u \not \approx^{L,K} v$.

\subsection{Learning the Canonical Form}
Our learning algorithm closely follows Angluin's $L^*$ approach, so we
assume familiarity with it and provide a brief example of its
adaptation to the IRTA setting.  Detailed definitions, proof of
correctness, and a complete example are in Appendix~\ref{ap:angluin}.

We assume that the Learner is aware of the maximum constant $K$ for
the unknown language $L$.  The Learner's goal is to identify the
equivalence classes of $\approx^{L,K}$ using small half-integral
words, from $\Sigma_K^*$.  Correspondingly, the rows and columns in an
observation table are words in $\Sigma_K^*$.  In the $L^*$ algorithm,
each row of the observation table corresponds to an identified state.
Two identical rows correspond to the same state.  In order to make a
similar conclusion, we add a column to the observation table, that
maintains $c^K(u)$ for every string $u$ of a row.  There is one
detail: once $c^K(u)$ goes beyond $K$, we want to store it as a single
entity $\top$.  We define $\ck_{\top}(u)$ to be equal to $\ck(u)$ when
this value is bellow $K$ and equal to $\top$ otherwise.
\begin{restatable}{lemma}{forLearning}
  \label{lemma:for_learning}
  Let $L$ be a timed language recognized by a $K$-acceptor, and let
  $u, v\in (\Sigma_K)^*$.  Then $u \approx^{L, K} v$ if{}f
  $\ck_{\top}(u)=\ck_{\top}(v)$ and for all words
  $z \in (\Sigma_K)^*$, we have $uz \in L$ if{}f $vz \in L$.
\end{restatable}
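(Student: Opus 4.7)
The plan is to leverage two facts. First, when $\ck_{\top}(u)=\ck_{\top}(v)$, the rescaling function $\tau_{u\to v}$ collapses to the identity on every timed word. This is because $u,v\in(\Sigma_K)^*$ are small half-integral words, so $\ck(u)$ and $\ck(v)$ are half-integral; the hypothesis $\ck_{\top}(u)=\ck_{\top}(v)$ then forces either $\ck(u)=\ck(v)\leq K$ or $\ck(u),\ck(v)>K$. An induction on the recursive definition of $\tau_{\ck(u)\to \ck(v)}$ shows every delay is left unchanged: in the first case the rescaler $f_{(1-\{y\})\to(1-\{y'\})}$ degenerates to the identity since $y=y'$ throughout; in the second case the ``$y,y'>K$'' short-circuit keeps firing because adding nonnegative delays cannot bring the sum back below $K$. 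Second, Lemma~\ref{lem:prelim} lets us replace any timed word by a half-integral one while preserving the state reached in every $K$-acceptor.

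For the forward direction, assume $u\approx^{L,K}v$. By definition $\ck(u)\equiv^K\ck(v)$ and $\tau_{u\to v}(u^{-1}L)=v^{-1}L$. Since $\ck(u)$ and $\ck(v)$ are half-integral, $\equiv^K$ directly yields $\ck_{\top}(u)=\ck_{\top}(v)$. By the first fact above, $\tau_{u\to v}$ is the identity, so $u^{-1}L=v^{-1}L$, and thus $uz\in L\iff vz\in L$ for every timed word $z$, in particular for $z\in(\Sigma_K)^*$.

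For the reverse direction, assume $\ck_{\top}(u)=\ck_{\top}(v)$ and $uz\in L\iff vz\in L$ for all $z\in(\Sigma_K)^*$. The first condition gives $\ck(u)\equiv^K\ck(v)$ directly and, via the first fact, forces $\tau_{u\to v}$ to the identity, so the remaining goal reduces to showing $uz\in L\iff vz\in L$ for every timed word $z$. The assumption handles the case $z\in(\Sigma_K)^*$. For a general timed word $z$, Lemma~\ref{lem:prelim} applied with $u_0=u$ and $\tilde u=z$ produces a half-integral $w\in(\Sigma_K)^*$ such that every $K$-acceptor for $L$ reaches the same state after reading $uz$ as after reading $uw$; therefore $uz\in L\iff uw\in L$, and the assumption on $(\Sigma_K)^*$-words then gives $uw\in L\iff vw\in L$.

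The main obstacle is closing the chain by arguing that the \emph{same} $w$ also witnesses $vz\in L\iff vw\in L$. My plan is to exploit the fact that in a strict 1-IRDTA the clock trajectory of a timed word depends only on the starting clock value and not on the starting control state, because each reset is triggered purely by the current clock value meeting an integer guard. Consequently, when $\ck(u)=\ck(v)\leq K$ the sequence of clock values (and regions) produced by reading $z$ from $v$'s side coincides with that on $u$'s side, so the $w$ of Lemma~\ref{lem:prelim}, being tailored to that trajectory, fits $v$ equally well. In the case $\ck(u),\ck(v)>K$ the clock can never dip back below $K$, so one may take $w$ to be the word with all-zero delays, which on either side forces the $K<x$-guarded transitions to be taken throughout, just as $z$ does. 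Either way $vz$ and $vw$ reach the same state in every $K$-acceptor for $L$, so $vz\in L\iff vw\in L$, closing the chain $uz\in L\iff uw\in L\iff vw\in L\iff vz\in L$.
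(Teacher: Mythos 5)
Your proof is correct and takes essentially the same route as the paper's: the forward direction rests on $\tau_{u \to v}$ degenerating to the identity when $u,v$ are half-integral with $\ck_{\top}(u)=\ck_{\top}(v)$, and the reverse direction applies Lemma~\ref{lem:prelim} with $u_0 = u$ and then transfers the resulting witness $w$ to $v$. The only difference is one of detail: you explicitly justify, via the clock-trajectory/input-determinism argument (and the all-zero-delay word when both clock values exceed $K$), why the same $w$ also satisfies $vz \in L \iff vw \in L$, a step the paper asserts in a single terse sentence from $\ck_{\top}(u)=\ck_{\top}(v)$.
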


An observation table labels its rows and columns with words in
\((\Sigma_{K})^{*}\).  The row words form a prefix-closed set, and the
column words form a suffix-closed set, as in Angluin.  Table
\ref{tab:learning} shows three observation tables.  The lower part of
these tables includes one-letter extensions of the row words, and
their $\ck_{\top}$ values are shown in the extra column in red.

\begin{table}
  \centering
  \caption{A run of $L^*$ for IRTA}
  \label{tab:learning}
  \begin{tabular}{|c|@{\hspace{2pt}}c@{\hspace{2pt}}|c|}
    \hline
    \(\mathcal{T}_{0}\)        & \(\epsilon\) & \(\color{red}{\mathit{c}_{\top}^{1}(u)}\) \\
    \hline
    \(\epsilon\)               & \(0\)        & \(\color{red}{0}\)                        \\
    \hline
    \((0\cdot a)\)             & \(0\)        & \(\color{red}{0}\)                        \\
    \((\frac{1}{2}\cdot a)\)   & \(0\)        & \(\color{red}{\frac{1}{2}}\)              \\
    \((1\cdot a)\)             & \(1\)        & \(\color{red}{0}\)                        \\
    \((1+\frac{1}{2}\cdot a)\) & \(0\)        & \(\color{red}{\top}\)                     \\
    \hline
  \end{tabular}
  \begin{tabular}{|c|@{\hspace{2pt}}c@{\hspace{2pt}}|c|}
    \hline
    \(\mathcal{T}_{1}\)                            & \(\epsilon\) & \(\color{red}{\mathit{c}_{\top}^{1}(u)}\) \\
    \hline
    \(\epsilon\)                                   & \(0\)        & \(\color{red}{0}\)                        \\
    \((\frac{1}{2}\cdot a)\)                       & \(0\)        & \(\color{red}{\frac{1}{2}}\)              \\
    \((1\cdot a)\)                                 & \(1\)        & \(\color{red}{0}\)                        \\
    \((1+\frac{1}{2}\cdot a)\)                     & \(0\)        & \(\color{red}{\top}\)                     \\
    \hline
    \((0\cdot a)\)                                 & \(0\)        & \(\color{red}{0}\)                        \\
    \((\frac{1}{2}\cdot a)(0\cdot a)\)             & \(0\)        & \(\color{red}{\frac{1}{2}}\)              \\
    \((\frac{1}{2}\cdot a)(\frac{1}{2}\cdot a)\)   & \(1\)        & \(\color{red}{0}\)                        \\
    \((\frac{1}{2}\cdot a)(1\cdot a)\)             & \(0\)        & \(\color{red}{\top}\)                     \\
    \((\frac{1}{2}\cdot a)(1+\frac{1}{2}\cdot a)\) & \(0\)        & \(\color{red}{\top}\)                     \\
    \((1\cdot a)(0\cdot a)\)                       & \(1\)        & \(\color{red}{0}\)                        \\
    \((1\cdot a)(\frac{1}{2}\cdot a)\)             & \(0\)        & \(\color{red}{\frac{1}{2}}\)              \\
    \((1\cdot a)(1\cdot a)\)                       & \(0\)        & \(\color{red}{0}\)                        \\
    \((1\cdot a)(1+\frac{1}{2}\cdot a)\)           & \(0\)        & \(\color{red}{\top}\)                     \\
    \((1+\frac{1}{2}\cdot a)(\Sigma_{K}\cdot a)\)  & \(0\)        & \(\color{red}{\top}\)                     \\
    \hline
  \end{tabular}
  \begin{tabular}{|c|@{\hspace{2pt}}c@{\hspace{2pt}}|c| }
    \hline
    \(\mathcal{T}_{2}\)                                    & \(\epsilon\) & \(\color{red}{\mathit{c}_{\top}^{1}(u)}\) \\
    \hline
    \(\epsilon\)                                           & \(0\)        & \(\color{red}{0}\)                        \\
    \((\frac{1}{2}\cdot a)\)                               & \(0\)        & \(\color{red}{\frac{1}{2}}\)              \\
    \((1\cdot a)\)                                         & \(1\)        & \(\color{red}{0}\)                        \\
    \((1+\frac{1}{2}\cdot a)\)                             & \(0\)        & \(\color{red}{\top}\)                     \\
    \((1\cdot a)(1\cdot a)\)                               & \(0\)        & \(\color{red}{0}\)                        \\
    \((1\cdot a)(1\cdot a)(1\cdot a)\)                     & \(0\)        & \(\color{red}{0}\)                        \\
    \hline
    \((0\cdot a)\)                                         & \(0\)        & \(\color{red}{0}\)                        \\
    \((\frac{1}{2}\cdot a)(0\cdot a)\)                     & \(0\)        & \(\color{red}{\frac{1}{2}}\)              \\
    \((\frac{1}{2}\cdot a)(\frac{1}{2}\cdot a)\)           & \(1\)        & \(\color{red}{0}\)                        \\
    \((\frac{1}{2}\cdot a)(1\cdot a)\)                     & \(0\)        & \(\color{red}{\top}\)                     \\
    \((\frac{1}{2}\cdot a)(1+\frac{1}{2}\cdot a)\)         & \(0\)        & \(\color{red}{\top}\)                     \\
    \((1\cdot a)(0\cdot a)\)                               & \(1\)        & \(\color{red}{0}\)                        \\
    \((1\cdot a)(\frac{1}{2}\cdot a)\)                     & \(0\)        & \(\color{red}{\frac{1}{2}}\)              \\
    \((1\cdot a)(1+\frac{1}{2}\cdot a)\)                   & \(0\)        & \(\color{red}{\top}\)                     \\
    \((1+\frac{1}{2}\cdot a)(\Sigma_{K}\cdot a)\)          & \(0\)        & \(\color{red}{\top}\)                     \\
    \((1\cdot a)(1\cdot a)(0\cdot a)\)                     & \(0\)        & \(\color{red}{0}\)                        \\
    \((1\cdot a)(1\cdot a)(\frac{1}{2}\cdot a)\)           & \(0\)        & \(\color{red}{\frac{1}{2}}\)              \\
    \((1\cdot a)(1\cdot a)(1+\frac{1}{2}\cdot a)\)         & \(0\)        & \(\color{red}{\top}\)                     \\
    \((1\cdot a)(1\cdot a)(1\cdot a)(\Sigma_{K}\cdot a) \) & \(0\)        & \(..\)                                    \\
    \hline
  \end{tabular}
\end{table}
Suppose the unknown IRTA language is
\(L=\{ u\in \mathbb{T}\Sigma^{*} \mid \sigma(u)=1\}\) for
\(\Sigma=\{a\}\) with a known constant \(K=1\).  The learner starts
with \(\mathcal{T}_{0}\) containing only the $\epsilon$ row which is
$0$ since \(\epsilon\notin L\), and the $\epsilon$ column.
\(\mathcal{T}_{0}\) is not closed as witnessed by \((1\cdot a)\),
whose row is $1$, while the row of \(\epsilon\) is $0$.  Additionally,
\((\frac{1}{2}\cdot a)\) and \((1+\frac{1}{2}\cdot a)\) also witness
non-closure because their $\ck_{\top}$ values are non zero.  This
highlights the difference from the untimed case: row words are
distinguished based on their clock values as well.

To obtain a closed table, the learner successively adds the words
\((\frac{1}{2}\cdot a)\), \((1\cdot a)\), \((1+\frac{1}{2}\cdot a)\),
forming \(\mathcal{T}_{1}\).  Every two words in \(\mathcal{T}_{1}\)
are distinguished either by their row or by their $\ck_{\top}$ value.
Thus, \(\mathcal{T}_{1}\) is consistent: if two words have identical
rows and same $\ck_{\top}$ value then their extensions also satisfy
this.  Since \(\mathcal{T}_{1}\) is closed and consistent the learner
conjectures \(\mathcal{A}_{\mathcal{T}_{1}}\)
(Figure~\ref{fig:conjecture} of Appendix~\ref{ap:angluin}), the
$K$-acceptor induced by $\mathcal{T}_{1}$ (formal definition in the
appendix).  The teacher provides a counterexample, assumed to be
\((1\cdot a)(1\cdot a)(1\cdot a)\), which is accepted by
\(\mathcal{A}_{\mathcal{T}_{1}}\) but not in \(L\).  The learner
processes this counterexample and computes \(\mathcal{T}_{2}\), which
is closed but not consistent as shown by \(\epsilon\) and
\((1\cdot a)(1\cdot a)\) and their extensions by \((1\cdot a)\).
Hence, the learner adds a column for \((1\cdot a)\) and computes
\(\mathcal{T}_{3}\).  The process continues similarly The rest of the
run is is detailed in the appendix.

\section{Conclusion}
\label{sec:conclusion}

We have presented a Myhill-Nerode style characterization for timed
languages accepted by timed automata with integer resets.  There are
three main technical ingredients: (1) the notion of $K$-monotonicity
(Definition~\ref{def:monotonicity}) that helps characterize
equivalences on timed words with automata, that we call $K$-acceptors.
This was possible since each word $u$ determines the value $\ck(u)$ of
the clock on reading $u$, in any $K$-acceptor; (2) the definition of
the rescaling function (Section~\ref{sec:nerode-style-equiv}) that
gives a Nerode-like equivalence, leading to a Myhill-Nerode theorem
for IRTA languages, and the canonical equivalence $\approx^{L,K}$; (3)
understanding canonical equivalence $\approx^{L,K}$ through
half-integral words (Section~\ref{sec:canonical-form}), which are, in
some sense, discretized words.  This helps us to build and learn the
canonical form.  We believe these technical ingredients provide
insights into understanding the languages recognized by IRTA.
Typically, active learning algorithms begin by setting up a canonical
form.  We have laid the foundation for IRTAs.  Future work lies in
adapting these foundations for better learning algorithms.

\newpage
\appendix

\section{Proofs of \S~\ref{sec:integer-resets}}
\label{app:strictness}

\PropRPlus*
\begin{proof}
	Given a \(1\)-TA we can always assume that its transitions are of the form given in item 1.
	We call a transition of the form \((q,q',a,x=m,1)\) a \emph{bad} transition.
	Next, we prove by induction on the number of bad transitions that the language of any (deterministic) \(1\)-TA with transitions as in item 1 is also accepted by a (deterministic) \(1\)-TA with no bad transitions.
	Let \(\mathcal{A}=(Q, q_{I}, T, F)\) be a \(1\)-TA with transitions as in item 1 and assume it has \(n+1\) bad transitions.
	Let \(\theta=(p,q,a,x=m,1)\in T\) be a bad transition such that the constant \(m\) is the largest among all the constants associated with bad transitions.

	For a guard \(\phi\) with a constant greater than \(m\), define
	\[
		\mathit{modify}(\phi)=
		\begin{cases}
			x=c-m                    & \quad\text{if \(\phi\) is \(x=c\)}                  \\
			c -m< x \land x < c-m +1 & \quad\text{if \(\phi\) is \(c < x \land x < c +1\)} \\
			K-m<x                    & \quad\text{else (\(\phi\) is \(K<x\))} \enspace .
		\end{cases}
	\]
	Define the \(1\)-TA \(\overline{\mathcal{A}}\) as follows:
	The set of states is given by \(Q\) and a copy \(\overline{Q}\) of \(Q\) and the initial state is \(q_{I}\in Q\).
	The transitions are given by
	\begin{enumerate}
		\item the transitions of \(T\setminus \{\theta\}\),
		\item the transition \((p, \overline{q}, a, x=m , 0)\),
		\item for every non bad transition \(\eta=(s, s', a, \phi, 1)\in T\) the transition \(\overline{\eta}=(\overline{s}, \overline{s}', a, \mathit{modify}(\phi), 1)\),
		\item for every bad transition \(\eta=(s,s',a,x=m,1)\in T\) of guard \(x=m\) (including the transition \(\theta\)) the transition \(\eta=(\overline{s},\overline{s}',a,x=0,0)\),
		\item for every resetting transition \((s, s', a, \phi, 0)\in T\) the transition \((\overline{s}, s', a, \mathit{modify}(\phi), 0)\).
	\end{enumerate}
	The set of final states is \(F\cup \{\overline{f}\in \overline{Q}\mid f \in F\}\).
	Note that if \(\mathcal{A}\) is deterministic then \(\overline{\mathcal{A}}\) is also deterministic.
	All the bad transitions of \(\overline{\mathcal{A}}\) are in \(T\setminus \{\theta\}\), thus \(\overline{\mathcal{A}}\) has \(n\) bad transitions.
	By induction hypothesis it is accepted by a \(1\)-TA without any bad transitions.
	It remains to show that \(L(\mathcal{A})=L(\overline{\mathcal{A}})\).
	The runs of \(\mathcal{A}\) that don't use the transition \(\theta\) correspond to the runs of \(\overline{\mathcal{A}}\) that don't use the transition of item 2.
	\(\overline{\mathcal{A}}\) simulates a run of \(\mathcal{A}\) that uses \(\theta\) as follows:
	Until the first occurrence of \(\theta\) the run is simulated by the transitions in item 1.
	Every occurrence of \(\theta\) in the run is simulated either by the transition in item 2 (in particular this is the case for the first occurrence of \(\theta\)) or by the ``good'' version of \(\theta\) given in item 4.
	After an occurrence of \(\theta\) the automaton \(\overline{\mathcal{A}}\) uses the transitions of items 3 and 4 until a resetting transition occurs.
	Such a resetting transition is simulated with the corresponding transition from item 5 which forces \(\overline{\mathcal{A}}\) back to the states of \(Q\) until a next occurrence of \(\theta\).
\end{proof}

\section{Proofs of \S~\ref{sec:canonical-form}}
\label{app:canonical-form}

\halfIntegral*
\begin{proof}
	If $\ck(u_0)>K$ and $u=(t_1 \cdot a_1) (t_2 \cdot a_2) \dots (t_n \cdot a_n)$ then $w=(0 \cdot a_1) (0 \cdot a_2) \dots (0 \cdot a_n)$ proves the claim.
	Assume now $\ck(u_0)\leq K$.
	We will prove a slightly stronger claim: the word $w$ associated to $u$ is such that $\ck(u_0w)<K+1$.
	Proof proceeds by induction on the length of the timed words.
	Suppose the property is true for all words of length $\le n$.
	Let $u = u_n (t_{n+1} \cdot a_{n+1})$ where $u_n \in \tss$ is a word of length $n$.
	By induction hypothesis, there is a half integral word $w_n$ that witnesses the (stronger) claim for $u_n$.
	We will now construct $w = w_n (t'_{n+1} \cdot a_{n+1})$ that witnesses the property for $u$.

	To satisfy the property, we will define $t'_{n+1}$ such that its fractional part is either $0$ or $\frac{1}{2}$, $\ck(u_0w)<K+1$ and, $\ck(u_0u_n) + t_{n+1} \equiv^K \ck(u_0w_n) + t'_{n+1}$.
	Hence $u_0w$ will be a small half-integral word and the same outgoing transition will be triggered on elapsing $t_{n+1}$ from $u_0u_n$, or while elapsing $t'_{n+1}$ from $u_0w_n$.
	Observe that $\ck(u_0u_n) \equiv^K \ck(u_0w_n)$ and that
	\begin{align}
		\ck(u_0u_n) + t_{n+1}  & = \lfloor \ck(u_0u_n) \rfloor + \lfloor t_{n+1}
		\rfloor + \{ \ck(u_0u_n) \} + \{ t_{n+1} \} \label{eq:9}                 \\
		\ck(u_0w_n) + t'_{n+1} & = \lfloor \ck(u_0w_n) \rfloor + \lfloor
		t'_{n+1} \rfloor + \{ \ck(u_0w_n) \} + \{
		t'_{n+1} \} \label{eq:10}
	\end{align}
	Moreover, when $\ck(u_0u_n) \le K$ we have
	\begin{align}
		\text{ $\lfloor \ck(u_0w_n) \rfloor  = \lfloor \ck(u_0w_n)
    \rfloor$ and $\{ \ck(u_0u_n) \} = 0$ if{}f $\{ \ck(u_0w_n) \} = 0$ } \label{eq:11}
	\end{align}

	\begin{itemize}
		\item   When $K < \ck(u_0u_n)$, we also have $K < \ck(u_0w_n)$.
		      Thus, $\ck(u_0w_n)=K+\frac{1}{2}$ as $u_0w_n$ being small implies its fractional part is either $0$ or \(\frac{1}{2}\).
		      We take $t'_{n+1}=0$.
		\item When $\ck(u_0u_n)=0$ and $t_{n+1}>K$ it suffices to take $t'_{n+1}=K+\frac{1}{2}$.
		\item When $\ck(u_0u_n)=0$ and $t_{n+1}\leq K$ we take $ \lfloor t'_{n+1} \rfloor= \lfloor t_{n+1} \rfloor $ and, $\{ t'_{n+1}\}=\frac{1}{2}$ if $\{ t_{n+1}\}\neq 0$ and $\{ t'_{n+1}\}=0$ otherwise.
		\item When $\ck(u_0u_n)\neq 0$ and $\ck(u_0u_n)\leq K$ we have $\{ \ck(u_0u_n) \} \neq 0$ thus, $\{ \ck(u_0w_n) \} = \frac{1}{2}$ (by \eqref{eq:11} and because $\ck(u_0w_n)$ half-integral).
		      \begin{itemize}
			      \item if $ \ck(u_0u_n)+t_{n+1}>K$ we take $t'_{n+1}=K-\lfloor \ck(u_0w_n) \rfloor$.
			      \item if $ \ck(u_0u_n)+t_{n+1}\leq K$ then
			            \begin{align*}
				            t'_{n+1} = \begin{cases}
					                       \lfloor t_{n+1} \rfloor               & \text{ if } 0< \{ \ck(u_0u_n) \} +
					                       \{t_{n+1}\} <1                                                             \\
					                       \lfloor t_{n+1} \rfloor + \frac{1}{2} & \text{ if  }
					                       \{\ck(u_0u_n)\} + \{t_{n+1}\} = 1                                          \\
				                       \end{cases}
			            \end{align*}
		      \end{itemize}
	\end{itemize}
\end{proof}

\canonicalHalfIntegral*
\begin{proof}
	If $\approx^\Bb$ and $\approx^{L,K}$ coincide then in particular they coincide on half-integral words.
	For the reverse direction, since $\approx^\Bb$ refines $\approx^{L,K}$ it suffices to show that $u \approx^{L, K} v$ implies $u \approx^\Bb v$ for $u,v \in \mathbb{T}\Sigma^{*}$.
	Let $u'$ be the half-integral word of Lemma~\ref{lem:prelim} for $u_0=\epsilon$ and $u$.
	Define similarly $v'$.
	We have $u', v' \in (\Sigma_K)^*$.
	The $K$-acceptor $\Bb$ reaches the same state on reading $u$ or $u'$, thus $u \approx^\Bb u'$.
	Similarly $v \approx^\Bb v'$.
	Since $\approx^\Bb$ refines $\approx^{L,K}$ we have $u \approx^{L, K} u'$ and $v \approx^{L, K} v'$ and by transitivity, $u' \approx^{L, K} v'$.
	Hence, $u' \approx^\Bb v'$ because $u',v'\in(\Sigma_K)^*$ and finally by transitivity $u \approx^\Bb v$.
\end{proof}

\section{Details for Angluin's Learning}
\label{ap:angluin}

\forLearning*
\begin{proof}
	Since $u \approx^{L, K} v$ and the fractions $\{\ck(u)\},\{\ck(v)\}$ are in $\{0, \frac{1}{2}\}$, \(K\)-monotonicity \textcolor{lipicsGray}{\sffamily\bfseries\upshape (a)} implies $\ck_{\top}(u)=\ck_{\top}(v)$.
	Thus, \(\tau_{u \to v}\) is equal to the identity function.
  Thus, $uz \in L$ if{}f $vz \in L$.
	For the converse direction, using that \(\tau_{u \to v}\) is the identity function, we need to show that $u^{-1}L=v^{-1}L$.
	Let $w\in u^{-1}L$.
  By Lemma~\ref{lem:prelim} (with $u_0=u$) there is a word $w'$ such that $uw' \in (\Sigma_K)^*$ and $uw\in L$ if{}f $uw'\in L$.
	Since the residuals $u^{-1}L$ and $v^{-1}L$ contain the same words in $(\Sigma_K)^*$ and $w' \in (\Sigma_K)^*$, from $uw'\in L$ we find $vw'\in L$.
	Since every $K$-acceptor reaches the same state on reading $uw'$ or $uw$ and since $\ck_{\top}(u)=\ck_{\top}(v)$, every $K$-acceptor reaches the same state on reading $vw'$ or $vw$.
	Thus, $vw'\in L$ implies $vw\in L$.
	Thus, $w\in v^{-1}L$.
\end{proof}

In this section we adapt the learning framework of Angluin~\cite{angluinLearningRegularSets1987} to learn \(1\)-IRDTA.
\subparagraph{Observation tables.}
An \emph{Observation Table} $\mathcal{T}$ is a matrix of $0$'s and $1$'s with rows and columns labeled by half-integral words in \((\Sigma_{K})^{*}\) given by a prefix-closed subset $U_{1}$ for the rows and a suffix-closed subset \(E\) for the columns.
Additionally, the set $U_2$ contains the extensions of $U_1$ i.e., the words obtained by concatenating a letter from $\Sigma_K$ to the words in $U_1$.
We write $\mathcal{T}(u,e)$ for the entry in row $u$ and column $e$.

For $u\in U_1\cup U_2$ let $\mathit{R}(u)=(\mathit{row}(u), \ck_{\top}(u))$ where $\mathit{row}(u)$ is the row of $u$.
A table is closed when for every $w\in U_2$ there is $u\in U_1$ such that $\mathit{R}(w)=\mathit{R}(u)$.
It is consistent when for every $u,w\in U_1$ such that $\mathit{R}(w)=\mathit{R}(u)$ we have $\mathit{R}(w(t\cdot a))=\mathit{R}(u(t\cdot a))$ for every $a\in \Sigma$ and $t\in \Sigma_K$.

\subparagraph{Acceptor of an observation table.}
A closed and consistent observation table $\mathcal{T}$ induces a $K$-acceptor \(\mathcal{A}_{\mathcal{T}}\): The states are given by $\mathit{R}(u)$ for $u\in U_1$.
The initial state is $\mathit{R}(\epsilon)$ and a state $\mathit{R}(u)$ is final if $\mathcal{T}(u,\epsilon)=1$.
We have transitions \((\mathit{R}(u), \mathit{R}(u(t\cdot a)),a,\phi_{[\mathit{c}^{K}(u)+t]_{\equiv^{K}}}, s)\) for every $a\in \Sigma$ and $t\in \Sigma_{K}$, where $s=0$ if{}f $\mathit{c}^{K}(u)+t\in \{0, \dots, K\}$.

A $K$-acceptor is \emph{consistent with $\mathcal{T}$} if for every \(w\in U_1\cup U_2\) and \(e\in E\) such that \((q_{I},0) \rightsquigarrow^{we} (q, \mathit{c}^{K}(we))\) we have \(q\) is a final state if{}f \(\mathcal{T}(w,e)=1\).
\begin{restatable}{theorem}{ThmAngluin}
	If \(\mathcal{T}\) is closed and consistent then \(\mathcal{A}_{\mathcal{T}}\) is consistent with \(\mathcal{T}\).
	No other acceptor consistent with \(\mathcal{T}\) has equally many or less states than \(\mathcal{A}_{\mathcal{T}}\).
\end{restatable}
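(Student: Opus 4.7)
The statement has two parts: (i) $\mathcal{A}_\mathcal{T}$ is consistent with $\mathcal{T}$, and (ii) any $K$-acceptor consistent with $\mathcal{T}$ has at least as many states as $\mathcal{A}_\mathcal{T}$. The overall strategy closely follows Angluin's original argument, with the caveat that our ``rows'' are the pairs $R(u) = (\mathit{row}(u), \ck_\top(u))$ rather than the rows alone, and that state merging in a $K$-acceptor must respect regions as well as accept/reject behaviour.

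\emph{Consistency.} I would first establish by induction on $|w|$ that for every $w \in U_1 \cup U_2$ the unique run of $\mathcal{A}_\mathcal{T}$ on $w$ from $([\epsilon], 0)$ ends in the configuration $(R(w), \ck^K(w))$. The base case is immediate from the choice of initial state $R(\epsilon)$. For the inductive step on $w = w'(t \cdot a)$, I would use the closedness hypothesis to pick $u \in U_1$ with $R(u) = R(w')$ and use the definition of transitions of $\mathcal{A}_\mathcal{T}$ out of $R(u)$; the consistency hypothesis guarantees that the target $R(u(t \cdot a))$ does not depend on the representative, so the transition is well-defined. I would then lift this to a statement for arbitrary suffixes $e \in E$ by induction on $|e|$: for $u \in U_1 \cup U_2$ and $e = (t \cdot a)e' \in E$, picking $u' \in U_1$ with $R(u') = R(u)$ (closedness), the suffix $(t \cdot a)e' \in E$ because $E$ is suffix-closed, and consistency plus induction on $|e|$ give $\mathcal{T}(u,e)=\mathcal{T}(u',e)=\mathcal{T}(u'(t\cdot a),e')$, matching the acceptance condition of the state reached by $\mathcal{A}_\mathcal{T}$ on $ue$. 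The base case $e = \epsilon$ follows from the definition of the final states.

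\emph{Minimality.} Suppose, for contradiction, that some $K$-acceptor $\mathcal{B}$ consistent with $\mathcal{T}$ has strictly fewer states than $\mathcal{A}_\mathcal{T}$. By the pigeonhole principle there exist $u_1, u_2 \in U_1$ with $R(u_1) \neq R(u_2)$ that reach the same control state $q$ of $\mathcal{B}$. Two cases arise. If $\ck_\top(u_1) \neq \ck_\top(u_2)$, then $u_1$ and $u_2$ lie in different $K$-regions, contradicting the definition of $K$-acceptor which assigns a unique region to each state. Otherwise $\mathit{row}(u_1) \neq \mathit{row}(u_2)$, so there exists $e \in E$ with $\mathcal{T}(u_1,e) \neq \mathcal{T}(u_2,e)$. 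I would then argue that the runs of $\mathcal{B}$ on $u_1 e$ and $u_2 e$ reach the same control state of $\mathcal{B}$: since $u_1, u_2 \in (\Sigma_K)^*$ are half-integral and $\ck_\top(u_1) = \ck_\top(u_2)$, either both clock values coincide exactly (when not $\top$) or both exceed $K$, in which case every subsequent transition along $e$ has guard $K < x$ without reset (by strictness), so the runs follow the same transition sequence. Consistency of $\mathcal{B}$ with $\mathcal{T}$ then forces $\mathcal{T}(u_1,e) = \mathcal{T}(u_2,e)$, a contradiction.

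The main obstacle I foresee is the $\top$ case of the minimality argument: one must verify that in a strict $1$-IRDTA, two configurations at the same control state with clock values both exceeding $K$ admit identical runs on any half-integral suffix from $(\Sigma_K)^*$. This relies on the specific guard shape of strict $1$-IRDTAs (only $x=m$, $m<x<m+1$, or $K<x$), together with the fact that once the clock exceeds $K$ no subsequent reset can occur along a small half-integral suffix, so the run is entirely determined by the sequence of $K<x$ transitions from $q$. The rest of the proof is essentially Angluin's argument transposed to our timed setting.
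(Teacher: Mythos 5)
Your first half (that \(\Aa_{\mathcal{T}}\) is consistent with \(\mathcal{T}\)) follows the paper's route essentially verbatim: an induction on \(\card{w}\) showing that the run of \(\Aa_{\mathcal{T}}\) on \(w\in U_1\cup U_2\) ends in \((\mathit{R}(w),\ck(w))\) (the paper's Lemma~\ref{lemma:angluin1}), then an induction on the length of the suffix \(e\in E\) using suffix-closedness of \(E\) and table consistency (Lemma~\ref{lemma:angluin2}). For the minimality half you take a more direct route than the paper: a pigeonhole argument plus a case split on \(\ck_{\top}\). The part you prove is sound --- distinct \(\ck_{\top}\) values of half-integral words do give distinct \(K\)-regions, contradicting the unique-region property of \(K\)-acceptors; and in the \(\top\) case your observation that once the clock exceeds \(K\) in a strict 1-IRDTA only \(K<x\) guards (which never reset) are enabled, so the continuation is determined by the control state and the letters alone, is exactly the right justification (it is also a special case of Lemma~\ref{lem:rescaling-function-and-autamaton-runs}).

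The gap is that your argument only rules out acceptors with \emph{strictly fewer} states. The theorem, read as in Angluin, also asserts that no \emph{other} (i.e.\ non-isomorphic) consistent acceptor has \emph{equally many} states; equivalently, every consistent \(K\)-acceptor with at most as many states as \(\Aa_{\mathcal{T}}\) is isomorphic to \(\Aa_{\mathcal{T}}\), which is what the paper's Lemma~\ref{lemma:angluin3} proves. Your pigeonhole step needs \(\Bb\) to have strictly fewer states than the number of distinct values \(\mathit{R}(u)\), \(u\in U_1\), to produce a collision; when \(\Bb\) has exactly that many states it yields nothing, and you have not excluded a non-isomorphic consistent acceptor of the same size. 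The paper closes this by defining \(f(w)\) as the state of \(\Bb\) reached on \(w\), using consistency of \(\Bb\) with \(\mathcal{T}\) to get \(\mathit{R}(w)=(\mathit{row}(f(w)),\ck_{\top}(w))\), deducing \(f(w)=f(w')\iff \mathit{R}(w)=\mathit{R}(w')\) once the state counts are forced to coincide, and then checking that \(h(\mathit{R}(w))=f(w)\) preserves the initial state, the final states and the transitions (using completeness and determinism of \(\Bb\)). You would need to add this isomorphism construction to cover the ``equally many'' clause.
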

\begin{proof}
	Consequence of Lemma~\ref{lemma:angluin2} and Lemma~\ref{lemma:angluin3}.
\end{proof}
\begin{lemma}
	\label{lemma:angluin1}
	If \(\mathcal{T}\) is closed and consistent then for every \(w\in U_1\cup U_2\) we have \((\mathit{R}(\epsilon),0) \rightsquigarrow^{w} (\mathit{R}(w), \mathit{c}^{K}(w))\) in \(\mathcal{A}_{\mathcal{T}}\).
\end{lemma}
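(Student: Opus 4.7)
The claim is tailor made for induction on the length of $w$, so that is the approach I would take. The base case $w = \epsilon$ is immediate: the empty run witnesses $(\mathit{R}(\epsilon), 0) \rightsquigarrow^{\epsilon} (\mathit{R}(\epsilon), 0)$, and $\mathit{c}^K(\epsilon) = 0$. For the inductive step, I would write $w = u (t\cdot a)$ with $(t\cdot a) \in \Sigma_K$. The first thing to observe is that in both sub-cases $w \in U_1$ and $w \in U_2$, the prefix $u$ belongs to $U_1$: if $w \in U_1$ this is by prefix-closedness of $U_1$, and if $w \in U_2$ this is by the very definition of $U_2$ as $U_1 \cdot \Sigma_K$. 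In particular $u \in U_1 \cup U_2$, so the induction hypothesis applies and gives a run $(\mathit{R}(\epsilon), 0) \rightsquigarrow^{u} (\mathit{R}(u), \mathit{c}^K(u))$ in $\mathcal{A}_\mathcal{T}$.

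Next, I would extend this run by one step using the transition $(\mathit{R}(u), \mathit{R}(u(t\cdot a)), a, \phi([\mathit{c}^K(u)+t]_{\equiv^K}), s)$ that the construction of $\mathcal{A}_\mathcal{T}$ adds for every $u \in U_1$ and every $(t\cdot a) \in \Sigma_K$, with $s = 0$ iff $\mathit{c}^K(u)+t \in \{0,\ldots,K\}$. This transition is guaranteed to exist because closedness provides a representative in $U_1$ for $\mathit{R}(u(t\cdot a))$ (so the target is actually a state of $\mathcal{A}_\mathcal{T}$), and consistency makes sure the outgoing transitions do not depend on the choice of representative within the class $\mathit{R}(u)$. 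The clock value $\mathit{c}^K(u) + t$ trivially satisfies the guard $\phi([\mathit{c}^K(u)+t]_{\equiv^K})$ by the very definition of the map $\phi(\cdot)$ of Section~\ref{sec:from-equiv-autom}.

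It then remains to check that firing this transition lands the clock at $\mathit{c}^K(u(t\cdot a))$, i.e., that $s\cdot(\mathit{c}^K(u)+t) = \mathit{c}^K(u(t\cdot a))$. This follows by a direct case analysis on Definition~\ref{def:c^K}: if $\mathit{c}^K(u)+t$ is an integer in $\{0,\ldots,K\}$, then the sequence of integral positions of $u(t\cdot a)$ extends the one of $u$ by a new index at the very last position, so $\mathit{c}^K(u(t\cdot a)) = 0$, matching $s = 0$; otherwise the sequence is unchanged and $\mathit{c}^K(u(t\cdot a)) = \mathit{c}^K(u)+t$, matching $s = 1$. Concatenating the run from the induction hypothesis with this one-step extension completes the inductive step.

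The only subtlety I anticipate is the bookkeeping of the above case split between the reset and non-reset transitions, together with the fact that the conclusion needs a run ending exactly at the state $\mathit{R}(w)$ rather than some other equivalent word's state. The former is pure calculation from the definition of $\mathit{c}^K$, while the latter is why both closedness (so that $\mathit{R}(u(t\cdot a))$ is a genuine state of $\mathcal{A}_\mathcal{T}$) and consistency (so that the transition is uniquely determined) are needed in the statement. No other delicate point arises, and the proof is straightforward once these ingredients are lined up.
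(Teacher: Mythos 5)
Your proposal is correct and follows essentially the same route as the paper's own (one-line) proof: induction on the length of \(w\), extending the run obtained from the induction hypothesis by the transition \((\mathit{R}(u), \mathit{R}(u(t\cdot a)), a, \phi([\mathit{c}^{K}(u)+t]_{\equiv^{K}}), s)\) provided by the construction of \(\mathcal{A}_{\mathcal{T}}\). Your additional bookkeeping (why \(u\in U_1\), why the target is a genuine state via closedness, and the verification that \(s\cdot(\mathit{c}^{K}(u)+t)=\mathit{c}^{K}(u(t\cdot a))\)) only makes explicit what the paper leaves implicit.
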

\begin{proof}
	The proof goes by induction on the length of \(w\in U_1\cup U_2\) by noticing that \((\mathit{R}(\epsilon),0) \rightsquigarrow^{w} (\mathit{R}(w), \mathit{c}^{K}(w))\) and \(w(t \cdot a)\in U_1\cup U_2 \) then there is a transition \((\mathit{R}(w), \mathit{R}(w(t\cdot a)),a,\phi_{[\mathit{c}^{K}(u)+t]_{\equiv^{K}}}, s)\).
\end{proof}
\begin{lemma}
	\label{lemma:angluin2}
	If \(\mathcal{T}\) is closed and consistent then \(\mathcal{A}_{\mathcal{T}}\) is consistent with \(\mathcal{T}\).
\end{lemma}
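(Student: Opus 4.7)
The goal is to show, for every \(w\in U_1\cup U_2\) and \(e\in E\), that the state \(q\) reached by the run of \(\mathcal{A}_{\mathcal{T}}\) on \(we\) satisfies \(q\in F\) if and only if \(\mathcal{T}(w,e)=1\). My plan is to prove the slightly stronger statement that \(q\) can be written as \(R(u)\) for some \(u\in U_1\) with \(\mathcal{T}(u,\epsilon)=\mathcal{T}(w,e)\); since a state \(R(u)\) is by construction final in \(\mathcal{A}_{\mathcal{T}}\) if and only if \(\mathcal{T}(u,\epsilon)=1\), the lemma will follow.

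The first step is a reduction to the case \(w\in U_1\). If \(w\in U_2\), closedness supplies \(w'\in U_1\) with \(R(w)=R(w')\), and in particular \(\ck_\top(w)=\ck_\top(w')\). This forces \(\ck(w)\) and \(\ck(w')\) to be either equal (when both are \(\le K\)) or both strictly larger than \(K\). In either case, reading any word in \((\Sigma_K)^*\) from the two configurations \((R(w),\ck(w))\) and \((R(w'),\ck(w'))\) produces identical state sequences in \(\mathcal{A}_{\mathcal{T}}\): outgoing transitions are labelled by \(K\)-regions, and the invariant ``clocks are equal or both exceed \(K\)'' is preserved stepwise, using that \(\mathcal{A}_{\mathcal{T}}\) is a strict 1-IRDTA in which a reset happens exactly on an equality guard (which cannot fire when the clock exceeds \(K\)). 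Combined with \(\mathcal{T}(w,e)=\mathcal{T}(w',e)\), which holds because \(\mathit{row}(w)=\mathit{row}(w')\), the reduction is complete.

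With \(w\in U_1\) I would induct on \(|e|\). The base case \(e=\epsilon\) is immediate from Lemma~\ref{lemma:angluin1} with \(u:=w\). For \(e=(t\cdot a)\,e'\), Lemma~\ref{lemma:angluin1} brings us to \((R(w),\ck(w))\), and the transition \((R(w),R(w(t\cdot a)),a,\phi([\ck(w)+t]_{\equiv^K}),s)\) of \(\mathcal{A}_{\mathcal{T}}\) then brings us to \((R(w(t\cdot a)),\ck(w(t\cdot a)))\). Since \(w(t\cdot a)\in U_2\), closedness supplies \(u_1\in U_1\) with \(R(u_1)=R(w(t\cdot a))\); since \(E\) is suffix-closed, \(e'\in E\). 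Applying the induction hypothesis to \((u_1,e')\) yields \(u\in U_1\) with \(\mathcal{T}(u,\epsilon)=\mathcal{T}(u_1,e')\) and a run \((R(\epsilon),0)\rightsquigarrow^{u_1 e'}(R(u),\ck(u_1 e'))\). The identical-runs argument from the reduction then transfers this run to one starting from \((R(w(t\cdot a)),\ck(w(t\cdot a)))\), so chaining the two partial runs gives \((R(\epsilon),0)\rightsquigarrow^{we}(R(u),\ck(we))\). The chain of equalities
\[
\mathcal{T}(u,\epsilon)=\mathcal{T}(u_1,e')=\mathcal{T}(w(t\cdot a),e')=\mathcal{T}(w,(t\cdot a)\,e')=\mathcal{T}(w,e)
\]
closes the induction; the middle equality uses \(\mathit{row}(u_1)=\mathit{row}(w(t\cdot a))\).

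The main obstacle will be the clock-value bookkeeping. In the classical DFA analogue of this lemma one identifies the state reached after \(we\) directly with \(\mathit{row}(we)\); here the clock carried along the actual run need not match the \(\ck\) of any representative of the reached row once the value exceeds \(K\). The technical crux, used both in the reduction and in the inductive step, is the ``identical runs'' property: two configurations of a \(K\)-acceptor sharing a control state and whose clocks are either equal or both exceed \(K\) induce identical runs on any \((\Sigma_K)^*\) word. Consistency of \(\mathcal{T}\) is what makes the transition function of \(\mathcal{A}_{\mathcal{T}}\) well-defined at the level of \(R\)-classes and in particular underwrites the equality \(\mathcal{T}(u_1,e')=\mathcal{T}(w(t\cdot a),e')\) appearing above.
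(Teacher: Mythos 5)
Your proof is correct and follows essentially the same route as the paper's: induction on the length of $e$, using Lemma~\ref{lemma:angluin1} to pin down the runs, closedness to pick representatives in $U_1$, and equality of rows (plus consistency for well-definedness of the transitions) to transfer table entries. The only notable difference is that you spell out the ``identical runs when both clocks exceed $K$'' argument that the paper glosses over when it writes $(\mathit{R}(w),\mathit{c}^{K}(w))=(\mathit{R}(u),\mathit{c}^{K}(u))$; that is extra precision rather than a genuinely different approach.
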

\begin{proof}
	By induction on the length of \(e\) s.t.
	\((\mathit{R}(\epsilon),0) \rightsquigarrow^{we} (q, \mathit{c}^{K}(we))\).
	If \(e\) is the empty word then by Lemma~\ref{lemma:angluin1} \(q=\mathit{R}(w)\).
  If \(w\in U_1\) then by definition \(q\) is a final state if{}f \(\mathcal{T}(w,\epsilon)=1\).
	If \(w\in U_2\) then since \(\mathcal{T}\) is closed \(q=\mathit{R}(u)\) for some \(u\in U_1\), and \(\mathit{R}(u)\) is a final state if{}f \(\mathcal{T}(u,\epsilon)=1\) if{}f \(\mathcal{T}(w,\epsilon)=1\) since \(\mathit{R}(u)=\mathit{R}(w)\).
	Suppose the result holds for all elements in \(E\) no longer than \(k\), and let \(e\in E\) of length \(k+1\).
	Since \(E\) is suffix-closed, \(e=(t \cdot a)e_{1}\) for some \(e_{1}\in E\) of length \(k\).
	Let \(w\in U_1\cup U_2 \) such that \((\mathit{R}(\epsilon),0) \rightsquigarrow^{we} (q, \mathit{c}^{K}(we))\).
	By Lemma~\ref{lemma:angluin1} and since \(\mathcal{T}\) is closed we have \((\mathit{R}(\epsilon),0) \rightsquigarrow^{w} (\mathit{R}(w), \mathit{c}^{K}(w))=(\mathit{R}(u), \mathit{c}^{K}(u))\) for some \(u\in U_1\).
	By definition of \(U_2 \), \(u(t \cdot a)\in U_2\).
	Hence, using Lemma~\ref{lemma:angluin1}, we deduce from \((\mathit{R}(\epsilon),0) \rightsquigarrow^{we} (q, \mathit{c}^{K}(we))\) the run \((\mathit{R}(\epsilon),0) \rightsquigarrow^{u(t \cdot a)} (\mathit{R}(u(t \cdot a)), \mathit{c}^{K}(u(t \cdot a))) \rightsquigarrow^{e_{1}} (q_{}, \mathit{c}^{K}(u(t \cdot a)e_{1}))\).
	By induction hypothesis \(q\) is a final state if{}f \(\mathcal{T}(u(t \cdot a), e_{1})=1\) if{}f \(\mathcal{T}(w( t\cdot a),e_{1})=1\) since \(\mathit{R}(u)=\mathit{R}(w)\) and \(\mathcal{T}\) is consistent.
\end{proof}
\begin{lemma}
	\label{lemma:angluin3}
	Assume \(\mathcal{T}\) is closed and consistent.
	Every $K$-acceptor consistent with \(\mathcal{T}\) that has equally many or less states than \(\mathcal{A}_{\mathcal{T}}\) is isomorphic to \(\mathcal{A}_{\mathcal{T}}\).
\end{lemma}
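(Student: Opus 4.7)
\textbf{Proof plan for Lemma~\ref{lemma:angluin3}.} The strategy is the classical Angluin-style argument, adapted to our timed setting: I would exhibit an explicit map from the states of $\mathcal{A}_{\mathcal{T}}$ to those of an arbitrary consistent $K$-acceptor $\mathcal{B}$, show it is well-defined and injective, and conclude it is a state-bijection (hence an automaton isomorphism) by the cardinality hypothesis on $\mathcal{B}$.

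Concretely, for every state $\mathit{R}(u)$ of $\mathcal{A}_{\mathcal{T}}$ (with $u\in U_{1}$), define $\Phi(\mathit{R}(u))$ to be the unique control state of $\mathcal{B}$ reached from its initial configuration on reading $u$. The first step is to show that $\Phi$ is well-defined, i.e., that $\mathit{R}(u)=\mathit{R}(u')$ implies $u$ and $u'$ end in the same control state of $\mathcal{B}$. This uses consistency of $\mathcal{B}$ with $\mathcal{T}$: for every $e\in E$, the run on $ue$ is accepting if{}f $\mathcal{T}(u,e)=1$, so if the runs on $u$ and $u'$ ended in different states, some continuation over $E$ would separate them, contradicting the equality of rows. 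Symmetrically, for injectivity, I would argue that $\mathit{R}(u)\neq \mathit{R}(u')$ forces $u,u'$ into distinct states of $\mathcal{B}$: either some column $e\in E$ with $\mathcal{T}(u,e)\neq\mathcal{T}(u',e)$ produces runs with different acceptance (impossible from a common state by determinism of $\mathcal{B}$), or $\mathit{c}_{\top}^{K}(u)\neq\mathit{c}_{\top}^{K}(u')$, in which case the $K$-acceptor property forces the reached states to carry different regions.

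Once $\Phi$ is well-defined and injective, the bound on the number of states of $\mathcal{B}$ forces $|\Phi(\text{states of }\mathcal{A}_{\mathcal{T}})|$ to equal the number of states of $\mathcal{B}$, so $\Phi$ is a bijection. It remains to check that $\Phi$ is an automaton isomorphism: preservation of the initial state is immediate from $\Phi(\mathit{R}(\epsilon))$ being the state of $\mathcal{B}$ reached on $\epsilon$; preservation of finality follows from the $e=\epsilon$ column together with consistency; preservation of transitions uses closedness of $\mathcal{T}$ together with Lemma~\ref{lemma:angluin1}, which identifies the $\mathcal{A}_{\mathcal{T}}$-transition out of $\mathit{R}(u)$ on a letter $(t\cdot a)\in\Sigma_{K}$ with the state $\mathit{R}(u(t\cdot a))$, then matches it with the unique transition of $\mathcal{B}$ on $(t\cdot a)$ out of $\Phi(\mathit{R}(u))$.

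The main obstacle is the second case in the injectivity argument: in the untimed Angluin proof a row collision is the only way states get identified, but here $\mathit{R}(u)$ packages both the Boolean row and the clock fingerprint $\mathit{c}_{\top}^{K}(u)$, so I need the $K$-acceptor property of $\mathcal{B}$ (each state has a well-defined $K$-region) to rule out the degenerate situation where two rows differ only in their clock component yet still collapse to the same state of $\mathcal{B}$. The argument relies on the fact that for half-integral words in $(\Sigma_{K})^{*}$ the function $\mathit{c}_{\top}^{K}$ exactly picks out the $K$-region of the clock value, so any mismatch in $\mathit{c}_{\top}^{K}$ translates into a mismatch of the associated region, which a $K$-acceptor cannot hide inside a single state.
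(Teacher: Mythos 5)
Your overall architecture --- map each state $\mathit{R}(u)$ of $\mathcal{A}_{\mathcal{T}}$ to the control state of $\mathcal{B}$ reached on $u$, then use the cardinality hypothesis to get a bijection and finally check the automaton structure --- is the right one and matches the paper. The gap is in your well-definedness step. You claim that $\mathit{R}(u)=\mathit{R}(u')$ forces $u$ and $u'$ to reach the same control state of $\mathcal{B}$, ``because otherwise some continuation over $E$ would separate them.'' That implication is unjustified: consistency of $\mathcal{B}$ with $\mathcal{T}$ only constrains the acceptance of the words $we$ for $w\in U_1\cup U_2$ and $e\in E$; it does not prevent two \emph{distinct} states of $\mathcal{B}$ from having identical acceptance behaviour on every suffix in $E$ (or on every suffix whatsoever, if $\mathcal{B}$ is not minimal). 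Nothing in the hypotheses says that $E$ separates the states of $\mathcal{B}$, so $\Phi$ is not yet a function, and your proof as written never recovers this.

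The paper's proof avoids the problem by running the two directions in the opposite order. It first proves the direction you call injectivity, stated as a property of the reaching-state map $f$ on words: if $u,u'$ reach the same state of $\mathcal{B}$ then $\mathit{R}(u)=\mathit{R}(u')$ (same row by consistency with $\mathcal{T}$, same $\mathit{c}^{K}_{\top}$ by the $K$-acceptor region property --- exactly the two cases you identify). Contrapositively, words with distinct $\mathit{R}$-values reach distinct states, so $\mathcal{B}$ has at least as many states as $\mathcal{A}_{\mathcal{T}}$, hence by hypothesis exactly as many. Only then does well-definedness follow, by counting: the partition of $U_1\cup U_2$ by reached state refines the partition by $\mathit{R}$-value, and both have the same finite number of classes, so they coincide, i.e.\ $\mathit{R}(u)=\mathit{R}(u')$ implies $f(u)=f(u')$. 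In other words, the cardinality hypothesis is needed precisely to make your $\Phi$ a function, not merely to upgrade an injection to a bijection at the end. The remainder of your outline (initial and final states, transition preservation via closedness, Lemma~\ref{lemma:angluin1}, and determinism and completeness of $\mathcal{B}$) is consistent with the paper's argument.
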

\begin{proof}
	Let \(\mathcal{A}=(Q,q_{I},T,F)\) be a $K$-acceptor consistent with \(\mathcal{T}\) that has equally many or less states than \(\mathcal{A}_{\mathcal{T}}\).
	For each \(q\in Q\), let \(\mathit{row}(q) \colon E \rightarrow \{0,1\}\) such that \(\mathit{row}(q)(e)=1\) if{}f \((q, r) \rightsquigarrow^{e} (p, x)\) holds for some \(p\in F\) and \(r\in \mathit{region}(q)\) with \(\mathit{frac}(r)\in\{0, \frac{1}{2}\}\).
	Since \(\mathcal{A}\) is complete and deterministic we can define the function \(f:U_1\cup U_2 \to Q\) mapping \(w\in U_1\cup U_2 \) onto the unique state \(q\in Q\) such that \((q_{I},0) \rightsquigarrow^{w} (q, \mathit{c}^{K}(w))\).
	Since \(\mathcal{A}\) is consistent with \(\mathcal{T}\) for every \(w\in U_1\cup U_2 \) we have \(\mathit{R}(w)=(\mathit{row}(f(w)), \ck_{\top}(w))\).

	Next we show that we have the following property \((\star)\): for every \(w,w'\in U_1\cup U_2 \), \(f(w)=f(w')\iff \mathit{R}(w)=\mathit{R}(w')\).
	Assume \(f(w)=f(w')=q\).
	Clearly, \(\mathit{c}^{K}_{\top}(w)= \mathit{c}^{K}_{\top}(w')\).
	From \(\mathit{R}(w)=(\mathit{row}(f(w)), \ck_{\top}(w))\) and \(f(w)=f(w')\) we find that \(\mathit{R}(w)=\mathit{R}(w')\).
	Therefore, two words \(w,w'\in U_1\cup U_2\) such that \(\mathit{R}(w) \neq \mathit{R}(w')\) cannot lead to the same state of \(\mathcal{A}\).
	Thus, \(\mathcal{A}\) has at least as many states as \(\mathcal{A}_{\mathcal{T}}\).
	Thus, it has exactly the same number of states as \(\mathcal{A}_{\mathcal{T}}\).
	Hence, \(\mathit{R}(w)=\mathit{R}(w')\implies f(w)=f(w')\).

	Let $h$ be the function which maps a state $\mathit{R}(w)$ of \(\mathcal{A}_{\mathcal{T}}\) to \(h(\mathit{R}(w))=f(w)\in Q\).
	By the property \((\star)\), \(h\) is a bijection.
	Moreover, \(h(\mathit{R}(\epsilon))=q_{I}\) and $h$ maps the final states \(\mathcal{A}_{\mathcal{T}}\) to the final states of \(\mathcal{A}\).
	For \(\mathcal{A}\) and \(\mathcal{A}_{\mathcal{T}}\) to be isomorphic we show that $h$ also preserves the transitions: \((\mathit{R}(u), \mathit{R}(v),a,\phi_{[\mathit{c}^{K}(u)+t]_{\equiv^{K}}}, s)\) is a transition of \(\mathcal{A}_{\mathcal{T}}\) if{}f \((h(\mathit{R}(u)), h(\mathit{R}(v)),a,\phi_{[\mathit{c}^{K}(u)+t]_{\equiv^{K}}}, s)\) is a transition of \(\mathcal{A}\).
	For the direct direction: \((\mathit{R}(u), \mathit{R}(v),a,\phi_{[\mathit{c}^{K}(u)+t]_{\equiv^{K}}}, s)\) is a transition of \(\mathcal{A}_{\mathcal{T}}\) if{}f \(\mathit{R}(u(t\cdot a))=\mathit{R}({v})\) for some \(t\in (\Sigma_{K})\).
	Since \(\mathcal{A}\) is complete we always have \((f(u), f(u(t\cdot a)),a,\phi_{[\mathit{c}^{K}(u)+t]_{\equiv^{K}}}, s)\in T\), thus \((h(\mathit{R}(u)), h(\mathit{R}(v)),a,\phi_{[\mathit{c}^{K}(u)+t]_{\equiv^{K}}}, s)\in T\).
	For the reverse direction: since \(\mathcal{A}\) is deterministic \((f(u), f(v),a,\phi_{[\mathit{c}^{K}(u)+t]_{\equiv^{K}}}, s)\in T\) implies \(f(v)=f(u(t\cdot a))\).
	We thus conclude by using the property \((\star)\).
\end{proof}

\subsection{Example Run}
Suppose the unknown IRTA language is \(L=\{ u\in \mathbb{T}\Sigma^{*} \mid \sigma(u)=1\}\) for \(\Sigma=\{a\}\).
The constant \(K=1\) is known by the learner.
The table \(\mathcal{T}_{0}\) is consistent but not closed.
In order to get a closed table the learner successively adds the words \((\frac{1}{2}\cdot a)\), \((1\cdot a)\), \((1+\frac{1}{2}\cdot a)\) to finally get the table \(\mathcal{T}_{1}\).
Since this table is closed and consistent the learner makes the conjecture \(\mathcal{A}_{\mathcal{T}_{1}}\) given in Figure~\ref{fig:conjecture}.
The teacher selects a counterexample.
We assume this counterexample to be \((1\cdot a)(1\cdot a)(1\cdot a)\) which is accepted by \(\mathcal{A}_{\mathcal{T}_{1}}\) but not in \(L\).
The learner process this counterexample and computes \(\mathcal{T}_{2}\).
The table \(\mathcal{T}_{2}\) is closed but not consistent since \(\mathit{R}({\epsilon})=\mathit{R}((1\cdot a)(1\cdot a))\) but \(\mathit{R}((1\cdot a))\neq \mathit{R}((1\cdot a)(1\cdot a)(1\cdot a))\).
Hence the learner adds a column for \((1\cdot a)\) and computes \(\mathcal{T}_{3}\) that is closed and consistent.
Next, the teacher returns the counterexample \( (1\cdot a)(\frac{1}{2}\cdot a)(\frac{1}{2}\cdot a)\) to the conjecture \(\mathcal{A}_{\mathcal{T}_{3}}\) given in Figure~\ref{fig:conjecture5}.
Finally, \(\mathcal{T}_{4}\) is obtained by adding \((1\cdot a)(\frac{1}{2}\cdot a)(\frac{1}{2}\cdot a)\) and its prefix \((1\cdot a)(\frac{1}{2}\cdot a)\) to \(U_1\).
However, \(\mathcal{T}_{4}\) is not consistent since \(\mathit{R}((\frac{1}{2}\cdot a))=\mathit{R}((1\cdot a)(\frac{1}{2}\cdot a))\) but \(\mathit{R}((\frac{1}{2}\cdot a)(\frac{1}{2}\cdot a)) \neq \mathit{R}((1\cdot a)(\frac{1}{2}\cdot a)(\frac{1}{2}\cdot a))\).
Hence the learner adds a column for \((\frac{1}{2}\cdot a)\) and computes \(\mathcal{T}_{5}\) that is closed and consistent.
Finally, the learner conjectures \(\mathcal{A}_{\mathcal{T}_{5}}\), which corresponds to the correct \(1\)-IRTA in Figure~\ref{fig:ta_L_R} and he is done.
\begin{tabular}{|c|@{\hspace{2pt}}c@{\hspace{2pt}}|c|}
	\hline
	\(\mathcal{T}_{0}\)        & \(\epsilon\) & \(\color{red}{\mathit{c}_{\top}^{1}(u)}\) \\
	\hline
	\(\epsilon\)               & \(0\)        & \(\color{red}{0}\)                        \\
	\hline
	\((0\cdot a)\)             & \(0\)        & \(\color{red}{0}\)                        \\
	\((\frac{1}{2}\cdot a)\)   & \(0\)        & \(\color{red}{\frac{1}{2}}\)              \\
	\((1\cdot a)\)             & \(1\)        & \(\color{red}{0}\)                        \\
	\((1+\frac{1}{2}\cdot a)\) & \(0\)        & \(\color{red}{\top}\)                     \\
	\hline
\end{tabular}
\begin{tabular}{|c|@{\hspace{2pt}}c@{\hspace{2pt}}|c|}
	\hline
	\(\mathcal{T}_{1}\)                            & \(\epsilon\) & \(\color{red}{\mathit{c}_{\top}^{1}(u)}\) \\
	\hline
	\(\epsilon\)                                   & \(0\)        & \(\color{red}{0}\)                        \\
	\((\frac{1}{2}\cdot a)\)                       & \(0\)        & \(\color{red}{\frac{1}{2}}\)              \\
	\((1\cdot a)\)                                 & \(1\)        & \(\color{red}{0}\)                        \\
	\((1+\frac{1}{2}\cdot a)\)                     & \(0\)        & \(\color{red}{\top}\)                     \\
	\hline
	\((0\cdot a)\)                                 & \(0\)        & \(\color{red}{0}\)                        \\
	\((\frac{1}{2}\cdot a)(0\cdot a)\)             & \(0\)        & \(\color{red}{\frac{1}{2}}\)              \\
	\((\frac{1}{2}\cdot a)(\frac{1}{2}\cdot a)\)   & \(1\)        & \(\color{red}{0}\)                        \\
	\((\frac{1}{2}\cdot a)(1\cdot a)\)             & \(0\)        & \(\color{red}{\top}\)                     \\
	\((\frac{1}{2}\cdot a)(1+\frac{1}{2}\cdot a)\) & \(0\)        & \(\color{red}{\top}\)                     \\
	\((1\cdot a)(0\cdot a)\)                       & \(1\)        & \(\color{red}{0}\)                        \\
	\((1\cdot a)(\frac{1}{2}\cdot a)\)             & \(0\)        & \(\color{red}{\frac{1}{2}}\)              \\
	\((1\cdot a)(1\cdot a)\)                       & \(0\)        & \(\color{red}{0}\)                        \\
	\((1\cdot a)(1+\frac{1}{2}\cdot a)\)           & \(0\)        & \(\color{red}{\top}\)                     \\
	\((1+\frac{1}{2}\cdot a)(\Sigma_{K}\cdot a)\)  & \(0\)        & \(\color{red}{\top}\)                     \\
	\hline
\end{tabular}
\begin{tabular}{|c|@{\hspace{2pt}}c@{\hspace{2pt}}|c| }
	\hline
	\(\mathcal{T}_{2}\)                                    & \(\epsilon\) & \(\color{red}{\mathit{c}_{\top}^{1}(u)}\) \\
	\hline
	\(\epsilon\)                                           & \(0\)        & \(\color{red}{0}\)                        \\
	\((\frac{1}{2}\cdot a)\)                               & \(0\)        & \(\color{red}{\frac{1}{2}}\)              \\
	\((1\cdot a)\)                                         & \(1\)        & \(\color{red}{0}\)                        \\
	\((1+\frac{1}{2}\cdot a)\)                             & \(0\)        & \(\color{red}{\top}\)                     \\
	\((1\cdot a)(1\cdot a)\)                               & \(0\)        & \(\color{red}{0}\)                        \\
	\((1\cdot a)(1\cdot a)(1\cdot a)\)                     & \(0\)        & \(\color{red}{0}\)                        \\
	\hline
	\((0\cdot a)\)                                         & \(0\)        & \(\color{red}{0}\)                        \\
	\((\frac{1}{2}\cdot a)(0\cdot a)\)                     & \(0\)        & \(\color{red}{\frac{1}{2}}\)              \\
	\((\frac{1}{2}\cdot a)(\frac{1}{2}\cdot a)\)           & \(1\)        & \(\color{red}{0}\)                        \\
	\((\frac{1}{2}\cdot a)(1\cdot a)\)                     & \(0\)        & \(\color{red}{\top}\)                     \\
	\((\frac{1}{2}\cdot a)(1+\frac{1}{2}\cdot a)\)         & \(0\)        & \(\color{red}{\top}\)                     \\
	\((1\cdot a)(0\cdot a)\)                               & \(1\)        & \(\color{red}{0}\)                        \\
	\((1\cdot a)(\frac{1}{2}\cdot a)\)                     & \(0\)        & \(\color{red}{\frac{1}{2}}\)              \\
	\((1\cdot a)(1+\frac{1}{2}\cdot a)\)                   & \(0\)        & \(\color{red}{\top}\)                     \\
	\((1+\frac{1}{2}\cdot a)(\Sigma_{K}\cdot a)\)          & \(0\)        & \(\color{red}{\top}\)                     \\
	\((1\cdot a)(1\cdot a)(0\cdot a)\)                     & \(0\)        & \(\color{red}{0}\)                        \\
	\((1\cdot a)(1\cdot a)(\frac{1}{2}\cdot a)\)           & \(0\)        & \(\color{red}{\frac{1}{2}}\)              \\
	\((1\cdot a)(1\cdot a)(1+\frac{1}{2}\cdot a)\)         & \(0\)        & \(\color{red}{\top}\)                     \\
	\((1\cdot a)(1\cdot a)(1\cdot a)(\Sigma_{K}\cdot a) \) & \(0\)        & \(..\)                                    \\
	\hline
\end{tabular}
\begin{tabular}{|c|@{\hspace{2pt}}c@{\hspace{2pt}}|@{}c@{}|c|}
	\hline
	\(\mathcal{T}_{3}\)                                    & \(\epsilon\) & \((1{\cdot}a)\) & \(\color{red}{\mathit{c}_{\top}^{1}(u)}\) \\
	\hline
	\(\epsilon\)                                           & \(0\)        & \(1\)           & \(\color{red}{0}\)                        \\
	\((\frac{1}{2}\cdot a)\)                               & \(0\)        & \(0\)           & \(\color{red}{\frac{1}{2}}\)              \\
	\((1\cdot a)\)                                         & \(1\)        & \(0\)           & \(\color{red}{0}\)                        \\
	\((1+\frac{1}{2}\cdot a)\)                             & \(0\)        & \(0\)           & \(\color{red}{\top}\)                     \\
	\((1\cdot a)(1\cdot a)\)                               & \(0\)        & \(0\)           & \(\color{red}{0}\)                        \\
	\((1\cdot a)(1\cdot a)(1\cdot a)\)                     & \(0\)        & \(0\)           & \(\color{red}{0}\)                        \\
	\hline
	\((0\cdot a)\)                                         & \(0\)        & \(1\)           & \(\color{red}{0}\)                        \\
	\((\frac{1}{2}\cdot a)(0\cdot a)\)                     & \(0\)        & \(0\)           & \(\color{red}{\frac{1}{2}}\)              \\
	\((\frac{1}{2}\cdot a)(\frac{1}{2}\cdot a)\)           & \(1\)        & \(0\)           & \(\color{red}{0}\)                        \\
	\((\frac{1}{2}\cdot a)(1\cdot a)\)                     & \(0\)        & \(0\)           & \(\color{red}{\top}\)                     \\
	\((\frac{1}{2}\cdot a)(1+\frac{1}{2}\cdot a)\)         & \(0\)        & \(0\)           & \(\color{red}{\top}\)                     \\

	\((1\cdot a)(0\cdot a)\)                               & \(1\)        & \(0\)           & \(\color{red}{0}\)                        \\
	\((1\cdot a)(\frac{1}{2}\cdot a)\)                     & \(0\)        & \(0\)           & \(\color{red}{\frac{1}{2}}\)              \\
	\((1\cdot a)(1+\frac{1}{2}\cdot a)\)                   & \(0\)        & \(0\)           & \(\color{red}{\top}\)                     \\
	\((1+\frac{1}{2}\cdot a)(\Sigma_{K}\cdot a)\)          & \(0\)        & \(0\)           & \(\color{red}{\top}\)                     \\
	\((1\cdot a)(1\cdot a)(0\cdot a)\)                     & \(0\)        & \(0\)           & \(\color{red}{0}\)                        \\
	\((1\cdot a)(1\cdot a)(\frac{1}{2}\cdot a)\)           & \(0\)        & \(0\)           & \(\color{red}{\frac{1}{2}}\)              \\
	\((1\cdot a)(1\cdot a)(1+\frac{1}{2}\cdot a)\)         & \(0\)        & \(0\)           & \(\color{red}{\top}\)                     \\
	\((1\cdot a)(1\cdot a)(1\cdot a)(\Sigma_{K}\cdot a) \) & \(0\)        & \(0\)           & \(..\)                                    \\
	\hline
\end{tabular}

\begin{tabular}{|c|c|@{}c@{}|c|}
	\hline
	\(\mathcal{T}_{4}\)                                                       & \(\epsilon\) & \((1{\cdot}a)\) & \(\color{red}{\mathit{c}_{\top}^{1}(u)}\) \\
	\hline
	\(\epsilon\)                                                              & \(0\)        & \(1\)           & \(\color{red}{0}\)                        \\
	\((\frac{1}{2}\cdot a)\)                                                  & \(0\)        & \(0\)           & \(\color{red}{\frac{1}{2}}\)              \\
	\((1\cdot a)\)                                                            & \(1\)        & \(0\)           & \(\color{red}{0}\)                        \\
	\((1+\frac{1}{2}\cdot a)\)                                                & \(0\)        & \(0\)           & \(\color{red}{\top}\)                     \\
	\((1\cdot a)(1\cdot a)\)                                                  & \(0\)        & \(0\)           & \(\color{red}{0}\)                        \\
	\((1\cdot a)(1\cdot a)(1\cdot a)\)                                        & \(0\)        & \(0\)           & \(\color{red}{0}\)                        \\
	\((1\cdot a)(\frac{1}{2}\cdot a)\)                                        & \(0\)        & \(0\)           & \(\color{red}{\frac{1}{2}}\)              \\
	\((1\cdot a)(\frac{1}{2}\cdot a)(\frac{1}{2}\cdot a)\)                    & \(0\)        & \(0\)           & \(\color{red}{0}\)                        \\

	\hline
	\((0\cdot a)\)                                                            & \(0\)        & \(0\)           & \(\color{red}{0}\)                        \\
	\((\frac{1}{2}\cdot a)(0\cdot a)\)                                        & \(0\)        & \(1\)           & \(\color{red}{\frac{1}{2}}\)              \\
	\((\frac{1}{2}\cdot a)(\frac{1}{2}\cdot a)\)                              & \(1\)        & \(0\)           & \(\color{red}{0}\)                        \\
	\((\frac{1}{2}\cdot a)(1\cdot a)\)                                        & \(0\)        & \(0\)           & \(\color{red}{\top}\)                     \\
	\((1\cdot a)(0\cdot a)\)                                                  & \(1\)        & \(0\)           & \(\color{red}{0}\)                        \\
	\((1\cdot a)(1+\frac{1}{2}\cdot a)\)                                      & \(0\)        & \(0\)           & \(\color{red}{\top}\)                     \\
	\((1+\frac{1}{2}\cdot a)(0\cdot a)\)                                      & \(0\)        & \(0\)           & \(\color{red}{\top}\)                     \\
	\((1\cdot a)(1\cdot a)(0\cdot a)\)                                        & \(0\)        & \(0\)           & \(\color{red}{0}\)                        \\
	\((1\cdot a)(1\cdot a)(\frac{1}{2}\cdot a)\)                              & \(0\)        & \(0\)           & \(\color{red}{\frac{1}{2}}\)              \\
	\((1\cdot a)(1\cdot a)(1+\frac{1}{2}\cdot a)\)                            & \(0\)        & \(0\)           & \(\color{red}{\top}\)                     \\
	\((1\cdot a)(1\cdot a)(1\cdot a)(\Sigma_{K}\cdot a)\)                     & \(0\)        & \(0\)           & \(..\)                                    \\
	\((1\cdot a)(\frac{1}{2}\cdot a)(\Sigma_{K}\cdot a)\)                     & \(0\)        & \(0\)           & \(..\)                                    \\
	\((1\cdot a)(\frac{1}{2}\cdot a)(\frac{1}{2}\cdot a)(\Sigma_{K}\cdot a)\) & \(0\)        & \(0\)           & \(..\)                                    \\
	\hline
\end{tabular}
\begin{tabular}{|c|@{\hspace{2pt}}c@{\hspace{2pt}}|@{}c@{}|@{}c@{}|c|}
	\hline
	\(\mathcal{T}_{5}\)                                                       & \(\epsilon\) & \((1{\cdot}a)\) & \((\frac{1}{2}\cdot a)\) & \(\color{red}{\mathit{c}_{\top}^{1}(u)}\) \\
	\hline
	\(\epsilon\)                                                              & \(0\)        & \(1\)           & \(0\)                    & \(\color{red}{0}\)                        \\
	\((\frac{1}{2}\cdot a)\)                                                  & \(0\)        & \(0\)           & \(1\)                    & \(\color{red}{\frac{1}{2}}\)              \\
	\((1\cdot a)\)                                                            & \(1\)        & \(0\)           & \(0\)                    & \(\color{red}{0}\)                        \\
	\((1+\frac{1}{2}\cdot a)\)                                                & \(0\)        & \(0\)           & \(0\)                    & \(\color{red}{\top}\)                     \\
	\((1\cdot a)(1\cdot a)\)                                                  & \(0\)        & \(0\)           & \(0\)                    & \(\color{red}{0}\)                        \\
	\((1\cdot a)(1\cdot a)(1\cdot a)\)                                        & \(0\)        & \(0\)           & \(0\)                    & \(\color{red}{0}\)                        \\
	\((1\cdot a)(\frac{1}{2}\cdot a)\)                                        & \(0\)        & \(0\)           & \(0\)                    & \(\color{red}{\frac{1}{2}}\)              \\
	\((1\cdot a)(\frac{1}{2}\cdot a)(\frac{1}{2}\cdot a)\)                    & \(0\)        & \(0\)           & \(0\)                    & \(\color{red}{0}\)                        \\

	\hline
	\((0\cdot a)\)                                                            & \(0\)        & \(0\)           & \(0\)                    & \(\color{red}{0}\)                        \\
	\((\frac{1}{2}\cdot a)(0\cdot a)\)                                        & \(0\)        & \(1\)           & \(1\)                    & \(\color{red}{\frac{1}{2}}\)              \\
	\((\frac{1}{2}\cdot a)(\frac{1}{2}\cdot a)\)                              & \(1\)        & \(0\)           & \(0\)                    & \(\color{red}{0}\)                        \\
	\((\frac{1}{2}\cdot a)(1\cdot a)\)                                        & \(0\)        & \(0\)           & \(0\)                    & \(\color{red}{\top}\)                     \\
	\((1\cdot a)(0\cdot a)\)                                                  & \(1\)        & \(0\)           & \(0\)                    & \(\color{red}{0}\)                        \\
	\((1\cdot a)(1+\frac{1}{2}\cdot a)\)                                      & \(0\)        & \(0\)           & \(0\)                    & \(\color{red}{\top}\)                     \\
	\((1+\frac{1}{2}\cdot a)(0\cdot a)\)                                      & \(0\)        & \(0\)           & \(0\)                    & \(\color{red}{\top}\)                     \\
	\((1\cdot a)(1\cdot a)(0\cdot a)\)                                        & \(0\)        & \(0\)           & \(0\)                    & \(\color{red}{0}\)                        \\
	\((1\cdot a)(1\cdot a)(\frac{1}{2}\cdot a)\)                              & \(0\)        & \(0\)           & \(0\)                    & \(\color{red}{\frac{1}{2}}\)              \\
	\((1\cdot a)(1\cdot a)(1+\frac{1}{2}\cdot a)\)                            & \(0\)        & \(0\)           & \(0\)                    & \(\color{red}{\top}\)                     \\
	\((1\cdot a)(1\cdot a)(1\cdot a)(\Sigma_{K}\cdot a)\)                     & \(0\)        & \(0\)           & \(0\)                    & \(..\)                                    \\
	\((1\cdot a)(\frac{1}{2}\cdot a)(\Sigma_{K}\cdot a)\)                     & \(0\)        & \(0\)           & \(0\)                    & \(..\)                                    \\
	\((1\cdot a)(\frac{1}{2}\cdot a)(\frac{1}{2}\cdot a)(\Sigma_{K}\cdot a)\) & \(0\)        & \(0\)           & \(0\)                    & \(..\)                                    \\

	\hline
\end{tabular}
\begin{figure}[p]
	\centering
	\begin{tikzpicture}[shorten >=1pt,node distance=6cm,on grid,auto,initial text={}]
		\node[state,initial] (epsilon) at (0,0) {\(\epsilon\)};
		\node[state] (s) at (5,-2) {\((\frac{1}{2}\cdot a)\)};
		\node[state, accepting] (q) at (10,1) {\((1\cdot a)\)};
		\node[state] (2) at (5,0) {\((1+\frac{1}{2} \cdot a)\)};
		\path[->]
			(epsilon) edge [loop above]             node                              {\(a,x=0,0\)} ()
			(epsilon) edge node [below, sloped]              {\(a,0<x<1,1\)} (s)
			(epsilon) edge [bend left]              node [below, sloped, near start]  {\(a,x=1,0\)} (q)
			(epsilon) edge node                              {\(a,1<x,1\)} (2)
			(s)       edge [loop left]              node [below, left]                {\(a,0<x<1,1\)} ()
			(s)       edge [bend left=10]           node [below, sloped]              {\(a,x=1,0\)} (q)
			(s)       edge node                              {\(a,1<x,1\)} (2)
			(q)       edge [loop above]             node                              {\(a,x=0,0\)} ()
			(q)       edge [bend left=10]           node [below, sloped]              {\(a,0<x<1,1\)} (s)
			(q)       edge [above]                  node [sloped]                     {\(a,1<x,1\)} (2)
			(q)       edge [bend right=45]          node [below]                      {\(a,x=1,0\)} (epsilon)
			(2)       edge [loop above]             node [right, near end]            {\(a,1<x,1\)} ();
	\end{tikzpicture}
	\caption{Conjecture \(\mathcal{A}_{\mathcal{T}_{1}}\).}
	\label{fig:conjecture}
\end{figure}
\begin{figure}[p]
	\centering
	\begin{tikzpicture}[shorten >=1pt,node distance=6cm,on grid,auto,initial text={}]
		\node[state,initial]    (epsilon)   at (0,0)    {\(\epsilon\)};
		\node[state]            (s)         at (5,-2)   {\((\frac{1}{2}\cdot a)\)};
		\node[state,accepting]  (q)         at (10,1.5) {\((1\cdot a)\)};
		\node[state]            (2)         at (5,0)    {\((1+\frac{1}{2} \cdot a)\)};
		\node[state]            (r)         at (10,-2)  {\((1{\cdot} a)(1{\cdot} a)\)};
		\path[->]
			(epsilon) edge [loop above]             node                               {\(a,x=0,0\)} ()
			(epsilon) edge [below]                  node [sloped, near start]          {\(a,0<x<1,1\)} (s)
			(epsilon) edge [bend left=25]           node [above, sloped, near start]   {\(a,x=1,0\)} (q)
			(epsilon) edge node                               {\(a,1<x,1\)} (2)
			(s)       edge [loop left]              node [below, left]                 {\(a,0<x<1,1\)} ()
			(s)       edge [bend left=5]            node [above, sloped]               {\(a,x=1,0\)} (q)
			(s)       edge node                               {\(a,1<x,1\)} (2)
			(q)       edge [loop right]             node [right=.1cm,above,near start] {\(a,x=0,0\)} ()
			(q)       edge [bend left=15]           node [sloped]                      {\(a,0<x<1,1\)} (s)
			(q)       edge [right]                  node [right, near start]           {\(a,x=1,0\)} (r)
			(q)       edge [above]                  node [sloped]                      {\(a,1<x,1\)} (2)
			(2)       edge [loop above]             node [right, near end]             {\(a,1<x,1\)} ()
			(r)       edge [out=60, in=30, loop]    node [above]                       {\(a,x=0,0\)} ()
			(r)       edge node [sloped]                      {\(a,0<x<1,1\)} (s)
			(r)       edge [loop right]             node [below, near end]             {\(a,x=1,0\)} ()
			(r)       edge [above]                  node [sloped, near start]          {\(a,1<x,1\)} (2);
	\end{tikzpicture}
	\caption{Conjecture \(\mathcal{A}_{\mathcal{T}_{3}}\).}
	\label{fig:conjecture5}
\end{figure}

\end{document}